\renewcommand*\l@author[2]{}
\renewcommand*\l@title[2]{}
\def\mwrapfigR{\wrapfigure[4]{R}{3cm}\ifdim\pagetotal=0pt \else\vspace{-\intextsep}\fi}
\newcommand{\leqnomode}{\tagsleft@true}
\newcommand{\reqnomode}{\tagsleft@false}
\newcommand\GaD{\!{\implies}\!}
\newcommand\DaG{\!{\impliedby}\!}
\newcommand{\eqE}{\mathrel{E}}
\newcommand\tta{{\mathtt{a}}}
\newcommand\ttb{{\mathtt{b}}}
\newcommand\ttc{{\mathtt{c}}}
\newcommand\Pcal{\mathcal P}
\newcommand\Jcal{\mathcal J}
\newcommand\nat{\mathbb{N}}
\newcommand\Nat{\nat}
\newcommand\natom{\nat_{\omega}}
\newcommand\Pf{\Pcal_{\!\mathit{f}}}
\renewcommand{\setminus}{\smallsetminus}
\newcommand\upc{\mathop{\uparrow}\nolimits}
\newcommand\dc{\mathop{\downarrow}\nolimits}
\newcommand\dwc{\mathop{\downarrow}\nolimits}
\newcommand{\subword}{\preccurlyeq}
\newcommand\ibar{{\bar{\imath}}}
\newcommand\exi{{\textrm{ex}_1}}
\newcommand\exd{{\textrm{ex}_2}}
\newcommand\ext{{\textrm{ex}_3}}
\newcommand{\egdef}{\stackrel{\mbox{\begin{scriptsize}def\end{scriptsize}}}{=}}
\newcommand{\equivdef}{\stackrel{\mbox{\begin{scriptsize}def\end{scriptsize}}}{\Leftrightarrow}}
\newcommand{\lequiv}{\iff}
\newcommand\Idl{\mathop{\mathit{Idl}}\nolimits}
\newcommand\Up{\mathop{\mathit{Up}}\nolimits}
\newcommand\Down{\mathop{\mathit{Down}}\nolimits}
\newcommand\Fil{\mathop{\mathit{Fil}}\nolimits}
\newcommand\tup[1]{\langle #1 \rangle}
\newcommand\tupsum[1]{\langle #1 \rangle}
\DeclarePairedDelimiter\size{\lvert}{\rvert}
\newcommand{\qo}{QO\xspace}
\newcommand{\qos}{QOs\xspace}
\newcommand{\wqo}{WQO\xspace}
\newcommand{\wqos}{WQOs\xspace}
\newcommand{\PP}{\text{PP}}
\newcommand{\OD}{\eqref{OD}\xspace} 
\newcommand{\ID}{\eqref{ID}\xspace} 
\newcommand{\IDsanspar}{\ref{ID}\xspace} 
\newcommand{\XF}{\eqref{XF}\xspace} 
\newcommand{\XI}{\eqref{XI}\xspace} 
\newcommand{\CF}{\eqref{CF}\xspace} 
\newcommand{\CFsanspar}{\ref{CF}\xspace} 
\newcommand{\CFbare}{\text{CF}\xspace} 
\newcommand{\CI}{\eqref{CI}\xspace} 
\newcommand{\CIsanspar}{\ref{CI}\xspace} 
\newcommand{\IF}{\eqref{IF}\xspace} 
\newcommand{\IFsanspar}{\ref{IF}\xspace} 
\newcommand{\II}{\eqref{II}\xspace} 
\newcommand{\IIsanspar}{\ref{II}\xspace} 
\newcommand{\PI}{\eqref{PI}\xspace} 
\newcommand{\CD}{(\textrm{CD})\xspace}
\newcommand\bu{\bm{u}}
\newcommand\bv{\bm{v}}
\newcommand\bw{\bm{w}}
\newcommand\bt{\bm{t}}
\newcommand\bP{\bm{P}}
\newcommand\bQ{\bm{Q}}
\newcommand\bU{\bm{U}}
\newcommand\bV{\bm{V}}
\newcommand\bD{\bm{D}}
\newcommand\bT{\bm{T}}
\newcommand\bx{\bm{x}}
\newcommand\by{\bm{y}}
\newcommand\ba{\bm{a}}
\newcommand\bB{\bm{B}}
\newcommand\bA{\bm{A}}
\newcommand\bS{\bm{S}}
\newcommand\bepsilon{\bm{\epsilon}}
\newcommand\dwcp{\dwc_{\le'}}
\newcommand\ci{{\mathcal Cl}_{\mathrm{I}}}
\newcommand\cf{{\mathcal Cl}_{\mathrm{F}}}
\newcommand{\leqst}{\leq_{\mathrm{st}}}
\newcommand{\gest}{\ge_{\mathrm{st}}}
\newcommand{\upcst}{\upc_{\mathrm{st}}}
\newcommand{\simst}{\sim_{\mathrm{st}}}
\newcommand{\equivst}{\equiv_{\mathrm{st}}}
\newcommand{\leqcj}{\leq_{\mathrm{cj}}}
\newcommand{\simcj}{\sim_{\mathrm{cj}}}
\newcommand{\si}{\mathcal{S}_{\mathrm{I}}}
\newcommand{\sff}{\mathcal{S}_{\mathrm{F}}}
\newcommand\mult[1]{#1^{\circledast}}
\newcommand\multX{\mult{X}}
\newcommand\lemb{\le_{\mathrm{emb}}}
\newcommand\lex{\le_{\mathrm{lex}}}
\newcommand\hoare{\sqsubseteq_H}
\newcommand\invhoare{\sqsupseteq_H}
\begin{document}

\title{
The Ideal Approach to Computing\\
Closed Subsets in Well-Quasi-Orderings\thanks{Supported by ANR project BRAVAS (grant ANR-17-CE40-0028).}
}

\author{
  J.~Goubault-Larrecq \inst{1}
\and
  S.~Halfon\inst{1}
\and
  P.~Karandikar \inst{2,3}
\and
  K.~Narayan Kumar \inst{2}
\and
  Ph.~Schnoebelen \inst{3,1}
}
\authorrunning{$\!\!\!\!\!\!\!\!\!\!\!\!\!\!\!\!\!\!$J.\ Goubault-Larrecq, S.\ Halfon, P.\ Karandikar, K.\ Narayan Kumar, and Ph.\ Schnoebelen}
\institute{
  ENS Paris-Saclay, France
  \and
  CMI, Chennai, India
  \and
  CNRS, France
}

 \maketitle

\begin{abstract}
Elegant and general algorithms for handling upwards-closed and downwards-closed
subsets of \wqos can be developed using the filter-based and ideal-based
representation for these sets.
These algorithms can be built in a generic or parameterized way, in parallel
with the way complex \wqos are obtained by combining or modifying simpler
\wqos.
\end{abstract}

\section{Introduction}
\label{sec-intro}

The theory of well-quasi-orderings (\wqos for short) has proved useful
in many areas of mathematics, logic, combinatorics, and computer
science. In computer science, it appears prominently in termination
proofs~\cite{dershowitz87}, in formal
languages~\cite{dalessandro2008}, in graph algorithms (e.g., via the
Graph Minor Theorem~\cite{lovasz2006}), in program verification (e.g., with
well-structured systems~\cite{abdulla2000c,finkel98b,SS-concur13}), automated deduction,
distributed computing, but also in machine learning~\cite{akama2011},
program transformation~\cite{leuschel2002}, etc. We refer to~\cite{kriz90} for
``four [main] reasons to be interested in \wqo theory''.

In computer science, tools from WQO theory were commonly seen as
lacking algorithmic contents. This situation is changing. For example,
tight complexity bounds for \wqo-based algorithms have recently been established
and are now used when
comparing logics or computational models~\cite{HSS-lics2012,schmitz2014,schmitz-toct2016,SS-icalp11}. As
another example, the field of well-structured
systems  grows not just by the identification of new
families of models, but also by the
development of new generic algorithms based on \wqo theory, see,
e.g., \cite{BS-fmsd2013,blondin2018}.

In this chapter we are concerned with the issue of reasoning about, and
computing with,
downwards-closed and upwards-closed subsets of a \wqo. These sets appear
in program verification (prominently in model-checking of well structured
systems~\cite{BS-fmsd2013}, in verification of Petri
nets~\cite{habermehl2010}, in separability
problems~\cite{goubault2016,zetzsche2018},
but also as an effective abstraction
tool~\cite{bachmeier2015,zetzsche2015}).  The question of how to
handle downwards-closed subsets of \wqos \emph{in a generic way} was first
raised by Geeraerts et al.: in~\cite{geeraerts06} the authors
postulated the existence of an \emph{adequate domain of limits}
satisfying some representation conditions. It turns out that the \emph{ideals} of
\wqos always satisfy these conditions, and usually enjoy further
algorithmic properties.

\paragraph{Outline of this chapter.}
We start by recalling, as a motivating example,
the algorithmic techniques that have been successfully used to handle
upwards-closed and downwards-closed subsets in two different \wqos: the
tuples of natural numbers with component-wise ordering, and the set of
finite words with subword ordering.
We then describe the fundamental structures that underlie these
algorithms and propose in \cref{sec-effective} a generic set of
effectiveness assumptions on which the algorithms can be based.

The second part of the chapter, \cref{sec-basic-eff,sec-more-constr},
shows how many examples of \wqos used in applications fulfill the
required effectiveness assumptions. Since in practice complex \wqos
are most often obtained by composing or modifying simpler \wqos, our
strategy for showing their effectiveness involves proving that \wqo
constructors preserve effectiveness.

A final section discusses our choices ---of effectiveness assumptions
and of algorithms--- and lists some of the first questions raised by
our approach.

\paragraph{Genesis of this chapter.}
This text grew from~\cite{VJGL09} (unpublished) where
Goubault-Larrecq proposed a notion of effective \wqos, and where
\cref{thm-vj} was first proven. There, Goubault-Larrecq also shows
that products, sequence extensions, and tree extensions of effective
\wqos are effective.  Then, in 2016 and 2017,  Karandikar,
Narayan Kumar and Schnoebelen developed the framework and handled \wqos
obtained by extensions, by quotients, and by substructures. Finally,
in 2017 and 2018, Halfon joined the project and contributed most
of the results on powersets and multisets. He also studied variant
sets of axioms for effective \wqos as reported in
\cref{sec-axioms}. In the meantime, the constructions initiated
by~\cite{VJGL09} have been used in several papers, starting
from~\cite{FWD-WSTS-1,FWD-WSTS-2}, and
including~\cite{blondin2017b,blondin2018,finkel2016,goubault2016,lazic2015,lazic2016b,leroux2015,leroux2016}.

\section{Well-quasi-orderings, ideals, and some motivations}
\label{sec-basics}

A \emph{quasi-ordering} (a \qo) $(X,{\leq})$ is a set $X$ equipped with a
reflexive and transitive relation. We write $x<y$ when $x\leq y$ and
$y\not\leq x$, and $x\equiv y$ when $x\leq y$ and $y\leq x$.
For $S\subseteq X$, we let $\upc S$ and $\dc S$ denote
the upward and downward closures, respectively, of $S$ in $X$.
Formally, $\upc S \egdef \{x\in X\mid\exists y\in S: y\leq x\}$ and
$\dc S \egdef \{x\in X\mid\exists y\in S: x\leq y\}$.
We will also use $\dwc_< S$ and $\upc_< S$ to collect elements that
are strictly above, or below, elements of $S$, i.e., $\dwc_< S \egdef
\{x\in X\mid\exists y\in S: x < y\}$ and similarly for $\upc_< S$.

When $S=\{x\}$ is a singleton, we may simply write $\upc x$ or
$\dc_< x$. A subset of $X$ of the form $\upc x$ is called a
\emph{principal filter} while a subset of the form $\dc x$ is a
\emph{principal ideal}.  A subset $S\subseteq X$ is
\emph{upwards-closed} when $S=\upc S$, and \emph{downwards-closed}
when $S=\dc S$. Note that arbitrary unions and intersections of
upwards-closed (resp.\ downwards-closed) sets are upwards-closed
(resp.\ downwards-closed).  Observe also that the complement of an
upwards-closed set is downwards-closed, and conversely.  We write
$\Up(X)$ for the set of upwards-closed subsets of $X$, with typical
elements $U$, $U'$, $V$,~\@\ldots{} Similarly, $\Down(X)$ denotes the set of
its downwards-closed subsets, with typical elements $D$, $D'$, $E$,~\@\ldots{}

\subsection{Two motivating examples}
\label{sec-motivations}

Consider the set $X=\nat^2$ of pairs of natural numbers. These are the
points with integral coordinates in the upper-right quadrant. We order
these points with the coordinate-wise ordering, also called
\emph{product ordering}:
\[
\tup{a,b}\leq\tup{a',b}
\equivdef a\leq a' \land b\leq b'
\:.
\]
Note that this is only a partial ordering: $\tup{1,2}$ and $\tup{3,0}$
are incomparable.

\subsubsection{$\pmb{\nat}^2$ and its upwards-closed subsets.}
In many applications, we need to consider upwards-closed subsets $U$,
$U'$,~\ldots, of $\nat^2$. These may be defined by simple, or not so
simple, constraints such as $U_{\exi}$ and $V_{\exi}$ in
\cref{fig-exmp1}.

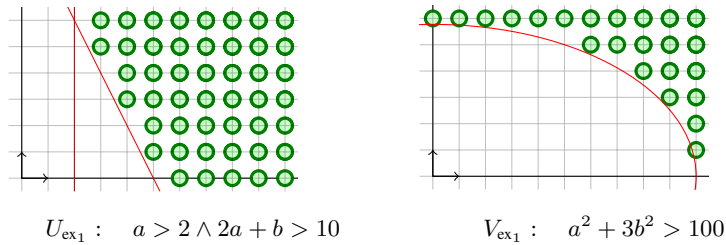
\begin{figure}[htbp]
\begin{center}
\tikzfading[name=fade right, left color=transparent!100, right color=transparent!0]
\tikzfading[name=fade up, bottom color=transparent!100, top color=transparent!0]
\begin{tikzpicture}[auto,scale=0.35]
  \def\blot{[very thick, fill=green!20,draw=green!50!black] circle (8pt)}
  \draw[-] (0,9) -- (0,0) -- (13,0);
  \draw[<->] (0,1) -- (0,0) -- (1,0);
  \foreach \i in {3,...,13} \foreach \j in {5,...,9} \filldraw (\i,\j) \blot;
  \foreach \i in {4,...,13} \foreach \j in {3,...,9} \filldraw (\i,\j) \blot;
  \foreach \i in {5,...,13} \foreach \j in {1,...,9} \filldraw (\i,\j) \blot;
  \foreach \i in {6,...,13} \foreach \j in {0,...,9} \filldraw (\i,\j) \blot;

  \draw[-,red,thin] (2,9.5) -- (2,-0.5);
  \draw[-,red,thin] (0.25,9.5) -- (5.25,-0.5);

  \draw[nearly transparent, step=1,black] (-0.5,-0.5) grid (13.5,9.5);
  \fill [path fading=fade right,white] (10.5,-0.5) rectangle (13.5,9.5);
  \fill [path fading=fade up,white] (-0.5,6.5) rectangle (13.5,9.5);

\node at (6.5,-2){$U_{\exi}:\quad a>2 \land 2a+b > 10$};
\end{tikzpicture}
 \quad
\tikzfading[name=fade right, left color=transparent!100, right color=transparent!0]
\tikzfading[name=fade up, bottom color=transparent!100, top color=transparent!0]
\begin{tikzpicture}[auto,scale=0.35]
  \def\blot{[very thick, fill=green!20,draw=green!50!black] circle (8pt)}
  \draw[-] (0,9) -- (0,0) -- (13,0);
  \draw[<->] (0,1) -- (0,0) -- (1,0);
  \foreach \i in {0,...,13} \foreach \j in {6,...,9} \filldraw (\i,\j) \blot;
  \foreach \i in {6,...,13} \foreach \j in {5,...,9} \filldraw (\i,\j) \blot;
  \foreach \i in {8,...,13} \foreach \j in {4,...,9} \filldraw (\i,\j) \blot;
  \foreach \i in {9,...,13} \foreach \j in {3,...,9} \filldraw (\i,\j) \blot;
  \foreach \i in {10,...,13} \foreach \j in {1,...,9} \filldraw (\i,\j) \blot;
  \foreach \i in {11,...,13} \foreach \j in {0,...,9} \filldraw (\i,\j) \blot;

  \begin{scope}
  \clip (-0.5,-0.5) -- (-0.5,9) -- (13,9) -- (13,-0.5) -- cycle;
  \draw[name path=ellipse,red,thin]
	  (0,0) circle[x radius = 10, y radius = 5.7735];
  \end{scope}

  \draw[nearly transparent, step=1,black] (-0.5,-0.5) grid (13.5,9.5);
  \fill [path fading=fade right,white] (10.5,-0.5) rectangle (13.5,9.5);
  \fill [path fading=fade up,white] (-0.5,6.5) rectangle (13.5,9.5);

\node at (6.5,-2){$V_{\exi}:\quad a^2+3b^2> 100$};
\end{tikzpicture}
 \end{center}
\caption{Two upwards-closed subsets of $\nat^2$}
\label{fig-exmp1}
\end{figure}

A striking aspect of these depictions of $U_{\exi}$ and $V_{\exi}$ ---see also
\cref{fig-exmp2}--- is that both can be seen  as unions
of a few principal filters:
\begin{xalignat*}{1}
U_{\exi}&= \upc\tup{3,5} \cup  \upc\tup{4,3} \cup  \upc\tup{5,1} \cup
\upc\tup{6,0}
\:,
\\
V_{\exi}&=\upc\tup{0,6}\cup \upc\tup{6,5} \cup  \upc\tup{8,4} \cup
\upc\tup{9,3} \cup  \upc\tup{10,1} \cup  \upc\tup{11,0}
\:.
\end{xalignat*}
We write $U=\bigcup_{i<n}\upc x_i$ to say that the upwards-closed
subset $U$ of $X$ is the union of $\upc x_0$,~\ldots, $\upc x_{n-1}$.
The elements $x_i$ are the \emph{generators}, and the finite set
$\{x_0, \cdots, x_{n-1}\}$ is a \emph{finite basis} of $U$.  We also say
that $\bigcup_{i<n} x_i$ is a \emph{finite basis representation} of
$U$.  By removing elements that are not minimal, we obtain a
\emph{minimal finite basis} of $U$.

\begin{figure}[htbp]
\begin{center}
\tikzfading[name=fade right, left color=transparent!100, right color=transparent!0]
\tikzfading[name=fade up, bottom color=transparent!100, top color=transparent!0]
\begin{tikzpicture}[auto,scale=0.35]
  \def\blot{[very thick, fill=green!20,draw=green!50!black] circle (8pt)}
  \def\minblot{[very thick, fill=green!50!black,draw=green!50!black] circle (8pt)}
  \draw[-] (0,9) -- (0,0) -- (13,0);
  \draw[<->] (0,1) -- (0,0) -- (1,0);
  \foreach \i in {3,...,13} \foreach \j in {5,...,9} \filldraw (\i,\j) \blot;
  \filldraw (3,5) \minblot;
  \foreach \i in {4,...,13} \foreach \j in {3,...,9} \filldraw (\i,\j) \blot;
  \filldraw (4,3) \minblot;
  \foreach \i in {5,...,13} \foreach \j in {1,...,9} \filldraw (\i,\j) \blot;
  \filldraw (5,1) \minblot;
  \foreach \i in {6,...,13} \foreach \j in {0,...,9} \filldraw (\i,\j) \blot;
  \filldraw (6,0) \minblot;

  \draw[nearly transparent, step=1,black] (-0.5,-0.5) grid (13.5,9.5);
  \fill [path fading=fade right,white] (10.5,-0.5) rectangle (13.5,9.5);
  \fill [path fading=fade up,white] (-0.5,6.5) rectangle (13.5,9.5);

\node at (-1,5){$U:$};
\end{tikzpicture}
 \quad
\tikzfading[name=fade right, left color=transparent!100, right color=transparent!0]
\tikzfading[name=fade up, bottom color=transparent!100, top color=transparent!0]
\begin{tikzpicture}[auto,scale=0.35]
  \def\blot{[very thick, fill=green!20,draw=green!50!black] circle (8pt)}
  \def\minblot{[very thick, fill=green!50!black,draw=green!50!black] circle (8pt)}
  \draw[-] (0,9) -- (0,0) -- (13,0);
  \draw[<->] (0,1) -- (0,0) -- (1,0);
  \foreach \i in {0,...,13} \foreach \j in {6,...,9} \filldraw (\i,\j) \blot;
  \filldraw (0,6) \minblot;
  \foreach \i in {6,...,13} \foreach \j in {5,...,9} \filldraw (\i,\j) \blot;
  \filldraw (6,5) \minblot;
  \foreach \i in {8,...,13} \foreach \j in {4,...,9} \filldraw (\i,\j) \blot;
  \filldraw (8,4) \minblot;
  \foreach \i in {9,...,13} \foreach \j in {3,...,9} \filldraw (\i,\j) \blot;
  \filldraw (9,3) \minblot;
  \foreach \i in {10,...,13} \foreach \j in {1,...,9} \filldraw (\i,\j) \blot;
  \filldraw (10,1) \minblot;
  \foreach \i in {11,...,13} \foreach \j in {0,...,9} \filldraw (\i,\j) \blot;
  \filldraw (11,0) \minblot;

  \draw[nearly transparent, step=1,black] (-0.5,-0.5) grid (13.5,9.5);
  \fill [path fading=fade right,white] (10.5,-0.5) rectangle (13.5,9.5);
  \fill [path fading=fade up,white] (-0.5,6.5) rectangle (13.5,9.5);

\node at (-1,5){$V:$};
\end{tikzpicture}
 \caption{Finite bases for $U_{\exi}$ and $V_{\exi}$}
\label{fig-exmp2}
\end{center}
\end{figure}

We shall see later that all upwards-closed subsets of
$(\nat^2,{\leq})$ admit such a representation.  For the time being we
want to stress how this representation of upwards-closed subsets is
convenient \emph{from an algorithmic viewpoint}.  To begin with, it
provides us with a finite data structure for subsets that are infinite
and thus cannot be represented in extension on a
computer. Interestingly, some important set-theoretical operations are
very easy to perform on this representation: testing whether some
point $\tup{a,b}$ is in $U$ or $V$ just amounts to comparing
$\tup{a,b}$ with the points forming the basis of $U$ or $V$
respectively. Testing whether $U\subseteq V$ reduces to checking
whether all points in the basis of $U$ belong to $V$. We see that
$U_{\exi}\not\subseteq V_{\exi}$ since there is a point in
$U_{\exi}$'s base that is not in $V_{\exi}$, i.e., not larger than (or
equal to) any of the points in $V_{\exi}$'s basis: for instance
$\tup{3,5}\not\in V_{\exi}$.
Similarly,
$\tup{0,6}\not\in U_{\exi}$ hence $V_{\exi}\not\subseteq U_{\exi}$.

Two further operations that are easily performed are computing
$W=U\cup V$ and $W'=U\cap V$ for upwards-closed $U$ and $V$ (recall
that such unions and intersections are upwards-closed as observed
earlier). For $U\cup V$, we just join the two finite bases and
(optionally) remove any element that is not minimal. For example
\begin{align*}
U_{\exi}\cup V_{\exi} =& \: \bigl(\upc\tup{3,5} \cup  \upc\tup{4,3} \cup  \upc\tup{5,1} \cup \upc\tup{6,0}\bigr)
\\
& \cup   \bigl(\upc\tup{0,6} \cup \cancel{\upc\tup{6,5}} \cup  \cancel{\upc\tup{8,4}} \cup
\cancel{\upc\tup{9,3}} \cup  \cancel{\upc\tup{10,1}} \cup  \cancel{\upc\tup{11,0}}\bigr)
\\
= &  \upc\tup{0,6}  \cup \upc\tup{3,5} \cup  \upc\tup{4,3} \cup  \upc\tup{5,1}  \cup
\upc\tup{6,0}
\:.
\end{align*}
For $U\cap V$, we first observe that principal filters
can be intersected with
\begin{gather}
\label{eq-intersection-of-filters-in-nat2}
\upc\tup{a,b}\cap\upc\tup{a',b'} =\upc\tup{\max(a,a'),\max(b,b')}
\end{gather}
and then use the distributivity law
$(\bigcup_{i<n}\upc x_i)\cap (\bigcup_{j<m}\upc y_j)
=\bigcup_{i,j}(\upc x_i\cap \upc y_j)$
to handle the general case.
This gives, for example,
\begin{align*}
U_{\exi} \cap V_{\exi} =&
\bigl[\upc\tup{3,5}\cap\upc\tup{0,6}\bigr]
\:\cup\:
\bigl[\upc\tup{3,5}\cap\upc\tup{6,5}\bigr]
\:\cup\:
\bigl[\upc\tup{3,5}\cap\upc\tup{8,4}\bigr]
\\
&\!\cup\:
\bigl[\upc\tup{4,3}\cap\upc\tup{9,3}\bigr]
\:\cup\:
\bigl[\upc\tup{5,1}\cap\upc\tup{10,1}\bigr]
\:\cup\:
\bigl[\upc\tup{6,0}\cap\upc\tup{11,0}\bigr]
\\
&\!\cup\:
\cdots \text{ \emph{more filters on elements that are not minimal} } \cdots
\\
=&\upc\tup{3,6}
\cup
\upc\tup{6,5}
\cup
\upc\tup{8,4}
\cup
\upc\tup{9,3}
\cup
\upc\tup{10,1}
\cup
\upc\tup{11,0}
\:.
\end{align*}

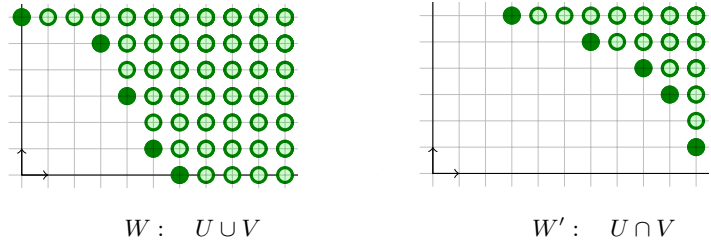
\begin{figure}[htbp]
\begin{center}
\tikzfading[name=fade right, left color=transparent!100, right color=transparent!0]
\tikzfading[name=fade up, bottom color=transparent!100, top color=transparent!0]
\begin{tikzpicture}[auto,scale=0.35]
\node at (6.5,-2){$W:\quad U\cup V$};
  \def\blot{[very thick, fill=green!20,draw=green!50!black] circle (8pt)}
  \def\minblot{[very thick, fill=green!50!black,draw=green!50!black] circle (8pt)}  
  \draw[-] (0,9) -- (0,0) -- (13,0);
  \draw[<->] (0,1) -- (0,0) -- (1,0);
  \foreach \i in {0,...,13} \foreach \j in {6,...,9} \filldraw (\i,\j) \blot;
  \filldraw (0,6) \minblot;
  \foreach \i in {3,...,13} \foreach \j in {5,...,9} \filldraw (\i,\j) \blot;
  \filldraw (3,5) \minblot;
  \foreach \i in {4,...,13} \foreach \j in {3,...,9} \filldraw (\i,\j) \blot;
  \filldraw (4,3) \minblot;
  \foreach \i in {5,...,13} \foreach \j in {1,...,9} \filldraw (\i,\j) \blot;
  \filldraw (5,1) \minblot;
  \foreach \i in {6,...,13} \foreach \j in {0,...,9} \filldraw (\i,\j) \blot;
  \filldraw (6,0) \minblot;

  \draw[nearly transparent, step=1,black] (-0.5,-0.5) grid (13.5,9.5);
  \fill [path fading=fade right,white] (10.5,-0.5) rectangle (13.5,9.5);
  \fill [path fading=fade up,white] (-0.5,6.5) rectangle (13.5,9.5);
\end{tikzpicture}
 \quad
\tikzfading[name=fade right, left color=transparent!100, right color=transparent!0]
\tikzfading[name=fade up, bottom color=transparent!100, top color=transparent!0]
\begin{tikzpicture}[auto,scale=0.35]
\node at (6.5,-2){$W':\quad U\cap V$};
  \def\blot{[very thick, fill=green!20,draw=green!50!black] circle (8pt)}
  \def\minblot{[very thick, fill=green!50!black,draw=green!50!black] circle (8pt)}  
  \draw[-] (0,9) -- (0,0) -- (13,0);
  \draw[<->] (0,1) -- (0,0) -- (1,0);
  \foreach \i in {3,...,13} \foreach \j in {6,...,9} \filldraw (\i,\j) \blot;
  \filldraw (3,6) \minblot;
  \foreach \i in {6,...,13} \foreach \j in {5,...,9} \filldraw (\i,\j) \blot;
  \filldraw (6,5) \minblot;
  \foreach \i in {8,...,13} \foreach \j in {4,...,9} \filldraw (\i,\j) \blot;
  \filldraw (8,4) \minblot;
  \foreach \i in {9,...,13} \foreach \j in {3,...,9} \filldraw (\i,\j) \blot;
  \filldraw (9,3) \minblot;
  \foreach \i in {10,...,13} \foreach \j in {1,...,9} \filldraw (\i,\j) \blot;
  \filldraw (10,1) \minblot;
  \foreach \i in {11,...,13} \foreach \j in {0,...,9} \filldraw (\i,\j) \blot;
  \filldraw (11,0) \minblot;

  \draw[nearly transparent, step=1,black] (-0.5,-0.5) grid (13.5,9.5);
  \fill [path fading=fade right,white] (10.5,-0.5) rectangle (13.5,9.5);
  \fill [path fading=fade up,white] (-0.5,6.5) rectangle (13.5,9.5);
\end{tikzpicture}
 \caption{Computing intersections and unions via finite bases}
\label{fig-exmp2bis}
\end{center}
\end{figure}

Finally, a last feature of the finite basis representation for
upwards-closed subsets of $\nat^2$ is that, if we only consider
minimal bases, namely bases of incomparable elements ---in essence, if
we systematically remove unnecessary generators that are subsumed by
smaller generators,--- then the representation is \emph{canonical}:
there is a unique way of representing any $U\in
\Up(\nat^2)$. Algorithmically, this allows one to implement
the required structures using \emph{hash-consing}
\cite{Deutsch:ipv,Goto:HLISP}, where structures with the same contents
are allocated at the same address, with the help of auxiliary
hash-tables.  In particular, finite \emph{sets} can be implemented
this way, efficiently \cite{JG:ML}.  Equality tests can then be
performed in constant time, notably.

\subsubsection{Words and their subwords.}

Our second example comes from formal languages
and combinatorics~\cite{sakarovitch83}. 
\begin{mwrapfigR}
\fbox{
\begin{tikzpicture}[->,auto,inner sep=0pt]

\node (w1) [label=above:$\ttb$] at (0,0) {};
\node (w2) [right=1.1ex of w1,label=above:$\tta$] {};
\node (w3) [right=1.1ex of w2,label=above:$\ttc$] {};
\node (w4) [right=1.1ex of w3,label=above:$\tta$] {};
\node (w5) [right=1.1ex of w4,label=above:$\ttb$] {};
\node (w6) [right=1.1ex of w5,label=above:$\tta$] {};
\node (w7) [right=1.1ex of w6,label=above:$\ttb$] {};
\node (w8) [right=1.1ex of w7,label=above:$\quad$] {};
\node (qm) [below=.1em of w8,inner sep=1pt] {?};

\node (u1) [below=5ex of w1,label=above:$\tta$] at (0,0) {};
\node (u2) [right=1.1ex of u1,label=above:$\ttb$] {};
\node (u3) [right=1.1ex of u2,label=above:$\ttb$] {};
\node (u4) [right=1.1ex of u3,label=above:$\tta$] {};

\node (wlab) [left=2em of w1,label=above:$v:$] {};
\node (ulab) [below=5ex of wlab,label=above:$u:$] {};

\path (u1) ++ (.3ex,1.8ex) edge (w2);
\path (u2) ++ (.3ex,1.8ex) edge (w5);
\path (u3) ++ (.3ex,1.8ex) edge (w7);
\path (u4) ++ (.5ex,1.5ex) edge[dashed] (qm);

\end{tikzpicture}
}
 \end{mwrapfigR}
Let
us fix a three-letter alphabet $A=\{\tta,\ttb,\ttc\}$ and write
$A^*=\{u,v,\cdots\}$ for the set of all finite words over
$A$.
Standardly, the empty word is denoted by $\epsilon$, concatenation is denoted
multiplicatively, and $\size{u}$ is the length of $u$. We write $u\subword v$
when $u$ is a \emph{subword} of $v$, i.e., a subsequence: $u$ can be obtained
from $v$ by erasing some (occurrences of) letters. It is easy to check whether
$u\subword v$ by attempting to construct a leftmost embedding of $u$ into $v$:
this only requires at most one traversal of $u$ and $v$ and takes time linear
in $\size{u}+\size{v}$.  For example, the box to the right shows that
$u=\tta\ttb\ttb\tta$ is not a subword of $v=\ttb\tta\ttc\tta\ttb\tta\ttb$.

With the subword ordering comes the notion of upwards-closed and
downwards-closed languages (i.e., sets of words). For example the
language $U_{\exd}\subseteq A^*$ of words with at least one $\tta$ and
at least two $\ttb$s is upwards-closed, as is $V_{\exd}$, the language
of words with length at least $2$.  These upwards-closed languages
occur in many applications and one would like to know good data
structures and algorithms for manipulating them. It turns out that any
such upwards-closed language can be represented as a finite union of
principal filters.\footnote{This result is known as Haines'
  Theorem~\cite{haines69}, and is also a consequence of Higman's
  Lemma: see \cref{sec-sequences}.} For example, $U_{\exd}$ and
$V_{\exd}$ can be written
\begin{xalignat*}{2}
U_{\exd} &=\upc \tta\ttb\ttb\cup \upc \ttb\tta\ttb\cup \upc \ttb\ttb\tta
\:,
&
V_{\exd} &=\upc \tta\tta\cup \upc \tta\ttb \cup \cdots \cup \upc \ttc\ttc = \bigcup_{\size{u}=2}\upc u
\:.
\end{xalignat*}
In the subword setting, a principal filter is always a regular
language. Indeed, for any $u\in A^*$, of the form
$u=a_1a_2\cdots a_{\ell}$, one has
$\upc u= A^* a_1 A^* a_2 A^* \cdots A^* a_{\ell} A^*$, which is a
language at level $\frac{1}{2}$ in the Straubing-Th\'erien
hierarchy~\cite{pin96}. Being simple star-free regular languages, the
upwards-closed subsets can be handled with well-known
automata-theoretic techniques. However, one can also use the same
simple ideas we used for $\Nat^2$: testing $U\subseteq V$ reduces to
comparing the generators, computing unions is trivial, and bases made
of incomparable words provide a canonical representation.
Finally, computing
intersections reduces to intersecting principal filters, exactly as in
$\nat^2$. For this, we
observe that $\upc u\cap \upc v$ is generated by the minimal words
that contain both $u$ and $v$ as subwords. This set of minimal words, written $u\sqcap v$, is
called the \emph{infiltration product} of $u$ and
$v$~\cite{chen58}. For example $\tta\ttb\sqcap
\ttc\tta=\{\tta\ttb\ttc\tta,\tta\ttc\ttb\tta,\tta\ttc\tta\ttb,\ttc\tta\ttb\}$. Infiltrations are a generalization of
shuffles and we shall describe a simple algorithm for a generalized
infiltration product in \cref{sec-sequences}.

\subsection{What about downwards-closed subsets?}
\label{sec-dwc-nat}

With the previous two examples, we showed how it is natural and easy to
work with upwards-closed subsets of a quasi-ordered set when these
subsets are represented as a finite union $\bigcup_{i<n} \upc x_i$
of principal filters.

Let us now return to our previous setting, $X=\Nat^2$, and look at the
downwards-closed subsets $D,E,\ldots\in\Down(\nat^2)$.  As an example,
consider $D_{\exi} \egdef \nat^2\setminus U_{\exi}$ and
$E_{\exi}\egdef \nat^2\setminus V_{\exi}$.  We shall sometimes write
$D_{\exi}=\neg U_{\exi}$ and $E_{\exi}=\neg V_{\exi}$.

\begin{center}
\tikzfading[name=fade right, left color=transparent!100, right color=transparent!0]
\tikzfading[name=fade up, bottom color=transparent!100, top color=transparent!0]
\begin{tikzpicture}[auto,scale=0.35]
\node at (6.5,-2){$D_{\exi}:\quad a\leq 2\lor 2a+b\leq 10$};
  \def\dblot{[very thick, fill=blue!20,draw=blue!50!black] circle (8pt)}
  \def\maxdblot{[very thick, fill=blue!50!black,draw=blue!50!black] circle (8pt)}
  \draw[-] (0,9) -- (0,0) -- (13,0);
  \draw[<->] (0,1) -- (0,0) -- (1,0);
  \foreach \i in {0,...,2} \foreach \j in {0,...,9} \filldraw (\i,\j) \dblot;
  \foreach \i in {3} \foreach \j in {0,...,4} \filldraw (\i,\j) \dblot;
  \filldraw (3,4) \maxdblot;
  \foreach \i in {4} \foreach \j in {0,...,2} \filldraw (\i,\j) \dblot;
  \filldraw (4,2) \maxdblot;
  \foreach \i in {5} \foreach \j in {0} \filldraw (\i,\j) \dblot;
  \filldraw (5,0) \maxdblot;

  \draw[-,red,thin] (2,9.5) -- (2,-0.5);
  \draw[-,red,thin] (0.25,9.5) -- (5.25,-0.5);

  \draw[nearly transparent, step=1,black] (-0.5,-0.5) grid (13.5,9.5);
  \fill [path fading=fade right,white] (10.5,-0.5) rectangle (13.5,9.5);
  \fill [path fading=fade up,white] (-0.5,6.5) rectangle (13.5,9.5);
\end{tikzpicture}
 \quad
\tikzfading[name=fade right, left color=transparent!100, right color=transparent!0]
\tikzfading[name=fade up, bottom color=transparent!100, top color=transparent!0]
\begin{tikzpicture}[auto,scale=0.35]
\node at (6.5,-2){$E_{\exi}:\quad a^2+3b^2\leq 100$};
  \def\dblot{[very thick, fill=blue!20,draw=blue!50!black] circle (8pt)}
  \def\maxdblot{[very thick, fill=blue!50!black,draw=blue!50!black] circle (8pt)}
  \draw[-] (0,9) -- (0,0) -- (13,0);
  \draw[<->] (0,1) -- (0,0) -- (1,0);
  \foreach \i in {0,...,5} \foreach \j in {0,...,5} \filldraw (\i,\j) \dblot;
  \filldraw (5,5) \maxdblot;
  \foreach \i in {0,...,7} \foreach \j in {0,...,4} \filldraw (\i,\j) \dblot;
  \filldraw (7,4) \maxdblot;
  \foreach \i in {0,...,8} \foreach \j in {0,...,3} \filldraw (\i,\j) \dblot;
  \filldraw (8,3) \maxdblot;
  \foreach \i in {0,...,9} \foreach \j in {0,...,2} \filldraw (\i,\j) \dblot;
  \filldraw (9,2) \maxdblot;
  \foreach \i in {0,...,10} \foreach \j in {0,...,0} \filldraw (\i,\j) \dblot;  
  \filldraw (10,0) \maxdblot;

  \begin{scope}
  \clip (-0.5,-0.5) -- (-0.5,9) -- (13,9) -- (13,-0.5) -- cycle;
  \draw[name path=ellipse,red,thin]
	  (0,0) circle[x radius = 10, y radius = 5.7735];
  \end{scope}

  \draw[nearly transparent, step=1,black] (-0.5,-0.5) grid (13.5,9.5);
  \fill [path fading=fade right,white] (10.5,-0.5) rectangle (13.5,9.5);
  \fill [path fading=fade up,white] (-0.5,6.5) rectangle (13.5,9.5);
\end{tikzpicture}
 \end{center}

Here, $E_{\exi}$ can be represented using its maximal points as generators:
\begin{align*}
E_{\exi} &=
\dc\tup{5,5}\cup \dc\tup{7,4}\cup \dc\tup{8,3}\cup \dc\tup{9,2}\cup \dc\tup{10,0}
\:.
\end{align*}
Representing downwards-closed sets via a finite ``basis'', i.e., as a
finite union of principal ideals, of the form $\bigcup_{i<n} \dc x_i$,
allows for simple and efficient algorithms, exactly as for
upwards-closed subsets: one tests inclusion by comparing the generators
of the ideals, and computes unions by gathering all generators and
(optionally) removing non-maximal ones. For intersections one uses
\begin{gather}
\label{eq-intersection-of-dcs-in-nat2}
\dc\tup{a,b}\cap\dc\tup{a',b'} =\dc\tup{\min(a,a'),\min(b,b')}
\end{gather}
and the distribution law $(\bigcup_i \dc x_i)\bigcap(\bigcup_j \dc y_j)=
\bigcup_i\bigcup_j(\dc x_i\cap \dc y_j)$, valid in every \qo.

However, there is an important limitation here that we did not have with
upwards-closed subsets: not all downwards-closed subsets in $\nat^2$ can be
generated from finitely many elements. Indeed, for any $x\in\nat^2$, the
ideal $\dc x$ is finite and thus only the finite downwards-closed subsets of
$\nat^2$ can be represented via principal ideals.
Hence $D_{\exi}$ in the previous figure, or even $\nat^2$ itself, while perfectly
downwards-closed, cannot be represented in this way.
\\

A possible solution is to represent a downwards-closed subset $D\in\Down(X)$ via
the finite basis of its upwards-closed complement, writing
$D=X\setminus\bigcup_{i<n}\upc x_i$, or also $D=X(\setminus \upc
x_i)_{i<n}$. Continuing our example, $D_{\exi}=\neg U_{\exi}$ can be written $D_{\exi}=
\nat^2\setminus \upc\tup{3,5} \setminus \upc\tup{4,3} \setminus \upc\tup{5,1}
\setminus \upc\tup{6,0} $.
This representation \emph{by excluded minors} is contrapositive and thus
counter-intuitive. Computing intersections become easier while unions become
harder, which is usually not what we want in applications.
More annoyingly, constructing a representation of $\dc x$ from $x$
involves actually computing complements, a task that can be difficult
in general as we shall see later. Even in the easy $\nat^2$ case, it
is not transparent how from, e.g., $x=\tup{2,3}$, one gets to $\dc
x=\nat^2\setminus\upc \tup{0,4}\setminus\upc\tup{3,0}$.

\subsubsection{Downwards-closed subsets with $\omega$'s.}
In the case of $\nat^2$, there exists an elegant solution to the
representation problem for downwards-closed sets: one use pairs
$\tup{a,b}\in\natom^2$ where $\natom$ extends $\nat$ with an extra
value $\omega$ that is larger than all natural numbers. We can now
denote $D_{\exi}=\neg U_{\exi}$ (see last figure) as
$\dc\tup{2,\omega}\cup\dc\tup{3,4} \cup\dc\tup{4,2}
\cup\dc\tup{5,0}$. We note that $\dc\tup{2,\omega}$ should probably be
written more explicitly as $(\dc 2)\times\Nat$ since it denotes
$\{\tup{c,d}~|a\leq 2\land b\in\nat\}$, a subset of $\nat^2$, not of
$\natom^2$, however the $\omega$-notation inherited from vector
addition systems \cite{karp69} is now well-entrenched and we retain it
here.

The sets of the form $\dc\tup{a,b}$ where $a,b\in\natom$ are the
\emph{ideals}\footnote{We shall soon give the general definition. For
now, the reader has to accept the $\nat^2$ case.
} of $\nat^2$, and we see that they comprise the principal
ideals as a special case. They also comprise infinite
subsets and, for example, $\nat^2=\dc\tup{\omega,\omega}$ is one of
them.

Using such ideals, all the downwards-closed subsets of $\nat^2$ can be
represented, and the algorithms for membership, inclusions, unions and
intersections are just minor extensions of what we showed for finite downwards-closed sets, when all generators were
proper elements of $\nat^2$.  The only difference
is that we have to handle $\omega$'s in the obvious way when comparing
generators (e.g., in inclusion tests) and when computing $\min$'s,
e.g., in~\eqref{eq-intersection-of-dcs-in-nat2}.  Additionally,
and like for upwards-closed subsets of $\nat^2$, the representation of
downwards-closed sets by the downward closure of incomparable elements
is canonical, which here too brings in important algorithmic benefits.
\\

Now that we have finite representations for both upwards-closed
and downwards-closed subsets of $\nat^2$, it is natural to ask whether we can
compute complements.

It turns out that,  for $\nat^2$, this is an easy task. For
complementing filters, one uses
\begin{align}
\label{eq-compl-filter-nat2}
\neg\upc\tup{a,b} =
\left\{\begin{array}{cl}
\dc\tup{a-1,\omega} & \text{ if $a>0$}
\\
\emptyset & \text{otherwise}
\end{array}\right\}
\bigcup
\left\{\begin{array}{cl}
\dc\tup{\omega,b-1} & \text{ if $b>0$}
\\
\emptyset & \text{otherwise}
\end{array}\right\}
\:.
\end{align}
We see where the $\omega$'s are needed. In fact, only $\neg
\upc\tup{0,0}=\emptyset$ does not involve $\omega$'s. We note that
$\emptyset$, a downwards-closed subset, is indeed a finite union of
ideals: it is the empty union.

Complementing an ideal is also easy:
\begin{align}
\label{eq-compl-ideal-nat2}
\neg\dc\tup{a,b} =
\left\{\begin{array}{cl}
\upc\tup{a+1,0} & \text{ if $a<\omega$}
\\
\emptyset & \text{otherwise}
\end{array}\right\}
\bigcup
\left\{\begin{array}{cl}
\upc\tup{0,b+1} & \text{ if $b<\omega$}
\\
\emptyset & \text{otherwise}
\end{array}\right\}
\:.
\end{align}
We see here that complementing an ideal in $\nat^2$ always returns a
union of principal filters, with no $\omega$'s.

Complementing an arbitrary upwards-closed subset $U$ is easy
if $U=\bigcup_{i<n}\upc x_i$ is given as a finite union of filters: we
compute $\bigcap_{i<n}(X\setminus \upc x_i)$.  This needs complementing
filters and intersecting downwards-closed sets, two operations we know
how to perform on $\nat^2$.  Complementing an arbitrary
downwards-closed subset $D=\bigcup_{i<n} \dc x_i$ is done similarly,
even with $x_i\in\natom^2$: we complement each ideal and intersect
the resulting upwards-closed sets.

Finally, let us observe that, since
any upwards-closed set is a finite union of filters,
the proof that the complement
$\neg\upc\tup{a,b}$ of any filter, and the intersection of any two
ideals of $\nat^2$, can be expressed as a finite union of ideals,
entails that any downwards-closed $D\in\Down(\nat^2)$ is a finite
union of ideals, a result known as \emph{expressive completeness}.

\subsubsection{Downwards-closed sets of subwords.}

What about downwards-closed sets in $(A^*,{\subword})$?  As with
$\nat^2$, finite unions of principal ideals, of the form $\dc u_1\cup
\cdots \cup \dc u_{\ell}$, are easy to compare and combine but they
can only describe the finite downwards-closed languages.  The
contrapositive representation by excluded minors can describe any
downwards-closed set but here too it is cumbersome. For example, let us
fix $A=\{\tta,\ttb,\ttc\}$ and consider the language $D_{\ext}=\tta^*\ttb^*$,
i.e., the set of all words composed of any number of $\tta$'s followed
by any number of $\ttb$'s: it is clear that $D_{\ext}$ is closed by taking
subwords, hence $D_{\ext}\in\Down(A^*)$.
Its representation by excluded minors is
$ D_{\ext} = \neg (\upc \ttb\tta \cup \upc \ttc) $.  That is, ``a word
$w\in A^*$ is in $D_{\ext}$ iff it does not contain any $\ttc$, nor
some $\ttb$ before an $\tta$'': arguably, using $\tta^*\ttb^*$ to
denote $D_{\ext}$ is clearer.

We do not develop this example further, and just announce that indeed
the regular expression $\tta^*\ttb^*$ denotes an ideal of
$(A^*,{\subword})$, as we shall show in
\cref{sec-sequences}. Furthermore, and as with $\natom^2$, algorithms
for comparing ideals in $A^*$ are similar to algorithms that
compare elements of $A^*$. For example, testing whether (the
language denoted by) $\tta^*\ttb^*$ is a subset of
$\ttb^*\ttc^*\tta^*$ is essentially like testing whether
$\tta\ttb$ is a subword of $\ttb\ttc\tta$.

\subsection{Well-quasi-orders}

The previous section has made it clear that writing upwards-closed
sets as a finite union of principal filters, when possible, is
handy to compute with those sets. The quasi-orders for which it is possible to represent all
upwards-closed sets as such is known: it is the class of well-quasi
orders, which we introduce below.

A \qo $(X,{\leq})$ is \emph{well-founded} $\equivdef$ it does not
contain an infinite strictly decreasing sequence
$x_0>x_1>x_2>\cdots$. A subset $S \subseteq X$ is an \emph{antichain}
if for all distinct $x,y \in S$, neither of $x \leq y$ and $y \leq x$
holds. A \qo is \emph{well} (\wqo) $\equivdef$ it is well-founded and
does not contain an infinite antichain. Equivalently, $(X,{\leq})$ is
\wqo iff every infinite sequence $(x_i)_{i\in\nat}$ contains an
infinite monotonic subsequence
$x_{i_0}\leq x_{i_1}\leq x_{i_2}\leq \cdots$ with
$i_0<i_1<i_2<\cdots$.  See~\cite{kruskal72,SS-esslli2012} for proofs
and other equivalent characterizations.

\begin{example}[Some well-known \wqos]~\\
\begin{description}
\item[linear orderings:] $(\nat,{\leq})$ is a \wqo, as is every
  ordinal or every well-founded linear-ordering.
\item[words and sequences:] $(\Sigma^*,{\subword})$, the set of words
  over a finite alphabet with the (scattered) subword ordering is a
  \wqo.  Variants and extensions
  abound~\cite{dalessandro2008,HSS-lmcs,zetzsche2018}.  By Higman's
  Lemma, for any \wqo $(X,{\leq})$, its sequence extension ordered by
  embedding, $(X^*,{\leq_*})$, is a \wqo too.
\item[powersets:] $(\Pf(X),{\hoare})$, the set of all \emph{finite}
  subsets of $(X,{\leq})$ with Hoare's subset embedding is a \wqo when
  $X$ is.  The full powerset $\Pcal(X)$ is a \wqo if $X$ is an
  $\omega^2$-\wqo, a slightly stronger requirement than just being
  \wqo, see~\cite{marcone2001}.
\item[trees:] Labeled finite trees ordered by embedding form a \wqo
  (Kruskal's Tree Theorem \cite{Kruskal:tree}).
\item[graphs:] Finite graphs ordered by the minor relation constitute
  a \wqo (Robertson \& Seymour's Graph Minor Theorem \cite{RS:minor}).
\qed
\end{description}
\end{example}

Coming back to our motivation, here is the result claimed at the
beginning of this section:
\begin{lemma}[Finite basis property]
  \label{lem-FBP}
  If $(X,{\leq})$ is \wqo then every upwards-closed $U\in \Up(X)$
  contains a \emph{finite basis} $B\subseteq U$ such that
  $U=\bigcup_{x\in B}\upc x$.
\end{lemma}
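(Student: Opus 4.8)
The plan is to prove the Finite Basis Property by extracting, for each upwards-closed $U$, a basis consisting of the minimal elements of $U$, and then showing that this basis is finite and that it generates $U$. First I would define $B$ to be the set of minimal elements of $U$, i.e.\ $B \egdef \{x \in U \mid \text{no } y \in U \text{ satisfies } y < x\}$. Clearly $B \subseteq U$, so it remains to establish two claims: that $U = \bigcup_{x \in B} \upc x$, and that $B$ is finite.

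For the inclusion $U = \bigcup_{x\in B}\upc x$, the direction $\supseteq$ is immediate since $B \subseteq U$ and $U$ is upwards-closed. For the direction $\subseteq$, I would take any $z \in U$ and argue that there is some $x \in B$ with $x \leq z$. This is exactly where well-foundedness enters: if $z$ itself is not minimal in $U$, there is some $z_1 \in U$ with $z_1 < z$; if $z_1$ is not minimal either, we get $z_2 < z_1$, and so on. Because $(X,{\leq})$ is \wqo it is well-founded, so this strictly decreasing sequence $z > z_1 > z_2 > \cdots$ cannot be infinite; it must terminate at some minimal element $x \in B$ with $x \leq z$. Hence $z \in \upc x \subseteq \bigcup_{x \in B}\upc x$, giving the remaining inclusion.

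For the finiteness of $B$, the key observation is that $B$ is an \emph{antichain}: any two distinct minimal elements $x, y \in B$ must be incomparable, since $x \leq y$ would contradict the minimality of $y$ (and symmetrically). Now I would invoke the defining property of a \wqo, namely that it contains no infinite antichain. If $B$ were infinite it would be an infinite antichain in $X$, contradicting the assumption that $(X,{\leq})$ is \wqo. Therefore $B$ is finite.

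Combining the two claims, $B$ is a finite subset of $U$ with $U = \bigcup_{x \in B}\upc x$, which is exactly the asserted finite basis. The main obstacle, such as it is, lies in the $\subseteq$ inclusion: one must be careful to invoke well-foundedness (rather than the full \wqo property) to guarantee that the descending chain of non-minimal elements below a given $z$ terminates, thereby ensuring every element of $U$ sits above \emph{some} minimal element. Once both halves of the \wqo definition---well-foundedness for generation and the antichain condition for finiteness---are correctly deployed, the result follows directly.
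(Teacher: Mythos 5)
There is a genuine gap in your finiteness step: the claim that the set $B$ of minimal elements of $U$ is an antichain is false in a general quasi-order, and consequently your $B$ need not be finite. The paper works with quasi-orders, not partial orders, so there may be distinct elements $x \neq y$ with $x \equiv y$ (i.e., $x \leq y$ and $y \leq x$). Such elements are comparable, yet neither $x < y$ nor $y < x$ holds (recall that $x < y$ requires $y \not\leq x$), so both can be minimal in $U$. Your argument that ``$x \leq y$ would contradict the minimality of $y$'' breaks down exactly here: $x \leq y$ contradicts minimality of $y$ only when additionally $y \not\leq x$. Concretely, take $X = \nat$ with the quasi-ordering in which all elements are equivalent ($m \leq n$ for all $m,n$); this is a \wqo, and for $U = X$ every element is minimal, so your $B$ is all of $\nat$. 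The lemma is still true there---$B = \{0\}$ is a finite basis---so it is your particular choice of $B$, not the statement, that fails.

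The descent/well-foundedness half of your argument is fine: it correctly shows that every $z \in U$ lies above some minimal element. To repair the proof you can either (a) keep your construction but pick one representative from each $\equiv$-class of minimal elements; representatives of distinct classes are genuinely incomparable, hence form an antichain, hence are finitely many, and every minimal element is equivalent to (so above) one of them, which restores $U = \bigcup_{x \in B} \upc x$ for the finite set $B$ of representatives; or (b) avoid minimal elements entirely with the standard greedy argument: if $U$ had no finite basis, pick $x_0 \in U$ and inductively $x_{n+1} \in U \setminus \bigcup_{i \leq n} \upc x_i$; by the \wqo property (every infinite sequence contains an increasing pair) some $i < j$ gives $x_i \leq x_j$, contradicting the choice of $x_j$. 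Route (b) is shorter, works uniformly for quasi-orders, and does not need the split into well-foundedness plus the antichain condition. (For reference, the paper states this lemma without proof, so there is no in-paper argument to compare against; the issue above is with your proof on its own terms.)
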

It is easy to see that the converse holds: if every upwards-closed set
has a finite basis, then $(X,{\leq})$ is \wqo.

\Cref{lem-FBP} validates our choice of representing
sets via a finite set of generators, as we did
in our two motivating examples.
It also entails that, when $X$ is a countable \wqo, $\Up(X)$ is
countable too, as is  $\Down(X)$ since complementation  bijectively relates
upwards-closed and downwards-closed subsets
(see~\cite{bonnet75} for a more general statement).

We conclude this section by mentioning another useful characterization of \wqos, see~\cite{kruskal72}.
\begin{lemma}[Ascending/Descending chain condition]
\label{lem-no-chains}
\label{lem-ACC}
If $(X,{\leq})$ is \wqo then there exists no infinite strictly increasing
sequence $U_0\subsetneq U_1\subsetneq U_2\subsetneq \cdots$ of
upwards-closed subsets. Dually, there exists no infinite strictly decreasing sequence
$D_0\supsetneq D_1\supsetneq D_2\supsetneq \cdots$ of downwards-closed
subsets.

In other words, $(\Up(X),{\supseteq})$ and $(\Down(X),{\subseteq})$ are
well-founded posets.
\end{lemma}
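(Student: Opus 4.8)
The plan is to establish the ascending chain condition for upwards-closed subsets first, obtain the descending chain condition for downwards-closed subsets from it by complementation, and finally observe that the ``in other words'' clause is merely a restatement in the language of posets. The single substantial ingredient is the already-proven \cref{lem-FBP}.

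For the ascending chain condition I would argue by contradiction. Suppose $U_0 \subsetneq U_1 \subsetneq U_2 \subsetneq \cdots$ is an infinite strictly increasing sequence in $\Up(X)$, and consider their union $U \egdef \bigcup_{i\in\nat} U_i$. Since arbitrary unions of upwards-closed sets are upwards-closed, $U \in \Up(X)$, so \cref{lem-FBP} provides a finite basis $B = \{x_1, \ldots, x_k\} \subseteq U$ with $U = \bigcup_{j} \upc x_j$. Each generator $x_j$ belongs to $U$, hence to some $U_{n_j}$ by definition of the union; setting $N \egdef \max_j n_j$, the whole of $B$ lies in $U_N$. As $U_N$ is upwards-closed, it then contains $\upc x_j$ for every $j$, so $U = \bigcup_j \upc x_j \subseteq U_N$. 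Combined with $U_N \subseteq U$ this gives $U_N = U$, whence $U_N = U_{N+1} = \cdots$, contradicting the strictness of the inclusions.

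The descending chain condition for $\Down(X)$ would then follow by passing to complements. If $D_0 \supsetneq D_1 \supsetneq \cdots$ were an infinite strictly decreasing sequence of downwards-closed sets, then the complements $U_i \egdef X \setminus D_i$ are upwards-closed, and $D_i \supsetneq D_{i+1}$ yields $U_i \subsetneq U_{i+1}$; thus $U_0 \subsetneq U_1 \subsetneq \cdots$ is an infinite strictly increasing chain in $\Up(X)$, contradicting the first part. For the reformulation, I would simply unwind definitions: a strictly decreasing sequence in the poset $(\Up(X), \supseteq)$ is exactly a sequence that strictly increases for $\subseteq$, so the absence of the former is precisely the ascending chain condition just established, and dually for $(\Down(X), \subseteq)$; hence both posets are well-founded.

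There is no deep obstacle here: the only step deserving care is the stabilization, i.e.\ verifying that a basis of the union $U$ is already captured within a finite prefix $U_N$ of the chain — which is exactly why the \emph{finiteness} of the basis, guaranteed by \cref{lem-FBP}, is indispensable. One can alternatively avoid forming the union and argue directly from the \wqo property: pick $x_i \in U_{i+1} \setminus U_i$ for each $i$, extract an infinite monotone subsequence $x_{i_0} \leq x_{i_1} \leq \cdots$, and derive a contradiction from $x_{i_0} \in U_{i_0+1} \subseteq U_{i_1}$ together with the upward-closure of $U_{i_1}$, which forces $x_{i_1} \in U_{i_1}$.
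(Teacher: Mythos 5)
Your proof is correct, but it takes a different route from the paper's. The paper argues directly from the \wqo definition, and starts with the \emph{descending} chain: given $D_0\supsetneq D_1\supsetneq\cdots$ in $\Down(X)$, it picks witnesses $x_i\in D_i\setminus D_{i+1}$, invokes the \wqo property to find a good pair $x_i\leq x_j$ with $i<j$, and gets a contradiction because $x_j\in D_j$ and $D_j$ is downwards-closed, forcing $x_i\in D_j\subseteq D_{i+1}$; the ascending case is then obtained by complementation. You do the opposite: you treat the \emph{ascending} chain first, via a stabilization argument --- take the union $U$ of the chain, apply \cref{lem-FBP} to get a finite basis, observe that the finitely many generators already sit inside some $U_N$, and conclude $U=U_N$ --- and then derive the descending case by complementation. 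Both are sound; the paper's argument is more self-contained (it needs only the definition of \wqo, not \cref{lem-FBP}), whereas yours is the classical ``Noetherian'' argument that derives the chain condition from finite generation, and it cleanly isolates where finiteness of the basis is used. Your closing remark in fact sketches precisely the paper's witness-extraction argument, in its dual form for upwards-closed sets, so you have effectively given both proofs. One cosmetic point: if the union $U$ is empty your $N=\max_j n_j$ ranges over an empty set; this case is vacuous (an empty $U$ forces every $U_i$ to be empty, so the chain cannot be strict), but it deserves a word.
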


\subsection{Canonical prime decompositions of closed subsets}

We now recall some basic facts about the canonical decompositions of
upwards-closed and downwards-closed subsets in prime components.

Let $(X,{\leq})$ be a \wqo. We use $\Up$ and $\Down$ as abbreviations
for $\Up(X)$ and $\Down(X)$.

\begin{definition}[Prime subsets]
  \label{def-prime}
  1.\ A non-empty $U\in \Up$ is \emph{(up) prime} if for any
  $U_1,U_2\in\Up$, $U\subseteq(U_1\cup U_2)$ implies $U\subseteq U_1$
  or $U\subseteq U_2$.

  \noindent
  2.\ Similarly, a non-empty $D\in \Down$ is \emph{(down) prime} if
  $D\subseteq(D_1\cup D_2)$ implies $D\subseteq D_1$ or
  $D\subseteq D_2$.
\end{definition}
Observe that all principal filters are up prime and all principal
ideals are down prime.  Note also that, by definition, the empty
subset is not prime.

\begin{lemma}[Irreducibility]
  \label{lem-irreduc}
  1.\ $U\in\Up$ is prime if, and only if, for all
  $U_1,\ldots,U_n\in\Up$, $U=U_1\cup \cdots\cup U_n$ implies $U=U_i$
  for some $i$.

  \noindent
  2.\ $D\in\Down$ is prime if, and only if, for all
  $D_1,\ldots,D_n\in\Down$, $D=D_1\cup \cdots\cup D_n$ implies $D=D_i$
  for some $i$.
\end{lemma}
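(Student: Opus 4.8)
The two statements are dual: part~2 follows from part~1 by replacing ``upwards-closed'' with ``downwards-closed'' throughout and invoking the fact, noted at the start of \cref{sec-basics}, that finite unions and intersections of downwards-closed sets are again downwards-closed. So I would prove part~1 in full and remark that part~2 is identical mutatis mutandis.

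For the forward direction (prime $\Rightarrow$ the $n$-ary equality condition), I would first upgrade the binary containment condition of \cref{def-prime} to an $n$-ary one by induction on $n$: writing $U_1\cup\cdots\cup U_n = U_1 \cup (U_2\cup\cdots\cup U_n)$ and using that $U_2\cup\cdots\cup U_n\in\Up$, the binary primality of $U$ yields $U\subseteq U_1$ or $U\subseteq U_2\cup\cdots\cup U_n$, and the induction hypothesis settles the latter case. Then, given a decomposition $U = U_1\cup\cdots\cup U_n$, I apply this $n$-ary containment to the trivial inclusion $U\subseteq U_1\cup\cdots\cup U_n$ to get $U\subseteq U_i$ for some $i$; since $U_i$ is one of the summands we also have $U_i\subseteq U$, whence $U=U_i$.

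For the backward direction (the $n$-ary equality condition $\Rightarrow$ prime), the key move is an intersection decomposition. Suppose $U\subseteq U_1\cup U_2$ with $U_1,U_2\in\Up$. Since intersections of upwards-closed sets are upwards-closed, both $U\cap U_1$ and $U\cap U_2$ lie in $\Up$, and the inclusion $U\subseteq U_1\cup U_2$ gives $U = (U\cap U_1)\cup(U\cap U_2)$. Applying the equality condition with $n=2$ yields $U = U\cap U_1$ or $U = U\cap U_2$, that is $U\subseteq U_1$ or $U\subseteq U_2$, which is exactly the binary primality condition.

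The one point needing care is the non-emptiness built into \cref{def-prime}: the empty set trivially satisfies ``$\emptyset = U_i$'' whenever $\emptyset = U_1\cup\cdots\cup U_n$, so the equality condition as literally stated does not by itself exclude $U=\emptyset$. I would dispatch this either by reading the condition as also ranging over the empty union ($n=0$), for which ``$U=U_i$ for some $i$'' fails and thus forces $U\neq\emptyset$, or simply by carrying the standing assumption that $U$ is non-empty, as the word \emph{prime} already presupposes. This bookkeeping around $\emptyset$ is the only genuine subtlety; the mathematical content is the intersection decomposition above, which rests entirely on closure of $\Up$ under binary intersection.
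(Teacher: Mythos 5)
The paper states \cref{lem-irreduc} without proof, so there is no official argument to compare against; judged on its own, your proof is correct and complete. Both directions are sound: the induction upgrading the binary containment condition of \cref{def-prime} to the $n$-ary one, and the intersection decomposition $U=(U\cap U_1)\cup(U\cap U_2)$ for the converse, which is exactly the trick the paper itself relies on implicitly when it invokes \cref{lem-irreduc} inside the proof of \cref{lem-fin-dec-exists} (a non-prime, non-empty $D$ is there split into proper closed subsets). Your attention to the empty set is also well placed: as literally quantified over $n\ge 1$, the equality condition is satisfied by $U=\emptyset$ (every decomposition of $\emptyset$ has all parts equal to $\emptyset$), while \cref{def-prime} explicitly excludes $\emptyset$; so the ``if'' direction does need either the $n=0$ reading or a standing non-emptiness assumption, and either of your two fixes is fine. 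One small caveat on presentation: part~2 is not obtained from part~1 by any formal duality (complementation turns unions into intersections), but, as you say, by repeating the identical argument with $\Down$ in place of $\Up$, which works because $\Down$ is likewise closed under finite unions and intersections.
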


The following lemma highlights the importance of prime subsets.
\begin{lemma}
\label{lem-fin-dec-exists}
1.\  Every upwards-closed set $U\in\Up$ is a finite union of up primes.

\noindent
2.\  Every downwards-closed set $D\in\Down$ is a finite union of down
primes.
\end{lemma}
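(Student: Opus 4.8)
The plan is to prove Part 1 (every $U\in\Up$ is a finite union of up primes) and then obtain Part 2 either by a dual argument or by complementation. For Part 1, I would combine \Cref{lem-FBP} with \Cref{lem-irreduc}. By the finite basis property, $U=\bigcup_{x\in B}\upc x$ for some finite $B\subseteq U$. Each principal filter $\upc x$ is up prime, as already observed after \Cref{def-prime}. Hence $U$ is a finite union of up primes, and we are done immediately. This makes the finite basis property do essentially all the work: the decomposition into primes is just the finite-basis representation, and no further irreducibility analysis is needed for the existence claim.

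For Part 2, the situation is harder, because \Cref{lem-FBP} has no direct downwards-closed analogue: not every $D\in\Down$ is a finite union of \emph{principal} ideals (as the $\nat^2$ discussion made clear, e.g.\ $D_{\exi}$ or $\nat^2$ itself). So I cannot simply mirror the argument of Part 1. The cleanest route is to use well-foundedness of $(\Down,{\subseteq})$ from \Cref{lem-ACC} together with \Cref{lem-irreduc}(2). First I would argue that every nonempty $D\in\Down$ decomposes as a finite union of \emph{primes} by a well-founded induction: if $D$ is itself prime we are done; otherwise, by \Cref{lem-irreduc}(2) (in its contrapositive form) there is a writing $D=D_1\cup\cdots\cup D_n$ with each $D_i\neq D$, hence each $D_i\subsetneq D$, and I apply the induction hypothesis to each strictly smaller $D_i$. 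Well-foundedness of $(\Down,{\subseteq})$ guarantees this recursion terminates. The empty set is handled as the empty union.

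The main obstacle is making the induction rigorous, since \Cref{lem-irreduc}(2) as stated only gives a decomposition into \emph{two} pieces when $D$ is not prime (via \Cref{def-prime}): non-primeness yields $D\subseteq D_1\cup D_2$ with $D\not\subseteq D_1$ and $D\not\subseteq D_2$, and I must turn this into a decomposition of $D$ itself with strictly smaller parts. The fix is to replace $D_i$ by $D\cap D_i$: since $D$ is downwards-closed, each $D\cap D_i\in\Down$, we have $D=(D\cap D_1)\cup(D\cap D_2)$, and $D\not\subseteq D_i$ forces $D\cap D_i\subsetneq D$. Thus both parts are strictly smaller downwards-closed sets, and the descending-chain condition of \Cref{lem-ACC} bounds the depth of the recursion. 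A symmetric remark (using well-foundedness of $(\Up,{\supseteq})$ and \Cref{lem-irreduc}(1)) would give an alternative proof of Part 1 not relying on \Cref{lem-FBP}, but since the finite-basis route is shorter I would keep Part 1 as above and only invoke the inductive argument for Part 2.
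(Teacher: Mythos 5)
Your proposal is correct and takes essentially the same route as the paper: Part 1 is immediate from \cref{lem-FBP} together with the primality of principal filters, and Part 2 is the same well-founded induction on $(\Down,{\subseteq})$ (justified by \cref{lem-ACC}), splitting a non-prime $D$ into strictly smaller downwards-closed parts and invoking the induction hypothesis. Your extra step of turning the two-set covering from \cref{def-prime} into a genuine decomposition $D=(D\cap D_1)\cup(D\cap D_2)$ is sound, though the paper shortcuts this by citing \cref{lem-irreduc}, which already yields a decomposition of $D$ into proper downwards-closed subsets.
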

\begin{proof}
1.\ is trivial:  the finite basis property
of \wqos (\cref{lem-FBP}) shows that any upwards-closed set is a
finite union of filters. 

2.\ is a classical result, going back to Noether, see \cite[Chapter
VIII, Corollary, p.181]{Birkhoff:latt}.  We include a proof for the
reader's convenience.  That proceeds by well-founded induction on $D$
in the well-founded poset $(\Down,{\subseteq})$ (\cref{lem-ACC}).
If $D$ is empty, then it is an empty (hence finite) union of
primes. If $D$ is prime, the claim holds trivially.  Finally, if
$D\not=\emptyset$ is not prime, then by \cref{lem-irreduc} it can
be written as $D=D_1\cup \cdots\cup D_n$ where each $D_i$ is properly
contained in $D$. By induction hypothesis each $D_i$ is a finite
union of primes. Hence $D$ is too.
\qed
\end{proof}
We say that a finite collection $\{P_1,\cdots,P_n\}$ of up (resp.\
down) primes is a \emph{decomposition} of $U\in\Up$ (resp., of
$D\in\Down$) if $U=P_1\cup\cdots\cup P_n$ (resp.,
$D=P_1\cup\cdots\cup P_n$). The decomposition is \emph{minimal} if
$P_i\subseteq P_j$ implies $i=j$.

\begin{theorem}[Canonical decomposition]
  Any upwards-closed $U$ (resp.\ downwards-closed $D$) has a finite
  minimal decomposition. Furthermore this minimal decomposition is
  unique. We call it the canonical decomposition of $U$ (resp.\ $D$).
\end{theorem}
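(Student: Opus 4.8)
The plan is to establish existence and uniqueness separately, and to carry out the argument only for the upwards-closed case $U$, since the downwards-closed statement for $D$ is entirely dual (replace $\Up$ by $\Down$, up primes by down primes, and invoke the second items of \cref{lem-fin-dec-exists,lem-irreduc}).

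For existence, I would start from \cref{lem-fin-dec-exists}, which already supplies a finite family of up primes $\{P_1,\ldots,P_n\}$ with $U = P_1\cup\cdots\cup P_n$, and then \emph{prune} it into a minimal decomposition: as long as some $P_i\subseteq P_j$ with $i\neq j$, discard $P_i$. Such a deletion leaves the union unchanged, since $P_i\cup P_j = P_j$, so the pruned family is still a decomposition of $U$; as the family is finite the process terminates, and the surviving primes are pairwise incomparable, which is exactly the minimality condition that $P_i\subseteq P_j$ forces $i=j$. When $U=\emptyset$ the empty family already does the job, since primes are by definition non-empty, so no non-empty decomposition of $\emptyset$ exists.

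For uniqueness, the one preliminary step I would record is the $n$-ary form of primality: if $P$ is an up prime and $P\subseteq V_1\cup\cdots\cup V_k$ with each $V_\ell\in\Up$, then $P\subseteq V_\ell$ for some $\ell$. This follows from the two-set definition of primality by induction on $k$, grouping $V_2\cup\cdots\cup V_k$ (again upwards-closed) as a single set; it is the containment analogue of \cref{lem-irreduc}. Now suppose $\{P_1,\ldots,P_n\}$ and $\{Q_1,\ldots,Q_m\}$ are both minimal decompositions of $U$. Fixing any $P_i$, from $P_i\subseteq U=\bigcup_j Q_j$ and primality of $P_i$ we get $P_i\subseteq Q_j$ for some $j$; symmetrically $Q_j\subseteq P_{i'}$ for some $i'$, whence $P_i\subseteq P_{i'}$. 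Minimality of the $P$-family forces $i=i'$, so the chain $P_i\subseteq Q_j\subseteq P_{i'}=P_i$ collapses to equalities and yields $P_i=Q_j$. Thus every member of the first family equals a member of the second, and by the symmetric argument the reverse inclusion holds too; the two families coincide as sets, and in particular $n=m$.

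I expect the only genuine subtlety to lie in this uniqueness step: one must use minimality to \emph{upgrade} the cross-containments $P_i\subseteq Q_j\subseteq P_{i'}$ into equalities rather than settling for mere inclusions, and one must phrase the conclusion as equality of the underlying sets of primes, the indexing and ordering being immaterial. The pruning step for existence and the $n$-ary primality observation are otherwise routine consequences of the definitions together with \cref{lem-irreduc}.
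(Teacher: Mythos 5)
Your proposal is correct and follows essentially the same route as the paper: existence by pruning the finite prime decomposition of \cref{lem-fin-dec-exists} until no containments remain, and uniqueness by using primality to obtain cross-containments $P_i\subseteq Q_j\subseteq P_{i'}$ between the two minimal decompositions, which minimality collapses into equalities. Your explicit statement of the $n$-ary form of primality is a detail the paper leaves implicit, but it does not change the argument.
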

\begin{proof}
  By \cref{lem-fin-dec-exists}, any $U$ (or $D$) has a finite
  decomposition: $U$ (or $D$) $=\bigcup_{i=1}^n P_i$. The
  decomposition can be made minimal by removing any $P_i$ that is
  strictly included in some $P_j$. To prove uniqueness we assume that
  $\bigcup_{i=1}^n P_i=\bigcup_{j=1}^m P'_j$ are two minimal
  decompositions. From $P_i\subseteq \bigcup_j P'_j$, and since $P_i$
  is prime, we deduce that $P_i\subseteq P'_{k_i}$ for some $k_i$.
  Similarly, for each $P'_j$ there is $\ell_j$ such that
  $P'_j\subseteq P_{\ell_j}$. The inclusions
  $P_i\subseteq P'_{k_i}\subseteq P_{\ell_{k_i}}$ require
  $i=\ell_{k_i}$ by minimality of the decomposition, hence are
  equalities $P_i=P'_{k_i}$. Similarly $j=k_{\ell_j}$ and
  $P'_j=P_{\ell_j}$ for any $j$. This one-to-one correspondence shows
  $\{P_1,\cdots,P_n\}=\{P'_1,\cdots,P'_m\}$.
  \qed
\end{proof}

\subsection{Filter decompositions and ideal decompositions}

\begin{definition}[Ideals]
  A subset $S$ of $X$ is an \emph{ideal} it if is non-empty,
  downwards-closed, and directed.  We write $\Idl(X)=\{I,J,\cdots\}$
  for the set of all ideals of $X$.

  Recall that $S$ is \emph{directed} if for all $x_1, x_2 \in S$,
  there exists $x \in S$ such that $x_1 \leq x$ and $x_2 \leq x$.
\end{definition}
A \emph{filter} is a non-empty, upwards-closed, and filtered set $S$,
where \emph{filtered} means that for all $x_1, x_2 \in S$, there
exists $x \in S$ such that $x \leq x_1, x_2$.  In a \wqo, the filters
are exactly the principal filters, hence there is no need to introduce
a new notion.  We write $\Fil (X)$ for the set of all (principal)
filters of $X$.

Every principal ideal $\dc x$ is directed hence is an ideal.  However
not all ideals are principal.  For example, in $(\nat,{\leq})$, the
set $\nat$ itself is an ideal (it is directed) and not of the form
$\dwc n$ for any $n \in \nat$.

\begin{remark}
  \label{rem-idl-iso-sobr}
  The above example illustrates a general phenomenon: the limit of an
  monotonic sequence of ideals (more generally, of a directed family
  of ideals) is an ideal.  In particular, if $x_0<x_1<x_2<\cdots$ is
  an infinite increasing sequence, $\bigcup_{i=0,1,2,\ldots}\dc x_i$
  is an ideal. It can be seen as the downward closure of a limit
  point, e.g. when one writes things like
  ``$\bigcup_{n\in\nat}\dc n=\dc\omega$''. It turns out that
  $(\Idl(X),{\subseteq})$, the domain-theoretical \emph{ideal
    completion} of $X$, is isomorphic to the \emph{sobrification}
  $(\widehat X,{\leq})$ ---a topological comple\-tion--- of
  $(X,{\leq})$, see~\cite{FWD-WSTS-1} for definitions and details.
\qed
\end{remark}

The following appears for example as Lemma~1.1 in \cite{kabil92}.
\begin{proposition}
  \label{prop:prime-idl}
  1.\  The up primes are exactly the filters. 

\noindent
2.\ The down primes are exactly the ideals.
\end{proposition}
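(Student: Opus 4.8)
The plan is to prove the two parts dually and in fact to establish both equivalences by the same order-theoretic argument, with no appeal to well-quasi-orderedness beyond what is packaged into the notion of ``filter''. I will spell out part~2 (down primes are exactly the ideals) and note that part~1 follows by reversing the order. Recall that the text has already observed that principal ideals are down prime; the real content here is the upgrade from ``principal ideal'' to ``arbitrary ideal''.

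For the easy inclusion, I would show that every ideal $I$ is down prime. Suppose $I \subseteq D_1 \cup D_2$ with $D_1, D_2 \in \Down$ and, towards a contradiction, that $I \not\subseteq D_1$ and $I \not\subseteq D_2$. Pick $x_1 \in I \setminus D_1$ and $x_2 \in I \setminus D_2$. Since $I$ is directed, there is $x \in I$ with $x_1 \leq x$ and $x_2 \leq x$. As $x \in I \subseteq D_1 \cup D_2$, it lies in some $D_k$; but then downward-closure of $D_k$ forces $x_k \in D_k$, contradicting the choice of $x_k$. Hence $I$ is prime. This direction uses only directedness and downward-closure.

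The harder inclusion is that every down prime $D$ is an ideal, and this is where I expect the one genuine idea to sit. A prime $D$ is non-empty and downwards-closed by definition, so only directedness must be established. I would argue by contraposition: assume $D$ is \emph{not} directed, i.e.\ there are $x_1, x_2 \in D$ with no common upper bound inside $D$, which is to say $D \cap \upc x_1 \cap \upc x_2 = \emptyset$. The key construction is to set $D_i \egdef D \setminus \upc x_i$ for $i = 1,2$. Each $D_i$ is downwards-closed, being the intersection of the downwards-closed set $D$ with the complement $X \setminus \upc x_i$ of an upwards-closed set. The crucial identity is $D_1 \cup D_2 = D \setminus (\upc x_1 \cap \upc x_2)$, so the emptiness of $D \cap \upc x_1 \cap \upc x_2$ gives exactly $D = D_1 \cup D_2$. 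Primeness then yields $D = D_1$ or $D = D_2$; but $x_i \in D \cap \upc x_i$ shows $x_i \notin D_i$, so $D \neq D_i$ for both $i$ --- a contradiction. Therefore $D$ is directed and hence an ideal.

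Finally, part~1 is the exact order-dual: replacing $\Down$ by $\Up$, $\dc$ by $\upc$, ``directed'' by ``filtered'', and reversing all inequalities turns the argument above into a proof that up primes are exactly the filters. I would remark that, since the text has already identified the filters of a \wqo with the principal filters, part~1 simultaneously recovers that every up prime is a principal filter. The only genuinely delicate point in the whole argument is the choice of the separating pair $D_1, D_2$ (resp.\ $U_1, U_2$) together with the observation that their union is all of $D$ precisely when directedness fails; everything else is routine closure bookkeeping.
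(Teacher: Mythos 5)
Your proof is correct and takes essentially the same approach as the paper: both halves of part~2 are exactly the paper's argument --- directedness plus downward closure for ``ideal $\Rightarrow$ prime'', and the decomposition $D = (D \setminus \upc x_1) \cup (D \setminus \upc x_2)$ when directedness fails for ``prime $\Rightarrow$ ideal''. The only (cosmetic) difference is in part~1, which the paper dismisses as clear (via the finite basis property and irreducibility, filters in a \wqo being principal), whereas you obtain it by dualizing the part-2 argument, which is equally valid and slightly more self-contained.
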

\begin{proof}
1.\ is clear and we focus on 2.

\noindent
$(\GaD)$: We only have to check that a down prime $P$ is directed.
Assume it is not. Then it contains two elements $x_1,x_2$ such that
$\upc x_1\cap \upc x_2\cap P=\emptyset$. In other words, $P\subseteq
(P\setminus\upc x_1)\cup(P\setminus \upc x_2)$. But $P\setminus\upc
x_i$ is downwards-closed for both $i=1,2$, so $P$ being prime is
included in one $P\setminus \upc x_i$. This contradicts $x_i\in P$.

\noindent
$(\DaG)$: Consider an ideal $I\subseteq X$.
Aiming for a contradiction, we assume that $I \subseteq D_1 \cup D_2$
but $I \not\subseteq D_1$, $I \not\subseteq D_2$.  Pick a point $x_1$
from $I \setminus D_1$, and a point $x_2$ from $I \setminus D_2$.
Since $I$ is directed, there is a point $x \in I$ such that $x_1, x_2
\leq x$.  Since $D_1$ is downwards-closed, $x$ is not in $D_1$, and
similarly $x$ is not in $D_2$, so $x$ is not in $D_1 \cup D_2$, contradiction.
\qed
\end{proof}

\Cref{prop:prime-idl} explains why ideals appeared in our
representation of downwards-closed sets of $\Nat^2$ in
\cref{sec-dwc-nat}.  There is a general reason: ideals are the down
primes used in canonical decompositions, just like filters do for
upwards-closed sets.  Primality explains why the representation is
canonical, and why comparing downwards-closed sets reduces to comparing
generators. Meanwhile, the view of ideals as sets of the form $\dwc x$
where $x$ is either a normal point in $X$ or, possibly, a limit point
in $\widehat X$ ---recall \cref{rem-idl-iso-sobr}--- explains why
comparing ideals is often very similar to comparing points ---recall
testing whether $\dwc\tupsum{3,4}\subseteq\dwc\tupsum{\omega,1}$ or
whether $\tta^*\ttb^*\subseteq \ttb^*\ttc^*\tta^*$.

\section{Ideally effective \wqos}
\label{sec-effective}

When describing generic algorithms for \wqos, one needs to make
some basic computational assumptions on the \wqos at hand.  Such
assumptions are often summarized informally along the line of
``\emph{the \wqo $(X,{\le})$ is effective}'' and their precise meaning
is often defined at a later stage, when one gives sufficient
conditions based on the algorithm one is describing, a classic example
being~\cite{finkel98b}.
Sometimes the effectiveness assumptions are not even spelled out formally,
e.g., when one has in mind applications where the \wqo is
$(\nat^k,{\le_{\times}})$ or $(A^*,{\subword})$ which are obviously
``effective'' under all expected understandings.

The situation is different in this chapter since our goal is to
provide a formal notion of effectiveness that is \emph{preserved} by
the main \wqo constructions (and that supports the computation on
closed subsets illustrated in \cref{sec-motivations}). As a
consequence, we cannot avoid giving a formal definition, even if this
mostly amounts to administrative technicalities.

To simplify this task, we start by fixing the representation for
closed subsets: these will be represented as finite unions of prime
subsets as explained in \cref{sec-basics}. This provides a robust,
generic, and convenient data structure for $\Up(X)$ and $\Down(X)$
based on data structures (to be defined) for $\Fil(X)$ and
$\Idl(X)$. We do not require the decomposition to be canonical and
leave this as an implementation choice (the underlying complexity
trade-offs depend on the \wqo and the application at hand).  Moreover,
and since all filters are principal in \wqos, any data structure for
$X$ can be reused for representing $\Fil(X)$, so we will only need to
assume that $X$ and $\Idl(X)$ have an effective presentation.

This leads to the following definition.  Note that, rather than being
completely formal and talk of recursive languages or G\"odel
numberings, we will allow considering more versatile data structures
like terms, tuples, graphs, etc., that are closer to actual
implementations. All data structures considered in this paper will be
recursive sets, and in particular one can enumerate their elements.

{\leqnomode
\begin{definition}[Ideally Effective \wqos]
  \label{def:eff-wqo}
A \wqo $(X,{\leq})$ further equipped with data structures for representing $X$ and
$\Idl(X)$ is \emph{ideally effective} if:
\begin{align}
\tag{\textrm{OD}}\label{OD}
&\begin{minipage}{-10em+\textwidth}
    the ordering $\leq$ is decidable on (the representation of) $X$;
\end{minipage} \\
\tag{\textrm{ID}}\label{ID}
&\begin{minipage}{-10em+\textwidth}
   similarly, $\subseteq$ is decidable on $\Idl(X)$;
   \end{minipage} \\
\tag{\textrm{PI}}\label{PI}
&\begin{minipage}{-10em+\textwidth}
   principal ideals are computable, that is, \\
   $x\mapsto\dc x$ is
   computable;
   \end{minipage} \\
\tag{\textrm{CF}}\label{CF}
&\begin{minipage}{-10em+\textwidth}
   complementation of filters, denoted \\
   $\neg:\Fil(X)\to\Down(X)$, is
computable;
\end{minipage} \\
\tag{\textrm{IF}}\label{IF}
&\begin{minipage}{-10em+\textwidth}
   intersection of filters, denoted \\
   $\cap:\Fil(X)\times\Fil(X)\to\Up(X)$, is
   computable;
   \end{minipage} \\
\tag{\textrm{CI}}\label{CI}
&\begin{minipage}{-10em+\textwidth}
   complementation of ideals, denoted \\
   $\neg:\Idl(X)\to\Up(X)$, is
   computable;
   \end{minipage} \\
\tag{\textrm{II}}\label{II}
&\begin{minipage}{-10em+\textwidth}
   intersection of ideals, denoted \\
   $\cap:\Idl(X)\times\Idl(X)\to\Down(X)$, is
   computable.
   \end{minipage}
\end{align}
\end{definition}
}
Some immediate remarks are in order:
\begin{itemize}
\item As mentioned earlier, elements of $\Up(X)$ and $Down(X)$ are
  represented
as collections (via lists, or sets, or~\ldots) of elements of $X$ and of $\Idl(X)$
respectively. The computability of unions is thus trivial and
therefore was not required in the formal definition.
\item
Similarly, checking membership $x\in D$ for  downwards-closed sets
reduces to  deciding $\dc x\subseteq D$, hence was not required
either.
\item We said earlier that operations on $\Up$ and $\Down$ boil down
  to operations on filters and ideals. Note that there are some
  subtleties. For example, deciding inclusions over $\Up(X)$ or
  $\Down(X)$ is made possible because the decompositions only use
  prime subsets.  Explicitly, in order to check whether
  $D \subseteq D'$ for example, where $D = \bigcup_{i<m} I_i$ and
  $D' = \bigcup_{j<n} I'_j$, we check whether every $I_i$ is included
  in some $I'_j$---this is correct because every ideal $I_i$ is down
  prime.
\item
There is some asymmetry in the definition between upwards-closed and
downwards-closed sets. This should be expected since \wqos are
well-founded but the reverse orderings need not be.
\item
The astute reader may have noticed that the definition contains some
hidden redundancies. Our proposal is justified by algorithmic
efficiency concerns, see discussion in \cref{sec-axioms}.
\end{itemize}

\subsection{Some first ideally effective \wqos}

We quickly show that the simplest \wqos are ideally effective. They
will be used later as building blocks for more complex \wqos.

\subsubsection{Finite orderings.}
\label{sec-finite}

A frequently occurring quasi-ordering in computer science is the
\emph{finite alphabet with $n$ symbols}. It consists of a set with $n$
elements, usually denoted $A$, ordered by equality. This is a \wqo
since $A$ is finite. The name ``alphabet'' comes from its applications
in language theory but this very basic \wqo appears in many other
situations, e.g., as colorings of some other objects, as the set of
control states in formal models of computations such as Turing
machines, communicating automata, etc.

Let us spell out, as a warming-up exercise, why
this \wqo $(A, =)$ is ideally effective. One can for instance
represent elements of $A$ using natural numbers up to $|A|-1$. The
ordering is trivially decidable. All ideals of $(A,=)$  are principal,
that is of the form $\dwc x=\{x\}$ for $x \in A$. We thus represent ideals
as elements, exactly as we do for filters. Therefore, ideal inclusion coincides
with equality, and \PI is given by the identity function. All
other operations are trivial: intersection of filters (\emph{resp.} ideals) is
always empty except if the  two filters (\emph{resp.} ideals) are
equal, and $\neg \upc x = \neg \dwc x = A \setminus \{ x \}$.
\\

We could of course have dispensed with these explanations since, more
generally, any finite \qo is a \wqo and is ideally effective. In
particular, all operations required by \cref{def:eff-wqo} are always
computable, being operations on a finite set. Let us note that all
ideals are principal in this setting, which is no surprise since
$(X,{\ge})$ is also a \wqo, and its filters are the ideals of
$(X,{\le})$.

\subsubsection{Natural numbers.}
\label{sec-naturals}

Apart from finite orders, the simplest \wqo is $(\nat,{\le})$.  We now
restate our observations from \cref{sec-motivations} in the more
formal framework of \cref{def:eff-wqo}.

Observe that since $\le$ is linear, any downwards-closed set is
actually an ideal, except for $\emptyset$. The ideals that are bounded
from above have the form $\dwc n$ for some $n\in\Nat$, and the only
unbounded ideal is the whole set $\nat$ itself, often denoted
$\dwc \omega$ as we did in \cref{sec-dwc-nat}. Ideal inclusion is thus
decidable: principal ideals are compared as elements, and
$\dwc \omega$ is larger than all the others.  Thus
$(Idl(\nat),{\subseteq})$ is linearly ordered, which makes
intersections trivial: one has $\upc n\cap \upc m=\upc \max(n,m)$ and
$\dwc n\cap \dwc m=\dwc \min(n,m)$. Finally, complements are computed
as follows:
\begin{xalignat*}{2}
  \neg \upc (n+1) &= \dwc n \:,      & 	\neg \dwc n &= \upc (n+1) \:,
\\
  \neg \upc 0 &= \emptyset \:,       &      \neg \dwc \omega &= \emptyset \:.
\end{xalignat*}

\subsubsection{Ordinals.}
\label{sec-ordinals}

The above analysis extends to any recursive linear \wqo, i.e., any
recursive ordinal (see~\cite{sacks90} for definitions).
Given an ordinal $\alpha$, we write $\bm{\alpha}$ (in bold font) for the set of
ordinals $\{ \beta \mid \beta < \alpha \}$---the classical
set-theoretic construction of ordinals equates $\alpha$ with
$\bm{\alpha}$.

Let $(X, {\le}) = (\bm{\alpha}, {\le})$. Once again, $X$ being linearly
ordered, its ideals are its downwards-closed sets (except
$\emptyset$). Therefore, there are three types of ideals:
\begin{enumerate}
\item $I = X$,
\item $I$ has a maximal element $\beta \in X$,
  in which case $I = \dwc \beta$,
\item Or $I$ has a supremum $\beta \in X \setminus I$,
  in which case $I = \dwc_< \beta = \bm{\beta}$.
\end{enumerate}
Note that in the second case, $I = \dwc \beta = \dwc_< (\beta+1) =
\bm{\beta+1}$. Thus every ideal of $(X, {\le})$ is a $\bm{\beta}$ for
some $\beta \in \bm{\alpha+1} \setminus {0}$, and ideal inclusion
coincides with the natural ordering on $\bm{\alpha+1}$.

Now, assuming that we can represent elements of $X$ in a way that
makes $\le$ decidable,  then $(X,{\le})$ is ideally effective. Indeed,
the representation is
easily extended to $(\bm{\alpha+1}, {\le})$ and one can thus decide ideal
inclusion. Intersections are computable as the maximum for filters,
as the minimum for ideals. Finally, complements of filters and ideals are computed as follows:
\begin{xalignat*}{2}
        \begin{array}{rl}
          \neg \upc 0 &= \emptyset \:,
        \\
         \neg \bm{\alpha} & = \emptyset \:,
        \end{array}
&&
\left.
        \begin{array}{rl}  
                \neg \upc \beta &= \bm{\beta}
        \\
                \neg \bm{\beta} &= \upc \beta 
        \end{array}\right\} \text{for } \beta \in \bm{\alpha} \:.
\end{xalignat*}

While the above applies to any recursive ordinal, the applications
that we are aware of usually only need ordinals below $\epsilon_0$,
for which the \emph{Cantor Normal Form} is well known and understood,
and leads to natural data structures~\cite{manolios05}.  One can push
this at least to all ordinals below the larger ordinal $\Gamma_0$
\cite{Gallier:Gamma0}.

Note that, when $\alpha = \omega$, the representation of ideals
differs from the representation for $Idl(\nat)$ proposed in
\cref{sec-naturals}: in one case we use $\dwc_< n$ while in the other
we use $\dwc n$. Both options are equivalent, leading to very similar
algorithms.  In \cref{sec-naturals} we adopted the representation
that has long been common in Petri nets tools.

\section{Constructing ideally effective \wqos}
\label{sec-basic-eff}

We now look at more complex \wqos.  In practice these are obtained by
combining simpler \wqos via well-known operations like Cartesian
product, sequences extension, etc.  Our strategy is thus to show that
these operations produce ideally effective \wqos when they are applied
to ideally effective \wqos.

\subsection{Ideally effective \wqo constructors}
\label{sec-idl-eff-constructors}

We shall provide generic (i.e.,
uniform) algorithms that manage filters and ideals of compound \wqos
by invoking the algorithms for the filters and ideals of their
components. This is made precise in \cref{def:idl-eff-constr}, and to
this end, we have to introduce the following notion:

{\leqnomode
\begin{definition}
\label{def-presentation}
A \emph{presentation} of an ideally effective \wqo $(X,{\le})$, is a
list of:
\begin{xalignat}{1}
\notag
\bm{-} & \text{ data structures for $X$ and $\Idl(X)$},
\\
\notag
\bm{-} & \text{ algorithms for the seven computable functions required by \cref{def:eff-wqo}},
\\
\tag{\textrm{XI}}\label{XI}
\bm{-} & \text{ the ideal decomposition $X=\bigcup_{i<n}I_i$ of $X$ as a
downwards-closed set},
\\
\tag{\textrm{XF}}\label{XF}
\bm{-} & \text{ as well as its filter decomposition  $X=\bigcup_{i<n'}F_i$}.
\end{xalignat}
\end{definition}
}
Obviously, a \wqo is ideally effective if and only if it has a
presentation as defined in \cref{def-presentation}.

The notion of presentations as actual objects is needed because
they are the actual inputs of our \wqo constructions.
This explains why we added \XI and \XF in the requirements. For a
given $(X,{\leq})$, the ideal and filter decompositions of $X$ always
exist and requiring them in \cref{def:eff-wqo} would make no sense.
However, these decompositions are needed by algorithms that
work uniformly on \wqos given via their presentations.
\\

Let us informally call \emph{order-theoretic constructor} (constructor
for short) any operation $C$ that produces a quasi-ordering
$C[(X_1, {\le_1}), \dots, (X_n,{\le_n})]$ from given quasi-orderings
$(X_1, {\le_1}), \dots, (X_n,{\le_n})$. In subsequent sections, $C$
will be instantiated with very well-known constructions, such as
Cartesian product with componentwise ordering, finite sequences with
Higman's ordering, finite sets with the Hoare quasi-ordering, and so
on.  In practice, we will always have $n=1$ or $2$. We also say that
an order-theoretic constructor preserves \wqo if
$C[(X_1, {\le_1}), \dots, (X_n,{\le_n})]$ is a \wqo whenever
$(X_1, {\le_1}), \dots, (X_n,{\le_n})$ are. The constructors we just
mentioned are well-known to be \wqo-preserving. We extend this concept
to ideally effective \wqos:
\begin{definition}
  \label{def:idl-eff-constr}
  An order-theoretic \wqo-preserving constructor $C$ is said to be
  \emph{ideally effective} if:
  \begin{itemize}
  \item It preserves ideal effectiveness, that is
    $C[(X_1, {\le_1}), \dots, (X_n,{\le_n})]$ is ideally effective
    when each $(X_i, {\le_i})$ is.
  \item A presentation of $C[(X_1, {\le_1}), \dots, (X_n,{\le_n})]$ is
    uniformly computable from presentations of the \wqos
    $(X_i,{\le_i})$ ($i=1, \ldots, n$).
\end{itemize}
\end{definition}

In the following sections, we proceed to prove that some of the most
prominent \wqo-preserving constructors are also ideally effective.

\subsection{Sums of \wqos}
\label{sec-sums}

We start with two simple constructions, disjoint sums and
lexicographic sums of \wqos.  They will be our first examples of
ideally effective constructors and will set the template for later
constructions.

\subsubsection{Disjoint sum.}
The \emph{disjoint sum} $X_1 \sqcup X_2$ of two \qos $(X_1, {\le_1})$
and $(X_2, {\le_2})$ is the set
$\{1\} \times X_1 \cup \{2\} \times X_2$, quasi-ordered by:
\begin{align*}
  \tupsum{i,x} \le_{\sqcup} \tupsum{j,y} \text{ iff } i=j \text{ and } x \le_i y
\:.
\end{align*}
We use $X_{\sqcup}$ to denote $X_1\sqcup X_2$ and generally use the
$\sqcup$ subscript to identify operations associated with the
structure $(X_{\sqcup},{\le_{\sqcup}})$. This structure is obviously well
quasi-ordered when $(X_1, {\le_1})$ and $(X_2, {\le_2})$ are.

We let the reader check the following characterization.
\begin{proposition}[Ideals of $\pmb{X_1\sqcup X_2}$]
\label{prop-ideals-sqcup}
Given $(X_1, {\le_1})$ and $(X_2,{\le_2})$ two \wqos,
the ideals of $(X_1 \sqcup X_2,{\le_{\sqcup}})$
are exactly the sets of the form
$I=\{i\}\times J$ with $i\in \{1,2\}$ and  $J$ an ideal of $X_i$.
\end{proposition}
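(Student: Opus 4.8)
The plan is to prove the characterization by double inclusion. First I would verify that every set of the claimed form is indeed an ideal of $X_\sqcup$, and then conversely show that every ideal of $X_\sqcup$ has that form. Throughout, the key point to keep in mind is that the disjoint sum places the two components into incomparable ``slots'': elements $\tupsum{i,x}$ and $\tupsum{j,y}$ with $i \neq j$ are never comparable under $\le_\sqcup$. This incomparability is what forces any directed set to live inside a single component, and it is the engine behind the whole argument.

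For the first inclusion, let $I = \{i\} \times J$ with $J$ an ideal of $X_i$. I would check the three defining properties of an ideal. Non-emptiness is immediate since $J$ is non-empty. For downward-closedness: if $\tupsum{j,y} \le_\sqcup \tupsum{i,x}$ with $\tupsum{i,x} \in I$, then by definition of $\le_\sqcup$ we must have $j = i$ and $y \le_i x$; since $J$ is downwards-closed in $X_i$ and $x \in J$, we get $y \in J$, hence $\tupsum{j,y} \in I$. For directedness: given $\tupsum{i,x_1}, \tupsum{i,x_2} \in I$, directedness of $J$ yields some $x \in J$ with $x_1, x_2 \le_i x$, and then $\tupsum{i,x}$ is an upper bound in $I$.

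For the converse, suppose $I$ is an ideal of $X_\sqcup$. The crucial step is to show that $I$ cannot meet both components. Indeed, if $I$ contained an element $\tupsum{1,x_1}$ and an element $\tupsum{2,x_2}$, then by directedness there would be some $\tupsum{j,z} \in I$ above both; but being above $\tupsum{1,x_1}$ forces $j = 1$, while being above $\tupsum{2,x_2}$ forces $j = 2$, a contradiction. Hence $I \subseteq \{i\} \times X_i$ for exactly one $i \in \{1,2\}$ (it is non-empty, so at least one, and the above argument shows not both). Writing $I = \{i\} \times J$ where $J = \{x \in X_i \mid \tupsum{i,x} \in I\}$, it then remains to transfer the ideal properties back down to $J$: non-emptiness, downward-closedness, and directedness of $J$ in $X_i$ all follow routinely from the corresponding properties of $I$, using that within a fixed component $\tupsum{i,x} \le_\sqcup \tupsum{i,y}$ is equivalent to $x \le_i y$.

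I do not expect any genuine obstacle here; the statement is a ``let the reader check'' proposition and the content is entirely routine. The only point requiring the slightest care is the directedness argument in the converse direction, where one must exploit the incomparability across components to rule out an ideal straddling both summands. Everything else is a direct unfolding of the definitions of $\le_\sqcup$ and of the notion of ideal.
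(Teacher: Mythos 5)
Your proof is correct and complete; the paper itself offers no proof here (it says ``We let the reader check the following characterization''), and your argument --- verifying the ideal axioms for $\{i\}\times J$, then using directedness plus the incomparability of the two summands to show an ideal of $X_1\sqcup X_2$ cannot straddle both components --- is exactly the routine verification the authors intend. No gaps.
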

Thus $(\Idl(X_1\sqcup X_2),{\subseteq})$ is isomorphic to
$(\Idl(X_1),{\subseteq})\sqcup (Idl(X_2),{\subseteq})$.

Given data structures for $X_1$ and $X_2$, we use the natural
data structure for $X_1 \sqcup X_2$. Moreover,
\cref{prop-ideals-sqcup} shows that ideals of the \wqo
$(X_1 \sqcup X_2, {\le_{\sqcup}})$ can similarly be represented using
data structure for $\Idl(X_1)$ and $\Idl(X_2)$.
\begin{theorem}
  With the above representations of elements and ideals, disjoint
  union is an ideally effective constructor.
\end{theorem}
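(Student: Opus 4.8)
The plan is to verify each of the seven computability requirements of \cref{def:eff-wqo}, together with the two decomposition requirements \XI and \XF of \cref{def-presentation}, showing that each operation on $X_{\sqcup}$ reduces to the corresponding operation on the component \wqos $X_1$ and $X_2$. The guiding principle throughout is the isomorphism $(\Idl(X_1\sqcup X_2),{\subseteq})\cong(\Idl(X_1),{\subseteq})\sqcup(\Idl(X_2),{\subseteq})$ recorded after \cref{prop-ideals-sqcup}: an ideal of $X_{\sqcup}$ is a tagged ideal $\{i\}\times J$ with $J\in\Idl(X_i)$, and inclusions only hold between ideals carrying the same tag. So every operation will first inspect tags, and then dispatch to the algorithm of the appropriate component.

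First I would dispatch the two decidability requirements. For \OD, given representations of $\tupsum{i,x}$ and $\tupsum{j,y}$, we return \texttt{false} if $i\neq j$ and otherwise invoke the (decidable) ordering $\le_i$ on $x,y$; this is decidable by \OD for $X_i$. For \ID, given ideals $\{i\}\times J$ and $\{j\}\times J'$, we return \texttt{false} if $i\neq j$ and otherwise invoke the decision procedure \ID of $X_i$ on $J\subseteq J'$. Next, \PI is immediate: $\dc\tupsum{i,x}=\{i\}\times(\dc_i x)$, so we tag the output of the \PI-algorithm of $X_i$. The two intersection operations \IF and \II follow the same pattern: the intersection of two filters (resp.\ ideals) with distinct tags is $\emptyset$, and with a common tag $i$ it is obtained by tagging the result of the \IF (resp.\ \II) algorithm of $X_i$. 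Since that result is itself a finite union of filters (resp.\ ideals) of $X_i$, tagging each component gives the required finite union in $X_{\sqcup}$.

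The only genuinely interesting steps are the two complementation operations \CF and \CI, and these are where the decomposition data \XI and \XF earn their keep. Complementing a filter $\upc\tupsum{i,x}=\{i\}\times(\upc_i x)$ is \emph{not} simply the tagged complement, because the complement in $X_{\sqcup}$ must also include the whole of the other summand $X_j$ ($j\neq i$). Concretely, $\neg\,\upc\tupsum{i,x}=\bigl(\{i\}\times\neg\,\upc_i x\bigr)\cup\bigl(\{j\}\times X_j\bigr)$, where $\neg\,\upc_i x$ comes from \CF of $X_i$ as a finite union of ideals, and where we represent the full summand $\{j\}\times X_j$ as a finite union of ideals by tagging the ideal decomposition \XI of $X_j$. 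Dually, $\neg\,\{i\}\times J=\bigl(\{i\}\times\neg_i J\bigr)\cup\bigl(\{j\}\times X_j\bigr)$, where $\neg_i J$ is the filter decomposition returned by \CI of $X_i$, and the full summand $\{j\}\times X_j$ is expressed as a finite union of filters by tagging the filter decomposition \XF of $X_j$. I expect this to be the main obstacle to state cleanly, precisely because it is the one place where the algorithm cannot blindly delegate to a single component but must also fabricate a representation of the \emph{entire} other summand; this is exactly why presentations (and not merely ideal effectiveness) are the correct inputs to constructors. Finally, a presentation of $X_{\sqcup}$ must itself supply \XI and \XF: taking the union of the tagged decompositions of the components, $X_{\sqcup}=\bigl(\{1\}\times X_1\bigr)\cup\bigl(\{2\}\times X_2\bigr)$, we obtain $\bigcup_{i}\{1\}\times I^{(1)}_i \cup \bigcup_{i}\{2\}\times I^{(2)}_i$ as an ideal decomposition of $X_{\sqcup}$, and symmetrically for the filter decomposition, each computed uniformly from the given presentations of $X_1$ and $X_2$. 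All steps being effective and uniform in the input presentations, $\sqcup$ is an ideally effective constructor.
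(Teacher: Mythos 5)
Your proof is correct and follows essentially the same route as the paper's: tag-dispatching for \OD, \ID, \PI, \IF, \II, and the key observation that \CF and \CI must adjoin the whole other summand $\{\ibar\}\times X_{\ibar}$ via the decomposition data \XI and \XF, with the decompositions of $X_{\sqcup}$ obtained as tagged unions of the components' decompositions. You merely spell out explicitly the \CI case that the paper dismisses as ``analogous.''
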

\begin{proof}[Sketch]
  Let $(X_1, {\le_1})$ and $(X_2, {\le_2})$ be two ideally effective
  \wqos.

  In the following, we write $\ibar$ for $3-i$ when $i\in\{1,2\}$, so
  that $\{i,\ibar\}=\{1,2\}$.  We also abuse notation and, for a
  downwards-closed subset $D=\bigcup_a I_a$ of $X_i$, we write
  $\tupsum{i,D}$ to denote $\bigcup_a \tupsum{i,I_a}$, a
  downwards-closed subset of $X_{\sqcup}$ represented via
  ideals. Similarly, for an upwards-closed subset
  $U=\bigcup_a \upc_{i} x_a$ of $X_i$, we let $\tupsum{i,U}$ denote
  $\bigcup_a \upc_{\sqcup}\tupsum{i,x_a}$.

  \begin{description}
  \item[\OD:] the definition of $\le_{\sqcup}$ is already an implementation.
  \item[\ID:] we use
    $\tupsum{i,J}\subseteq\tupsum{i',J'} \iff i=i'\land J\subseteq
    J'$.
  \item[\PI:] we use $\dwc_{\sqcup}\tupsum{i,x}=\tupsum{i,\dwc_{i}x}$
    for $i\in\{1,2\}$.
  \item[\CF:] we use
    $X_{\sqcup}\setminus\upc_{\sqcup}\tupsum{i,x}
    =\tupsum{i,X_i\setminus\upc_{i} x} \cup \tupsum{\ibar,X_{\ibar}}$.
    Note that this relies on \CF for $X_i$ (to express
    $X_i\setminus \upc_{i}x$ as a union of ideals) and on \XI for
    $X_{\ibar}$.
  \item[\II:] we rely on \II for $X_1$ and $X_2$, using
    \[
      \tupsum{i,I}\cap \tupsum{j,J} =
      \begin{cases}
        \tupsum{i,I\cap J} & \text{if $i=j$,}
        \\
        \emptyset &\text{otherwise.}
      \end{cases}
    \]
  \end{description}
  Operations \CI to complement ideals and \IF to intersect filters are
  analogous.

  Observe that the presentation of $(X_1 \sqcup X_2, {\le_{\sqcup}})$
  described above is clearly computable from presentations for
  $(X_i,{\le_i})$ ($i=1,2$). Notably, a filter (resp.\ ideal)
  decomposition of $X_1 \sqcup X_2$ is easily obtained by taking the
  union of filter (resp.\ ideal) decompositions of $X_1$ and $X_2$,
  thus establishing \XF (resp.\ \XI).
  \qed
\end{proof}

\subsubsection{Lexicographic sums.}
The \emph{lexicographic sum}  $X_1 \oplus X_2$ \footnote{A warning
about notation: the lexicographic sum should not be confused with the
natural sum of ordinals even if they are both denoted with
$\oplus$. In particular, the lexicographic sum of ordinals is their
usual addition.} of two \qos $(X_1, {\le_1})$, $(X_2, {\le_2})$ is the
\qo $(X_{\oplus}, {\le_{\oplus}})$ given by $X_{\oplus}= \{1\} \times X_1
\cup \{2\} \times X_2$ and
\begin{align*}
  \tupsum{i,x} \le_{\oplus} \tupsum{j,y} \text{ iff } i < j \text{ or } (i=j \text{ and } x \le_i y)
\:.
\end{align*}
Therefore $X_1\oplus X_2$ and $X_1\sqcup X_2$ share the same
underlying set. The ordering $\le_{\oplus}$ is an extension of
$\leq_{\sqcup}$ hence is a \wqo too, when $(X_1, {\le_1})$ and
$(X_2,{\le_2})$ are.

Again, the following characterization is easy to obtain.
\begin{proposition}[Ideals of $\pmb{X_1\oplus X_2}$]
  Given two \wqos $(X_1, {\le_1})$ and $(X_2,{\le_2})$, the ideals of
  $X_1\oplus X_2$ are exactly the sets of the form $\{1\}\times J_1$
  with $J_1\in\Idl(X_1)$, or of the form
  $\{1\}\times X_1\cup\{2\}\times J_2$ with $J_2\in\Idl(X_2)$.
\end{proposition}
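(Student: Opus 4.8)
The plan is to characterize the ideals of $X_1 \oplus X_2$ by exploiting the two defining properties of ideals—being directed and downwards-closed—against the specific structure of the lexicographic ordering. The key structural observation is that in $\le_\oplus$, \emph{every} element of the first component $\{1\}\times X_1$ lies strictly below \emph{every} element of the second component $\{2\}\times X_2$. This asymmetry is what makes the characterization differ from the disjoint sum case: a downwards-closed directed set that reaches into the second component is forced to contain all of the first component.

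First I would verify that the two proposed families are indeed ideals. For a set $I = \{1\}\times J_1$ with $J_1 \in \Idl(X_1)$, directedness and downward-closure transfer directly from $J_1$, since $\le_\oplus$ restricted to the first component coincides with $\le_1$ and no element of the second component lies below a first-component element. For a set $I = \{1\}\times X_1 \cup \{2\}\times J_2$ with $J_2 \in \Idl(X_2)$, downward-closure follows because everything below a point $\tupsum{2,y}$ is either some $\tupsum{1,x}$ (all of which are present) or some $\tupsum{2,y'}$ with $y' \le_2 y$ (present because $J_2$ is downwards-closed); directedness follows by cases, using that any two first-component points can be bounded inside $J_2$'s elements (indeed bounded by any second-component point present) and that two second-component points are bounded via directedness of $J_2$.

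The substantive direction is the converse: every ideal $I$ has one of these two forms. I would split on whether $I$ meets the second component. If $I \subseteq \{1\}\times X_1$, then $J_1 := \{x \mid \tupsum{1,x}\in I\}$ is nonempty, downwards-closed, and directed in $X_1$ (all inherited, again because the embedding of $X_1$ into the first component is order-faithful and order-convex), so $I = \{1\}\times J_1$ is of the first form. If instead $I$ contains some point $\tupsum{2,y_0}$, then downward-closure forces $\{1\}\times X_1 \subseteq I$, since every $\tupsum{1,x} <_\oplus \tupsum{2,y_0}$; setting $J_2 := \{y \mid \tupsum{2,y}\in I\}$, which is nonempty, one checks $J_2$ is downwards-closed and directed in $X_2$, yielding the second form. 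The main obstacle to watch is the directedness check for $J_2$: two points $\tupsum{2,y},\tupsum{2,y'}$ in $I$ must have a common upper bound in $I$, and one must confirm this bound lands in the second component rather than the first—but that is immediate since any upper bound of $\tupsum{2,y}$ under $\le_\oplus$ must itself be a second-component element, so directedness of $I$ descends cleanly to $J_2$.

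<br>

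I would close by noting the isomorphism this induces: $(\Idl(X_1\oplus X_2),{\subseteq})$ is order-isomorphic to the lexicographic sum $(\Idl(X_1),{\subseteq}) \oplus (\widehat{X_2})$, reflecting that the first-form ideals sit strictly below all second-form ideals, and second-form ideals are ordered exactly as their $J_2$ components. This parallels the disjoint-sum result and sets up the effectiveness proof to follow.
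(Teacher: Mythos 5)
Your proof is correct. The paper itself gives no proof of this proposition (it is introduced with ``the following characterization is easy to obtain''), and your argument---checking directly that both families are non-empty, downwards-closed and directed, then splitting on whether the ideal meets $\{2\}\times X_2$ and using downward-closure to force $\{1\}\times X_1\subseteq I$ in that case---is exactly the routine verification the authors intend, and your closing isomorphism matches the one stated in the paper right after the proposition.
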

Thus $(\Idl(X_1 \oplus X_2), {\subseteq})$ is isomorphic to
$(\Idl(X_1),{\subseteq}) \oplus (\Idl(X_2),{\subseteq})$, which leads to a
simple data structure for the set of ideals\footnote{
   Note that with this representation, a pair $\tupsum{i,J}$ where
   $J\in\Idl(X_i)$ denotes $\{1\}\times J$ when $i=1$, and
   $\{1\}\times X_1\cup \{2\}\times J$ ---and not $\{2\}\times J$--- when $i=2$.
}
when $X_1$ and $X_2$ are effective.

\begin{theorem}
  With the above representations, lexicographic union is an ideally
  effective constructor.
\end{theorem}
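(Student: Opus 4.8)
The plan is to follow the disjoint-sum argument almost verbatim, since $X_\oplus$ and $X_\sqcup$ share the same carrier and the same element representation, differing only in the ordering and in the shape of their ideals. As there, I would abuse notation so that, for a downwards-closed $D=\bigcup_a I_a\subseteq X_i$, the expression $\tupsum{i,D}$ denotes $\bigcup_a\tupsum{i,I_a}$, a downwards-closed subset of $X_\oplus$ built from ideals. The one piece of bookkeeping that drives everything is the representation convention: a level-$2$ ideal $\tupsum{2,J}$ denotes $\{1\}\times X_1\cup\{2\}\times J$ (not $\{2\}\times J$), and dually the filter $\upc_{\oplus}\tupsum{1,x}=\{1\}\times\upc_1 x\cup\{2\}\times X_2$ spills over the entire second level. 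Every formula below is just a transcription of the ideal characterization through this convention.

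The first three requirements are immediate. \OD\ is the definition of $\le_{\oplus}$ together with \OD\ for the components. \ID\ is $\tupsum{i,J}\subseteq\tupsum{i',J'}\iff i<i'\lor(i=i'\land J\subseteq J')$, i.e.\ the lexicographic order on $\Idl(X_1)\oplus\Idl(X_2)$, decided using \ID\ of the $X_i$. For \PI\ one checks that $\dwc_{\oplus}\tupsum{i,x}=\tupsum{i,\dwc_i x}$ holds for both $i=1$ and $i=2$, the case $i=2$ being correct precisely because the convention already absorbs $\{1\}\times X_1$. The four remaining operations \II, \IF, \CI, \CF\ split into the three cases $i=j=1$, $i\ne j$, $i=j=2$: the mixed case collapses to a single one of the two arguments, while each like-level case invokes the corresponding component operation on $X_1$ or $X_2$ and re-embeds the resulting finite union of ideals (resp.\ filters) at the same level.

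The one genuine subtlety, and the main obstacle, is that several of these re-embeddings silently lose an entire level when a component operation returns $\emptyset$, precisely because of the asymmetric convention. On the one hand, a downwards-closed output assembled from level-$2$ arguments always contains the block $\{1\}\times X_1$: for instance $\tupsum{2,I}\cap\tupsum{2,J}=\{1\}\times X_1\cup\{2\}\times(I\cap J)$, so if $I\cap J=\emptyset$ (which does occur, e.g.\ when $X_2$ is itself a disjoint sum) the naive union over $I\cap J=\bigcup_a K_a$ collapses to $\emptyset$ and drops $\{1\}\times X_1$; likewise the complement of $\upc_{\oplus}\tupsum{2,x}=\{2\}\times\upc_2 x$, namely $\{1\}\times X_1\cup\{2\}\times(X_2\setminus\upc_2 x)$, must retain $\{1\}\times X_1$ even when $\upc_2 x=X_2$. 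Dually, an upwards-closed output assembled from level-$1$ arguments always contains $\{2\}\times X_2$, which is lost whenever the underlying operation on $X_1$ returns $\emptyset$ (as in $\upc_{\oplus}\tupsum{1,x}\cap\upc_{\oplus}\tupsum{1,y}$ when $\upc_1 x\cap\upc_1 y=\emptyset$). This is exactly what the decomposition data \XI\ and \XF\ are for: I would pad every downwards-closed output that could drop $\{1\}\times X_1$ with the ideal decomposition $X_1=\bigcup_k I_k$ from \XI\ of $X_1$, writing e.g.\ $\tupsum{2,I}\cap\tupsum{2,J}=\bigcup_k\tupsum{1,I_k}\cup\bigcup_a\tupsum{2,K_a}$, and pad every upwards-closed output that could drop $\{2\}\times X_2$ with the filter decomposition $X_2=\bigcup_m\upc_2 f_m$ from \XF\ of $X_2$; in each case the padding summands are redundant exactly when the component operation is non-trivial and indispensable otherwise, so the formulas are correct uniformly. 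Finally \XI\ and \XF\ for $X_\oplus$ itself are of the same shape, $X_\oplus=\bigcup_m\tupsum{2,J_m}$ from the ideal decomposition $X_2=\bigcup_m J_m$ and $X_\oplus=\bigcup_k\upc_{\oplus}\tupsum{1,g_k}$ from the filter decomposition $X_1=\bigcup_k\upc_1 g_k$, and every step is plainly uniform in the two given presentations, which establishes that lexicographic sum is an ideally effective constructor.
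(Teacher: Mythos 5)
Your proposal is correct and takes essentially the same route as the paper's proof: the same representation convention (with $\tupsum{2,J}$ absorbing $\{1\}\times X_1$), the same lexicographic inclusion test and principal-ideal map, and the same component-wise case analysis for \CF, \II and their duals, with the ideal decomposition of $X_1$ (from \XI) and the filter decomposition of $X_2$ (from \XF) supplying exactly the blocks that would otherwise be dropped. The only difference is cosmetic: the paper treats the degenerate situations ($I\cap J=\emptyset$, $\upc_2 x=X_2$, \ldots) as explicit exception cases returning $\tupsum{1,X_1}$, whereas you pad every output uniformly with the relevant decomposition, which is equivalent since decompositions are not required to be canonical.
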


\begin{proof}[Sketch]
  We reuse the abbreviations $\tupsum{i,U}$, $\tupsum{i,D}$, $\ibar$,~\ldots, 
  introduced for disjoint sums.  Also, we only consider the
  case where both $X_1$ and $X_2$ are non-empty (the claim is trivial
  otherwise).
  
  \begin{description}
  \item[\OD:] follows from the definition.
  \item[\ID:] ideal inclusion can be tested as the lexicographic sum
    of $\Idl(X_1)$ and $\Idl(X_2)$.
  \item[\PI:] $\dwc_{\oplus}\tupsum{i,x}$ is (represented by)
    $\tupsum{i,\dwc_{i} x}$.
  \item[\CF:] the complement
    $X_{\oplus}\setminus\upc_{\oplus}\tupsum{i,x}$ is (represented by)
    $\tupsum{i,X_i\setminus\upc_i x}$ except when $i=2$ and
    $\upc_i x=X_2$, in which case
    $X_{\oplus}\setminus\upc_{\oplus}\tupsum{2,x}$ is
    $\tupsum{1,X_1}$.
  \item[\II:] intersection of two ideals considers two cases.  First
    $\tupsum{1,I}\cap \tupsum{2,J}$ is (represented by) $\tupsum{1,I}$
    for ideals issued from different components in $X_{\oplus}$. For
    $\tupsum{i,I}\cap \tupsum{i,J}$, i.e., ideals issued from the same
    component, we use $\tupsum{i,I\cap J}$ except when $i=2$ and
    $I\cap J=\emptyset$, in which case $\tupsum{2,I}\cap \tupsum{2,J}$
    is $\tupsum{1,X_1}$.
  \end{description}
  Procedures for the dual operations \CI and \IF are similar.
  Moreover, the presentation above is obviously computable from
  presentations for $(X_1, {\le_1})$ and $(X_2, {\le_2})$. Regarding
  \XI and \XF, the ideal decomposition of $X_1 \oplus X_2$ is the
  ideal decomposition of $X_2$ and the filter decomposition of
  $X_1 \oplus X_2$ is the filter decomposition of $X_1$.
  \qed
\end{proof}

 \subsection{Products of \wqos and Dickson's Lemma}
\label{sec-products}

Given two \qos $(X_1,{\leq_1})$ and $(X_2,{\leq_2})$,
we define the \emph{componentwise quasi-ordering} $\le_{\times}$ on
the Cartesian product $X_1\times X_2$ by
$\tup{x_1,x_2}\leq_{\times}\tup{y_1,y_2} \equivdef x_1\leq_1 y_1\land
x_2\leq_2 y_2$.

\begin{lemma}[Dickson's Lemma]
  If $X_1$ and $X_2$ are \wqos, so is $X_1 \times X_2$.
\end{lemma}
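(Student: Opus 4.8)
The plan is to prove Dickson's Lemma using the infinite-monotonic-subsequence characterization of \wqos stated earlier in the excerpt, namely that $(X,{\le})$ is \wqo if and only if every infinite sequence contains an infinite monotonically nondecreasing subsequence. This characterization is by far the most convenient route for products, since the componentwise ordering interacts cleanly with coordinatewise monotonicity: a sequence of pairs is nondecreasing in $\le_{\times}$ precisely when each coordinate sequence is nondecreasing.

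Concretely, let $(z_k)_{k\in\nat}$ be an arbitrary infinite sequence in $X_1\times X_2$, and write $z_k=\tup{x_k,y_k}$. First I would apply the characterization to the first-coordinate sequence $(x_k)_{k\in\nat}$ in the \wqo $X_1$: this yields an infinite set of indices $k_0<k_1<k_2<\cdots$ along which the first coordinates are nondecreasing, i.e.\ $x_{k_0}\le_1 x_{k_1}\le_1 x_{k_2}\le_1\cdots$. Next I would restrict attention to the corresponding second-coordinate subsequence $(y_{k_j})_{j\in\nat}$, which is again an infinite sequence, now in the \wqo $X_2$. Applying the characterization a second time gives a further infinite set of indices $k_{j_0}<k_{j_1}<k_{j_2}<\cdots$ along which the second coordinates are nondecreasing. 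Because we are only thinning out an already-nondecreasing subsequence in the first coordinate, monotonicity of the first coordinate is preserved under this further restriction.

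The upshot is an infinite subsequence $(z_{k_{j_m}})_{m\in\nat}$ that is simultaneously nondecreasing in both coordinates, hence nondecreasing in $\le_{\times}$ by the definition $\tup{x_1,x_2}\le_{\times}\tup{y_1,y_2}\equivdef x_1\le_1 y_1\land x_2\le_2 y_2$. This exhibits the required infinite monotonic subsequence, so $X_1\times X_2$ is \wqo.

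I do not expect any serious obstacle here; the argument is the standard ``extract, then extract again'' pattern. The one point deserving care is that the two extractions must be composed in the correct order: one first stabilizes the first coordinate and only then refines within that subsequence to stabilize the second, so that the final indices form a genuine subsequence of the original and monotonicity in the first coordinate survives the second pass. An equally valid alternative would be to argue directly from the antichain-plus-well-founded definition, but that route requires separately ruling out infinite strictly decreasing sequences and infinite antichains in the product, which is more cumbersome than the single clean application of the monotonic-subsequence characterization.
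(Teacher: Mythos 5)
Your proof is correct and is exactly the argument sketched in the paper: extract an infinite subsequence monotonic in the first component, then extract again within it to make the second component monotonic, and conclude via the infinite-monotonic-subsequence characterization of \wqos. The paper only gives this as a one-line idea; your write-up fills in the same argument with the appropriate care about composing the two extractions.
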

\begin{proof}[Idea]
  Given an infinite sequence in $X_1 \times X_2$, we extract an
  infinite sequence which is monotonic in the first component, and
  from that, an infinite sequence that is monotonic in the second
  component.
\qed
\end{proof}

The  ideals of $(X_1 \times X_2,{\le_{\times}})$ are well known.
\begin{proposition}[Ideals of $\pmb{X_1\times X_2}$]
  \label{lem-ideals-of-product}
  Let $(X_1, {\le_1})$ and $(X_2,{\le_2})$ be two \wqos.
A subset $I$ is an ideal of $X_1\times X_2$ if, and only if,
$I=I_1\times I_2$ for some ideals $I_1,I_2$ of $X_1$ and $X_2$ respectively.
\end{proposition}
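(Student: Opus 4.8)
The plan is to prove both implications by working with the two coordinate projections. For the easy direction $(\Leftarrow)$, I would simply verify the three defining properties of an ideal for $I_1 \times I_2$. Non-emptiness is immediate since $I_1$ and $I_2$ are non-empty. Downward-closure follows coordinatewise: if $(y_1,y_2) \le_{\times} (x_1,x_2) \in I_1\times I_2$ then $y_1 \le_1 x_1 \in I_1$ and $y_2 \le_2 x_2 \in I_2$, so each coordinate lands back in its ideal. For directedness, given two elements of $I_1 \times I_2$, I would use directedness of $I_1$ to bound the first coordinates and directedness of $I_2$ to bound the second coordinates independently, then pair the two bounds into a single element of $I_1\times I_2$.

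For the interesting direction $(\Rightarrow)$, suppose $I$ is an ideal of $X_1\times X_2$. The natural candidates for $I_1,I_2$ are the projections $I_1 = \{x_1 \mid \exists x_2,\ (x_1,x_2)\in I\}$ and $I_2 = \{x_2 \mid \exists x_1,\ (x_1,x_2)\in I\}$. First I would check that each projection is an ideal. Non-emptiness of both transfers directly from non-emptiness of $I$. For downward-closure of $I_1$, if $x_1 \in I_1$ with witness $(x_1,x_2)\in I$ and $y_1 \le_1 x_1$, then $(y_1,x_2) \le_{\times} (x_1,x_2)$, so $(y_1,x_2)\in I$ by downward-closure of $I$, giving $y_1 \in I_1$. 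For directedness of $I_1$, given $x_1,x_1'\in I_1$ with witnesses $(x_1,x_2),(x_1',x_2')\in I$, directedness of $I$ furnishes $(z_1,z_2)\in I$ above both, and then $z_1\in I_1$ dominates $x_1$ and $x_1'$. The argument for $I_2$ is symmetric.

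It then remains to show $I = I_1\times I_2$. The inclusion $I \subseteq I_1\times I_2$ is immediate from the definition of the projections. The reverse inclusion is the step I expect to be the main obstacle, since it is precisely where directedness of $I$ does essential work: mere downward-closure would not suffice, as a downwards-closed set need not coincide with the product of its projections. Given $x_1\in I_1$ and $x_2\in I_2$, I would pick witnesses $(x_1,y_2)\in I$ and $(y_1,x_2)\in I$. By directedness of $I$ there is $(z_1,z_2)\in I$ with $(x_1,y_2),(y_1,x_2) \le_{\times} (z_1,z_2)$; in particular $x_1 \le_1 z_1$ and $x_2 \le_2 z_2$, so $(x_1,x_2) \le_{\times} (z_1,z_2)$. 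Downward-closure of $I$ then yields $(x_1,x_2)\in I$, completing the reverse inclusion and hence the proof.
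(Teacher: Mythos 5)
Your proof is correct and follows essentially the same route as the paper's: the easy direction is verified coordinatewise, and for the converse you take the two projections, show they are ideals, and recover $I = I_1\times I_2$ by combining the two witnesses via directedness of $I$ and then applying downward-closure. The paper's argument is identical, merely stated more tersely (it asserts without detail that the projections inherit the three ideal properties from $I$).
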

\begin{proof}
$(\DaG)$: One checks that $I=I_1\times I_2$ is non-empty,
  downwards-closed, and directed, when $I_1$ and $I_2$ are. For
  directedness, we consider two elements
  $\tup{x_1,x_2},\tup{y_1,y_2}\in I$. Since $I_1$ is directed and
  contains $x_1,y_1$, it contains some $z_1$ with $x_1\leq_1 z_1$ and
  $y_1\leq_1 z_1$. Similarly $I_2$ contains some $z_2$ above $x_2$ and
  $y_2$ (wrt.\ $\leq_2$). Finally, $\tup{z_1,z_2}$ is in $I$, and
  above both $\tup{x_1,y_1}$ and $\tup{x_2,y_2}$.

  \noindent
  $(\GaD)$: Consider $I\in\Idl(X_1\times X_2)$ and write $I_1$
  and $I_2$ for its projections on $X_1$ and $X_2$. These projections
  are downwards-closed (since $I$ is), non-empty (since $I$ is) and
  directed (since $I$ is), hence they are ideals (in $X_1$ and $X_2$).
  We now show that $I_1\times I_2\subseteq I$. Consider an arbitrary
  $x_1\in I_1$: since $I_1$ is the projection of $I$, there is some
  $y_2\in X_2$ such that $\tup{x_1,y_2}\in I$. Similarly, for any
  $x_2\in I_2$, there is some $y_1\in X_1$ such that
  $\tup{y_1,x_2}\in I$. Since $I$ is directed, there is some
  $\tup{z_1,z_2}\in I$ with $\tup{x_1,y_2}\leq_{\times} \tup{z_1,z_2}$
  and $\tup{y_1,x_2}\leq_{\times} \tup{z_1,z_2}$. But then
  $x_1\leq_1 z_1$ and $x_2\leq_2 z_2$. Thus $\tup{x_1,x_2}\in I$ since
  $I$ contains $\tup{z_1,z_2}$ and is downwards-closed. Hence
  $I=I_1\times I_2$ and $I$ is a product of ideals.
\qed
\end{proof}

Thus $\Idl(X_1\times X_2,{\subseteq})$ is isomorphic to
$(\Idl(X_1),{\subseteq})\times (\Idl(X_2),{\subseteq})$.  If
$(X_1,{\le_1})$ and $(X_2,{\le_2})$ are ideally effective, we
naturally represent elements of $X_1 \times X_2$ as pairs of elements
of $X_1$ and $X_2$, and similarly ideals of
$(X_1 \times X_2, {\le_{\times}})$ as pairs of ideals of $X_1$ and
$X_2$. This is notably how we handled $\Idl(\Nat^2)$ in
\cref{sec-dwc-nat}.

\begin{theorem}
With the above representations, Cartesian product is an ideally
effective constructor.
\end{theorem}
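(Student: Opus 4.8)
The plan is to follow the template set by the sum constructors, using \cref{lem-ideals-of-product} together with its filter analogue. Since every filter in a \wqo is principal, a filter of $X_1\times X_2$ has the form $\upc_{\times}\tup{x_1,x_2}=\upc_1 x_1\times\upc_2 x_2$, so filters are products as well and I would represent them, like elements, as pairs, while an ideal is represented by a pair $\tup{I_1,I_2}$ of component ideals denoting $I_1\times I_2$. The guiding principle throughout is that the Cartesian product distributes over finite unions, so a ``product of two finite unions'' can always be rewritten as a ``finite union of products'', which is exactly the required data structure for $\Up(X_1\times X_2)$ or $\Down(X_1\times X_2)$.

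First I would dispatch the operations that act componentwise. \OD is immediate from the definition of $\le_{\times}$. For \ID, since ideals are non-empty one has $I_1\times I_2\subseteq I_1'\times I_2'$ if and only if $I_1\subseteq I_1'$ and $I_2\subseteq I_2'$, so inclusion is tested by calling \ID for $X_1$ and $X_2$. \PI uses $\dc_{\times}\tup{x_1,x_2}=\dc_1 x_1\times\dc_2 x_2$. For \II and \IF I would use that product commutes with intersection, e.g.\ $(I_1\times I_2)\cap(I_1'\times I_2')=(I_1\cap I_1')\times(I_2\cap I_2')$; since \II (resp.\ \IF) for the components returns a \emph{finite union} of ideals (resp.\ filters), I then distribute the product over these unions to land in $\Down(X_1\times X_2)$ (resp.\ $\Up(X_1\times X_2)$).

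The step carrying the real content is complementation, where I would use the identity $\neg(A\times B)=(\neg A\times X_2)\cup(X_1\times\neg B)$ in $X_1\times X_2$. For \CF, the argument $A\times B$ is a filter, $\neg A=X_1\setminus\upc_1 x_1$ is a finite union of ideals by \CF for $X_1$, and to turn $\neg A\times X_2$ into a union of ideals of the product I must also express $X_2$ as a union of ideals---this is precisely the ideal decomposition \XI of $X_2$. Distributing over both unions, and treating the symmetric term the same way, yields \CF. Dually, \CI complements an ideal $I_1\times I_2$: here $\neg I_1$ is a finite union of filters by \CI for $X_1$, and $X_2$ is replaced by its filter decomposition \XF, so after distribution $\neg(I_1\times I_2)$ is delivered as a finite union of filters. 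This is the one place where the auxiliary data \XI and \XF of a presentation (\cref{def-presentation}) is genuinely needed, exactly as in the disjoint-sum proof.

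It remains to provide the presentation data for the product itself. If $X_1=\bigcup_i I_i$ and $X_2=\bigcup_j J_j$ are the component ideal decompositions, then $X_1\times X_2=\bigcup_{i,j}I_i\times J_j$ gives \XI, and likewise the component filter decompositions give $X_1\times X_2=\bigcup_{i,j}\upc_1 u_i\times\upc_2 v_j$, establishing \XF. Since every formula above invokes only the algorithms recorded in the presentations of $(X_1,\le_1)$ and $(X_2,\le_2)$, the resulting presentation of $(X_1\times X_2,\le_{\times})$ is uniformly computable, as \cref{def:idl-eff-constr} demands. I anticipate no serious obstacle: the whole argument is routine once the distributivity of products over finite unions is isolated, and the only point requiring attention is the need to unfold the whole factors $X_1,X_2$ via their decompositions when complementing.
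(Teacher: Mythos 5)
Your proof is correct and matches the paper's own argument essentially step for step: the same componentwise identities for \OD, \ID, \PI and \II, the same complementation identity $\neg(A\times B)=(\neg A\times X_2)\cup(X_1\times\neg B)$ instantiated on filters and ideals, and the same distribution of products over finite unions to recover genuine decompositions, including for \XI and \XF of the product. The only difference is presentational: you spell out \CI, \IF and the precise role of the factor decompositions, which the paper compresses into ``procedures for the remaining operations are obtained similarly.''
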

\begin{proof}
  Let $D_1$ and $D_2$ be downwards-closed sets of $(X_1,{\le_1})$ and
  $(X_2,{\le_2})$ respectively, given by some ideal decompositions
  $D_1=\bigcup_i I_{1,i}$ and $D_2=\bigcup_j I_{2,j}$.  Then
  $D_1 \times D_2$ is downwards-closed in $X_1 \times X_2$, and it
  decomposes as $\bigcup_i\bigcup_j I_{1,i}\times I_{2,j}$ since products
  distribute over unions.  The same reasoning holds for upwards-closed
  sets and their filter decompositions and we rely on these properties
  in the following explanations.
  \begin{description}
  \item[\OD:] the ordering $\le_{\times}$ is obviously decidable.
  \item[\ID:] $I_1 \times I_2 \subseteq J_1 \times J_2$ iff
    $I_1 \subseteq J_1$ and $I_2 \subseteq J_2$ (exercise: the
    nonemptiness of ideals is required here).
  \item[\PI:] $\dwc \tup{x_1, x_2} = \dwc x_1 \times \dwc x_2$.
  \item[\II:] to compute intersections, use
    $(I_1 \times I_2) \cap (I'_1 \times I'_2) = (I_1 \cap I'_1) \times
    (I_2 \cap I'_2)$, and build the product of downwards-closed sets
    as explained above.
  \item[\CF:] to complement filters, use
    $(X_1\times X_2) \setminus \upc_{\times} \tup{x_1, x_2} =
    \bigl[(X_1 \setminus \upc x_1 ) \times X_2 \bigr] \cup \bigl[ X_1
    \times (X_2 \setminus \upc x_2) \bigr]$ and build products of
    downwards-closed sets.
  \end{description}
  Procedures for the remaining operations are obtained similarly. Note
  that here too, the presentation above is computable from
  presentations for $(X_1,{\le_1})$ and $(X_2,{\le_2})$.  Notably, a
  filter and ideal decomposition of $X_1 \times X_2$ is easily
  obtained from decompositions of $X_1$ and $X_2$, by distributing
  products over unions.
  \qed
\end{proof}

 \subsection{Sequence extensions of \wqos and Higman's Lemma}
\label{sec-sequences}

Given a \qo $(X,{\leq})$, we denote by $X^*$ the \emph{sequence
  extension} of $X$, i.e., the set of all finite sequences over $X$,
often called \emph{words} when $X$ is an alphabet.  We write
$\epsilon$ for the empty (zero-length) sequence, and denote
multiplicatively the concatenation of sequences, as $\bu \bv$ or
$\bu \cdot \bv$.  Elements of $X^*$ will be denoted in bold font, such
as $\bu, \bv,...$, while elements of $X$ are denoted $x,y,...$.  In
particular, if $x \in X$, then $\bx \in X^*$ denotes the sequence of
length one containing only the symbol $x$.

The set $X^*$ is often quasi-ordered with \emph{Higman's
  quasi-ordering} $\le_*$, also known as the \emph{sequence embedding}
quasi-ordering, defined by
$\bm{u}=x_1\cdots x_n\leq_* \bm{v}=y_1\cdots y_m$ $\equivdef$ there
are $n$ indices $1\leq p_1<p_2<\cdots <p_n\leq m$ such that
$x_i\leq y_{p_i}$ for each $i=1,\ldots,n$.  In other words, and
writing $[n]$ for the set $\{1, \cdots, n\}$, there is a strictly
increasing mapping $p$ from $[n]$ to $[m]$ such that
$x_i \le y_{p(i)}$.  Such a mapping will be called a \emph{witness} of
$\bm{u} \le_* \bm{v}$.  Equivalently, $\bm{u}\leq_*\bm{v}$ if $\bm{v}$
contains a length $n$ subsequence $\bm{v'}=y_{p_1}\cdots y_{p_n}$ such
that $\bm{u}\leq_{\times} \bm{v'}$ using the product ordering from
\cref{sec-products}.

The structure $(X^*,{\le_*})$ is sometimes called the \emph{Higman
  extension} of $(X,{\le})$. This constructor preserves \wqo: this is
Higman's Lemma \cite{Higman:Lemma}.

Showing that this constructor is ideally effective requires some work
and \cref{sec-sequences} is one of the longest in this chapter. This
is justified by the importance of this construction. Being generic,
our algorithms apply to non-trivial instances such as $(\nat^k)^*$
---used in Timed-arc nets~\cite{HSS-lics2012}, in data
nets~\cite{lazic2008}, for runs of Vector Addition
Systems~\cite{leroux2016}---, or $(\Sigma^*)^k\times(\Sigma^*)^*$
---used in Dynamic Lossy Channel Systems~\cite{AAC-fsttcs2012}---, or
to even richer settings like the Priority Channel Systems and the
Higher-Order Channel Systems of~\cite{HSS-lmcs}. The algorithms for
$(X^*,{\leq_*})$ are also invoked when showing ideal effectiveness of
many \wqos derived from $X^*$.  \\

Before we study the ideals of $X^*$, let us first lift the
concatenation of sequences to sets of sequences: the product (for
concatenation) of two sets of sequences $\bU, \bV \subseteq X^*$ is
denoted
${\bU \cdot \bV \egdef \{ \bu \cdot \bv \mid \bu \in \bU, \bv \in \bV
  \}}$.  A useful property of $(X^*,{\leq_*})$ is that the
concatenation of downwards-closed sets distributes over intersection:
\begin{lemma}
  \label{lem-concat-distrib-inter}
  Let $\bD_1,\bD_2,\bD\in\Down(X^*)$. Then
  $(\bD_1\cap\bD_2)\cdot\bD =(\bD_1\cdot\bD)\cap(\bD_2\cdot\bD)$.
\end{lemma}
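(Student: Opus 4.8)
The plan is to prove the two inclusions separately, noting that the hypotheses on downwards-closedness are needed only for one direction.

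For the inclusion $\subseteq$, I would argue that it holds for arbitrary subsets of $X^*$, with no appeal to closure properties. Indeed, any element of $(\bD_1\cap\bD_2)\cdot\bD$ has the form $\bu\bv$ with $\bu\in\bD_1\cap\bD_2$ and $\bv\in\bD$; reading this factorization as $\bu\in\bD_1,\bv\in\bD$ shows $\bu\bv\in\bD_1\cdot\bD$, and reading it as $\bu\in\bD_2,\bv\in\bD$ shows $\bu\bv\in\bD_2\cdot\bD$, so $\bu\bv$ lies in the intersection.

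For the reverse inclusion $\supseteq$, which is the substantive direction, I would take $\bw\in(\bD_1\cdot\bD)\cap(\bD_2\cdot\bD)$ and write down the two factorizations it provides: $\bw=\bu_1\bv_1=\bu_2\bv_2$ with $\bu_i\in\bD_i$ and $\bv_i\in\bD$. The key observation is that $\bu_1$ and $\bu_2$ are both \emph{prefixes} of $\bw$, hence one is a prefix of the other. Without loss of generality assume $\size{\bu_1}\le\size{\bu_2}$, so $\bu_1$ is a prefix of $\bu_2$. I would then retain the factorization $\bw=\bu_1\bv_1$: here $\bu_1\in\bD_1$ holds outright, and since a prefix is in particular a subword we have $\bu_1\le_*\bu_2$, so downwards-closedness of $\bD_2$ together with $\bu_2\in\bD_2$ yields $\bu_1\in\bD_2$. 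Thus $\bu_1\in\bD_1\cap\bD_2$ while $\bv_1\in\bD$, giving $\bw\in(\bD_1\cap\bD_2)\cdot\bD$ as required.

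The only real obstacle is that the two supplied factorizations of $\bw$ need not cut it at the same position, so one cannot directly merge them; the mismatch is resolved by keeping the factorization with the \emph{shorter} left factor and transporting membership in the other component along the prefix relation via downwards-closedness. (By symmetry, the case $\size{\bu_2}\le\size{\bu_1}$ is handled identically, swapping the roles of $\bD_1$ and $\bD_2$.)
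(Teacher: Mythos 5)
Your proof is correct and follows essentially the same route as the paper's: the easy inclusion by reading one factorization two ways, and for the converse, observing that the two left factors are prefixes of $\bw$, hence comparable under the prefix relation, and transporting membership along $\le_*$ via downwards-closedness of the component with the longer left factor. The only (valid) addition is your explicit remark that the $\subseteq$ direction and the closure hypothesis on $\bD$ are not actually needed, which the paper leaves implicit.
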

\begin{proof}
  The left-to-right inclusion is obvious. For the right-to-left
  inclusion, let $\bw \in \bD_1 \cdot \bD \cap \bD_2 \cdot \bD$. Then
  $\bw = \bu_1 \bv_1$ for some $\bu_1 \in \bD_1$ and $\bv_1 \in
  \bD$. Also, $\bw = \bu_2\bv_2$ for some $\bu_2 \in \bD_2$ and
  $\bv_2 \in \bD$. One of $\bu_1$ and $\bu_2$ is a prefix of the
  other. Assume $\bu_1$ is a prefix of $\bu_2$ (the other case is
  analogous). Since $\bD_2$ is downwards-closed and
  $\bu_1 \leq_* \bu_2$, $\bu_1 \in \bD_2$. Thus,
  $\bu_1 \in \bD_1 \cap \bD_2$ and
  $\bw = \bu_1 \bv_1 \in (\bD_1 \cap \bD_2) \cdot \bD$.  \qed
\end{proof}

The structure of ideals of $(X^*,{\le_*})$ is given in~\cite{kabil92}
where the following theorem is proved. An alternate proof is presented
in \cref{subsec-proof-of-idl-higman}.
\begin{theorem}[Ideals of $\pmb{X^*}$]
  \label{thm-idl-higman}
  Given a \wqo $(X,{\le})$, the ideals of $(X^*,{\leq_*})$ are exactly
  the finite products of {atoms}, of the form
  $\bP=\bA_1\cdot \bA_2\cdots \bA_n$ where \emph{atoms} are:
  \begin{itemize}
  \item any set of the form $\bA=D^*$, for $D\in\Down(X)$,
  \item any set of the form
    $\bA=I+\epsilon \egdef \{ \bx \mid x \in I \} \cup \{ \epsilon \}$,
    for $I\in\Idl(X)$.
  \end{itemize}
\end{theorem}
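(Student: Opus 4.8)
The plan is to prove the two inclusions separately. For $(\Leftarrow)$ I would first check that each atom is an ideal: both $D^*$ and $I+\epsilon$ contain $\epsilon$, are downwards-closed (a subword of a word all of whose letters lie in $D$ again has all letters in $D$; and the only nonempty words below some $\bx$ with $x\in I$ are the $\by$ with $y\le x$, so $y\in I$), and are directed (for $D^*$ take the concatenation $\bu\bv$ as a common upper bound of $\bu$ and $\bv$; for $I+\epsilon$ use directedness of $I$ in $X$). I would then isolate the one structural fact that does all the work: the concatenation $\bD_1\cdot\bD_2$ of two ideals of $X^*$ is again an ideal. Non-emptiness is clear; for downwards-closure, any $\bw\le_*\bu\bv$ with $\bu\in\bD_1$ and $\bv\in\bD_2$ splits at the concatenation point into $\bw=\bw_1\bw_2$ with $\bw_1\le_*\bu$ and $\bw_2\le_*\bv$ (cut $\bw$ according to which of its letters a witnessing embedding sends into $\bu$ versus $\bv$), whence $\bw_1\in\bD_1$ and $\bw_2\in\bD_2$; for directedness, given $\bu_1\bv_1,\bu_2\bv_2$ take $\bu\in\bD_1$ above $\bu_1,\bu_2$ and $\bv\in\bD_2$ above $\bv_1,\bv_2$, so that $\bu\bv$ dominates both. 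A straightforward induction then shows every finite product of atoms is an ideal.

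\textbf{Hard direction $(\Rightarrow)$, bounded case.} For $(\Rightarrow)$ I would first treat the case where the lengths of words in $\bI$ are bounded, say by $k=\max\{\size{\bu}\mid\bu\in\bI\}$ (a maximum exists, these being natural numbers). Let $M$ be the set of length-$k$ words of $\bI$. Here directedness is the engine: any common upper bound in $\bI$ of two length-$k$ words again has length exactly $k$ (an embedding cannot shorten), so that embedding is letter-for-letter, and $M$ is therefore directed for the product ordering $\le_{\times}$ on $X^k$. Consequently the set $J_i$ of $i$-th letters of words in $M$, closed downwards in $X$, is an ideal of $X$. Using once more that every word of $\bI$ embeds into some element of $M$ (a further appeal to directedness, now against a fixed element of $M$), I would establish $\bI=(J_1+\epsilon)\cdots(J_k+\epsilon)$, a finite product of atoms of the second kind. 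The empty product covers $\bI=\{\epsilon\}$ ($k=0$).

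\textbf{Hard direction $(\Rightarrow)$, unbounded case --- the obstacle.} The remaining case, where $\bI$ contains words of unbounded length, is where the $D^*$ atoms must be produced, and it is the main obstacle. My plan is to peel atoms off the front and recurse, using well-foundedness of $(\Down(X^*),\subseteq)$ from \cref{lem-ACC} for termination. Given $\bI\neq\{\epsilon\}$ I would extract a \emph{leading atom} in two stages: first a \emph{maximal} downwards-closed $\bD\subseteq X$ whose letters can be repeated arbitrarily often at the front of words of $\bI$, contributing a factor $\bD^*$; then the ideal $J$ of admissible next letters, contributing a factor $J+\epsilon$. One defines the residual $\bI'$ accordingly and checks that it is again an ideal with $\bI'\subsetneq\bI$; since there is no infinite strictly decreasing chain of downwards-closed sets, the peeling stops after finitely many steps and yields a finite product of atoms. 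The delicate points, carrying most of the effort, are (i) defining the residual so that it is genuinely an ideal, (ii) choosing $\bD$ maximal so that the ensuing $J+\epsilon$ peel is \emph{strict}---an absorbable letter wrongly left in $J$ would stall the recursion, since it could then be repeated forever---and (iii) verifying that these canonical choices reproduce exactly the two atom shapes. Reconciling the global directedness of $\bI$ with the need for a well-defined leftmost $\bD^*$-block is the crux of the argument.
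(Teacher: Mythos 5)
Your easy direction is correct, and your bounded-length case is a genuinely nice, self-contained argument (directedness forces common upper bounds of maximal-length words to embed letter-for-letter, yielding a product of atoms $(J_1+\epsilon)\cdots(J_k+\epsilon)$); neither appears in this form in the paper. But the unbounded case is exactly where the $D^*$ atoms and the entire difficulty of the theorem live, and what you offer there is a plan whose two pivotal steps fail on very simple examples. Take $X=\{a,b,c\}$ ordered by equality and $\bI=(\{c\}+\epsilon)\cdot\{a\}^*$, i.e., the words $a^n$ and $ca^n$. The maximal downwards-closed $D$ with $D^*\cdot\bI\subseteq\bI$ is $\emptyset$, and the set of first letters of nonempty words of $\bI$ is $\{a,c\}$, whose downward closure is a two-element antichain, not directed, hence not an ideal of $X$: your ``ideal $J$ of admissible next letters'' does not exist, and the correct leading atom $(\{c\}+\epsilon)$ cannot be read off from first letters, because a $(J+\epsilon)$ atom is optional, so letters contributed by later atoms leak to the front. (Under your literal phrasing ``letters repeatable arbitrarily often at the front'', $D$ would instead be $\{a\}$, which is worse: $\bI\neq\{a\}^*\cdot\bI'$ for any $\bI'\subseteq\bI$, since $c\in\bI'$ would force $ac\in\bI$.) The residual has the same problem: with $X=\{a,b\}$ and $\bI=\{a\}^*\cdot(\{b\}+\epsilon)\cdot\{a\}^*$ one has $D=\{a\}$, but the words of $\bI$ not beginning with a $D$-letter form $\{ba^n\mid n\ge 0\}\cup\{\epsilon\}$, which contains $ba$ yet not its subword $a$ --- not downwards-closed, hence not an ideal --- whereas the true residual $(\{b\}+\epsilon)\cdot\{a\}^*$ does contain words beginning with $a$. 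So no residual defined by inspecting leading letters can be right, and the points (i)--(iii) you defer are not finishing touches: they are the theorem.

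The paper proves the hard direction by a completely different route that sidesteps canonical extraction altogether, and this is the idea your proposal is missing. Given an ideal $\bI$ of $(X^*,{\le_*})$, its complement is upwards-closed, hence $\neg\bI=\bigcup_{i<n}\upc\bw_i$ by \cref{lem-FBP}, so $\bI=\bigcap_{i<n}\neg\upc\bw_i$. By \cref{lem-res-w-as-fup} each $\neg\upc\bw_i$ is a finite union of products of atoms, and by \cref{lem-inter-products} intersections of products of atoms are again finite unions of products of atoms; distributing, $\bI$ is a finite union of products of atoms. Since products of atoms are ideals (your easy direction) and ideals are prime (\cref{prop:prime-idl} together with \cref{lem-irreduc}), $\bI$ equals one of the products in that union. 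Note how primality does the work that your peeling was supposed to do: one never needs to know \emph{which} atom comes first, only that some finite union of products covers $\bI$. Salvaging your approach would essentially require re-deriving those two lemmas plus the uniqueness analysis of \cref{theo-ideal-uniq-dec}, i.e., considerably more machinery than the primality shortcut.
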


\subsubsection{Ideal effectiveness.}

The elements of $X^*$ will be represented in the natural way, e.g.,
via lists of elements of $X$ (assuming a data structure for $X$).
When $(X,{\leq})$ is ideally effective, \cref{thm-idl-higman} leads to
a natural data structure for ideals of $X^*$, as lists of atoms, where
the representation of atoms is directly inherited from those for
$\Idl(X)$ and $\Down(X)$.

\begin{theorem}
\label{thm-hig-idl-eff}
With the above representations, the sequence extension is an ideally
effective constructor.
\end{theorem}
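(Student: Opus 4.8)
The plan is to instantiate \cref{def:eff-wqo} directly, taking the atom-product description $\bP=\bA_1\cdots\bA_n$ of ideals from \cref{thm-idl-higman} as the data structure for $\Idl(X^*)$ (elements of $X^*$ being lists of elements of $X$), and to reduce every operation on $X^*$ to the corresponding operation on $(X,{\le})$. Several items are immediate. For \OD, $\bu\le_*\bv$ is tested by a single greedy leftmost scan, using \OD for $X$ to compare letters. For \PI, the $\le_*$-downward closure of $\bx=x_1\cdots x_k$ is the single atom-product $\dc_*\bx=(\dc x_1+\epsilon)\cdots(\dc x_k+\epsilon)$, each $\dc x_i$ being a principal ideal of $X$ produced by \PI for $X$. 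For the decompositions, $X^*$ is directed (any two words embed in their concatenation), hence a single ideal; as an atom-product $X^*=D^*$ with $D=X$ represented by the ideal decomposition $X=\bigcup_l I_l$ given by \XI for $X$, which establishes \XI; dually $X^*=\upc_*\epsilon$ is a single principal filter, establishing \XF. Unions cost nothing, as closed sets are stored as collections.

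The real work lies in the operations that manipulate atom-products. For \ID I would decide $\bA_1\cdots\bA_m\subseteq\bB_1\cdots\bB_n$ by a greedy recursion scanning $\bQ$ from the left: a starred atom $\bB_1=D^*$ may be skipped, or used to absorb the whole leading atom $\bA_1$ when $\bA_1\subseteq D^*$ (and then kept, since $D^*D^*=D^*$), while $\bB_1=I+\epsilon$ may be skipped or used to cover $\bA_1$ when $\bA_1\subseteq I+\epsilon$; all atom-level tests ($D^*\subseteq E^*$ iff $D\subseteq E$, $(I+\epsilon)\subseteq E^*$ iff $I\subseteq E$, etc.) reduce to inclusion on $\Down(X)$ and $\Idl(X)$, hence to \OD and \ID for $X$, and the measure $m+n$ ensures termination. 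For \II I would compute $\bP\cap\bQ$ as a finite union of atom-products by recursion on the leading atoms, whose algebraic core is $D^*\cap E^*=(D\cap E)^*$ together with the atomwise intersections (all reducing to \II for $X$) and the case split on which leading block is a prefix of the other, made sound by the distributivity of \cref{lem-concat-distrib-inter}. For \IF, the filter intersection $\upc_*\bu\cap\upc_*\bv$ is generated by the minimal common superwords of $\bu,\bv$, i.e.\ the generalized infiltration product, computed by a standard recursion that, whenever a letter $u_i$ and a letter $v_j$ are merged at one position, replaces them by the finite basis of $\upc u_i\cap\upc v_j$ obtained from \IF for $X$.

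The hard part will be the two complementations, because they must return the answer in the \emph{opposite} representation (ideals for \CF, filters for \CI), and the naive "does $\bu$ embed / does $w$ fit'' unfolding produces non-atom factors such as ``a letter $\ge u_i$'' or ``one arbitrary letter''. The key trick I would use for \CF is that enlarging such a constrained middle slot to an arbitrary-or-empty letter keeps the set inside the complement while still covering it, yielding the uniform recursion
\[
\neg\,\upc_*(u_1\cdots u_k)\;=\;D_1^{*}\;\cup\;D_1^{*}\cdot(X+\epsilon)\cdot\neg\,\upc_*(u_2\cdots u_k),
\]
with base case $\neg\,\upc_*\epsilon=\emptyset$, where $D_1=X\setminus\upc u_1$ comes from \CF for $X$ and $X+\epsilon=\bigcup_l(I_l+\epsilon)$ from the decomposition $X=\bigcup_l I_l$ given by \XI for $X$; distributing the unions turns the right-hand side into a genuine union of atom-products. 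For \CI I would dually compute a filter basis of $\neg\bP$ by peeling the leading atom and using left-quotient reasoning, the telling base cases being $\neg(D^{*})=\bigcup_g\upc_*\bx_g$, where $\{x_g\}$ is the filter basis of $X\setminus D=\bigcap_l\neg I_l$ (obtained from \CI and \IF for $X$), and $\neg(I+\epsilon)$, which combines the single-letter basis of $X\setminus I$ with the length-two words built from the filter decomposition \XF of $X$.

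Finally, since every procedure above invokes only the presentation of $X$ (its ordering, ideal inclusion, principal ideals, the two complementations and two intersections, and the decompositions \XI and \XF), the resulting presentation of $(X^*,{\le_*})$ is uniformly computable from that of $X$, which is exactly what \cref{def:idl-eff-constr} demands. I expect the correctness proofs for \ID and for the two complementations to be the main obstacle, as each requires an embedding/fitting argument showing that the greedy or recursive syntactic manipulation captures the intended set operation exactly.
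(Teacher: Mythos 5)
Your plan retraces the paper's own proof very closely: same representations, same leftmost-embedding test for \OD, same product formula for \PI, same decompositions for \XI and \XF, the same skip/cover/absorb recursion for \ID (the paper's \cref{prop:incl-seq}), the same leading-atom recursion for \II (\cref{lem-inter-products}), the same infiltration recursion for \IF, and a \CF recursion equivalent to the paper's \cref{lem-res-w-as-fup} (your extra summand $D_1^*$ is absorbed by the second term when $k\ge 2$, since $\epsilon$ lies in both $X+\epsilon$ and $\neg \upc_*(u_2\cdots u_k)$, and the second term vanishes when $k=1$). One slip in \ID: your rules mishandle the degenerate atom $\emptyset^*=\{\epsilon\}$. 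When $\bA_1=\emptyset^*\subseteq \bB_1=I+\epsilon$, the correct recursion must \emph{keep} $\bB_1$, i.e.\ $\bA_1\cdot\bP\subseteq\bB_1\cdot\bQ \iff \bP\subseteq\bB_1\cdot\bQ$ (the last case of \cref{prop:incl-seq}), whereas your ``cover'' rule consumes it: over $X=\{a\}$ your test rejects $\emptyset^*\cdot(\dc a+\epsilon)\subseteq(\dc a+\epsilon)$, which is true. This is repaired by first normalizing products to eliminate $\emptyset^*$ atoms, after which your nondeterministic skip/cover/absorb recursion agrees with the paper's deterministic one.

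The genuine gap is \CI. Your base cases (complementing the atoms $D^*$ and $I+\epsilon$) are exactly the paper's, but ``peeling the leading atom and using left-quotient reasoning'' does not determine an algorithm, and this recursive step is the technical heart of the whole theorem. The difficulty is that $\neg(\bA_1\cdots\bA_n)$ cannot be assembled from $\neg\bA_1$ and $\neg(\bA_2\cdots\bA_n)$ by any operation you already have: for instance, over the two-letter antichain $\{a,b\}$, the word $ab$ lies in $\neg\bigl((\dc a)^*\bigr)\cap\neg(\dc b+\epsilon)$ but not in $\neg\bigl((\dc a)^*\cdot(\dc b+\epsilon)\bigr)$, so intersecting complements is wrong, and concatenating them does not even yield an upwards-closed set. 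The paper's device is the operator $\bU\odot\bV\egdef\neg(\neg\bU\cdot\neg\bV)$, which gives $\neg(\bA_1\cdots\bA_n)=\neg\bA_1\odot\cdots\odot\neg\bA_n$, distributes over unions of upwards-closed sets (the dual of \cref{lem-concat-distrib-inter}), and is computed on principal filters by the merge formula of \cref{odot-of-cones}: $\upc \bv a\odot\upc b\bw=\upc(\bv ab\bw)\cup(\upc\bv)\cdot(\upc_X a\cap\upc_X b)\cdot(\upc\bw)$, yet another infiltration-type computation using \IF for $X$. Some such formula, together with its two-sided correctness argument (shortest prefix/suffix factorizations overlapping in at most one letter), is indispensable; without it your \CI step, and hence the proof that sequence extension is ideally effective, is incomplete.
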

\begin{proof}
  Let $(X,{\le})$ be an ideally effective \wqo.
  \begin{description}
  \item[\OD:] deciding $\le_*$ over $X^*$ reduces to comparing
    elements of $X$, {e.g.} by looking for a \emph{leftmost
      embedding}.
  \item[\PI:] given a finite sequence $\bu = x_1 \cdots x_n$, the
    principal ideal $\dwc \bu$ is represented by the product
    $(\dwc x_1 + \epsilon) \cdots (\dwc x_n + \epsilon)$.
  \end{description}

  Procedures for the remaining operations required by
  \cref{def:eff-wqo} are more elaborate, and we therefore introduce a
  lemma for each one. This series of lemmas concludes the proof since
  the fact that a presentation of $(X^*,{\le_*})$ can be uniformly
  computed from a presentation of $(X,{\le})$ will be clear. As for
  \XF, the filter decomposition of $X^*=\upc \bm{\epsilon}$ is given
  by the empty sequence (and does not depend on $X$), while for \XI we
  note that $X^*$ is already an ideal made of a single atom.
  \qed
\end{proof}

Subsequently, $(X,{\le})$ denotes an ideally effective \wqo.  We begin
with ideal inclusion.  A similar procedure was already obtained by
Abdulla et al.\  in the case where $X$
is a finite alphabet with equality~\cite{abdulla-forward-lcs}.
\begin{lemma}[\IDsanspar]
  \label{prop:incl-seq}
  Inclusion between ideals of $(X^*,{\le_*})$ can be tested using a
  linear number of inclusion tests between downwards-closed sets of
  $X$, using a version of \emph{leftmost embedding} search. The
  following equations implicitly describe an algorithm deciding
  inclusion by induction on the length, or number of atoms, of ideals:
  \addtocounter{equation}{1}
  \begin{enumerate}
  \item Atoms are compared as follows:
    \begin{align}
      \tag{\theequation.1}
      \label{emi0-1}
      (I_1 + \epsilon) \subseteq (I_2+\epsilon) &\iff I_1 \subseteq I_2,
      \\
      \tag{\theequation.2}
      \label{emi0-2}
      (I + \epsilon) \subseteq D^* &\iff I \subseteq D, \\
      \tag{\theequation.3}
      \label{emi0-3}
      D_1^* \subseteq D_2^* &\iff D_1 \subseteq D_2, \\
      \tag{\theequation.4}
      \label{emi0-4}
      D^* \subseteq (I+\epsilon) &\iff D = \emptyset.
    \end{align}
  \item\label{emi1} For any ideal $\bP$: $\bepsilon \subseteq \bP$.
  \item\label{emi2} For any ideal $\bP$ and atom $\bA$:
    $\bA \cdot \bP \subseteq \bepsilon \iff \bA = \emptyset^* \land \bP
    \subseteq \bepsilon$.
  \item Finally, for all atoms $\bA$ and $\bB$, and ideals $\bP$ and
    $\bQ$:
    \begin{enumerate}
    \item\label{caca1} if $\bA \not\subseteq \bB$ then:
      \[ \bA \cdot \bP \subseteq \bB \cdot \bQ \iff \bA \cdot \bP
        \subseteq \bQ \:; \]
    \item\label{caca2} if $\bA \subseteq \bB$ as in \eqref{emi0-1}, i.e., $\bA = (I_1 + \epsilon)$,
      $\bB = (I_2+\epsilon)$ for some $I_1,I_2 \in \Idl(X)$, then:
      \[  \bA \cdot \bP \subseteq \bB \cdot \bQ \iff
        \bP \subseteq \bQ \:;
      \]
    \item\label{caca3} if $\bA \subseteq \bB$ as in any of
      \crefrange{emi0-2}{emi0-4}, then:
      \[  \bA \cdot \bP \subseteq \bB \cdot \bQ \iff
        \bP \subseteq \bB \cdot \bQ \:.
      \]
    \end{enumerate}
  \end{enumerate}
\end{lemma}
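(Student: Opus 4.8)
My plan is to establish correctness by reading each displayed line as a set-theoretic equivalence and verifying it directly; the decision procedure and its linear cost then drop out. To test $\bP\subseteq\bQ$ one inspects the two leading atoms and recurses, and each recursive case strictly shrinks the problem: \ref{caca1} deletes one atom of the right-hand product, \ref{caca3} one atom of the left, and \ref{caca2} one atom of each. Hence if $\bP$ has $n$ atoms and $\bQ$ has $m$, the recursion halts after at most $n+m$ steps, each performing $O(1)$ inclusion tests on $\Down(X)$ via \eqref{emi0-1}--\eqref{emi0-4}, with the base cases \ref{emi1} and \ref{emi2} handling an empty product on either side; this is the claimed linear bound. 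Two elementary facts about atoms are used throughout: every atom (of either shape $D^*$ or $I+\epsilon$) is \emph{downwards-closed} in $(X^*,{\le_*})$, and every atom \emph{contains} $\epsilon$, hence so does every product of atoms.

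The equivalences \eqref{emi0-1}--\eqref{emi0-4} and the base cases are immediate from the definition of atoms. For instance, $D_1^*\subseteq D_2^*\iff D_1\subseteq D_2$ because the length-one words of $D^*$ are exactly the $\bx$ with $x\in D$; the inclusion $D^*\subseteq(I+\epsilon)$ forces $D^*$ to contain no word of length $\ge 2$, i.e.\ $D=\emptyset$; \ref{emi1} holds because $\epsilon$ lies in every product of atoms; and for \ref{emi2}, $\bA\cdot\bP\subseteq\bepsilon$ means $\bA\cdot\bP=\{\epsilon\}$, which forces $\bA=\{\epsilon\}$ and $\bP=\{\epsilon\}$, and since an ideal is nonempty the only atom equal to $\{\epsilon\}$ is $\emptyset^*$.

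In the recursive step, the ``easy'' halves of \ref{caca1}, \ref{caca2} and \ref{caca3} follow from $\epsilon\in\bA,\bB$ and monotonicity of concatenation. In \ref{caca3} the forward implication is simply $\bP\subseteq\bA\cdot\bP\subseteq\bB\cdot\bQ$, while the converse uses the idempotence $\bB\cdot\bB=\bB$ of a star atom (for \eqref{emi0-2} and \eqref{emi0-3}) or, in case \eqref{emi0-4}, the fact that then $\bA=\{\epsilon\}$. For \ref{caca2}, with $\bA=I_1+\epsilon\subseteq I_2+\epsilon=\bB$, I would prove $\bP\subseteq\bQ$ by taking $\bv\in\bP$, prepending a letter $x\in I_1$ (nonempty since $I_1$ is an ideal), and factoring $\bx\bv\in\bB\cdot\bQ$ as $\bs\bt$ with $\bs\in\bB$, $\bt\in\bQ$: if $\bs=\epsilon$ then $\bv\le_*\bx\bv=\bt$, so $\bv\in\bQ$ by downward-closedness of $\bQ$; if $\bs$ is a single letter, comparing first letters forces $\bv=\bt\in\bQ$.

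The genuinely delicate point is the forward direction of \ref{caca1}: assuming $\bA\cdot\bP\subseteq\bB\cdot\bQ$ and $\bA\not\subseteq\bB$, derive $\bA\cdot\bP\subseteq\bQ$. The idea is to fix a ``blocker'' in $\bA\setminus\bB$ and, for an arbitrary $\bw=\ba\bv\in\bA\cdot\bP$, produce a word of $\bA\cdot\bP$ dominating $\bw$ whose $\bB$-prefix is forced to be short. If $\bA=D^*$, take $\bM\in\bA\setminus\bB$ (necessarily $\bM\ne\epsilon$); then $\bM\bw\in\bA\cdot\bP$ since $D^*$ is closed under concatenation, and in any factorization $\bM\bw=\bs\bt$ with $\bs\in\bB$, $\bt\in\bQ$, the downward-closedness of $\bB$ together with $\bM\notin\bB$ forces $\size{\bs}<\size{\bM}$ (else $\bM\le_*\bs\in\bB$); hence $\bw$ is a suffix of $\bt$, so $\bw\le_*\bt\in\bQ$ and $\bw\in\bQ$. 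If instead $\bA=I+\epsilon$, the blocker is a letter $\bx$ with $x\in I$ and $\bx\notin\bB$; using directedness of $I$ I would pick $z\in I$ above $x$ and above the letter of $\ba$ (if $\ba\ne\epsilon$), so that $\bz\notin\bB$ and $\bw=\ba\bv\le_*\bz\bv\in\bA\cdot\bP$; any factorization $\bz\bv=\bs\bt$ with $\bs\ne\epsilon$ begins with $z$, forcing $\bz\le_*\bs\in\bB$ and contradicting $\bz\notin\bB$, so $\bs=\epsilon$, $\bz\bv=\bt\in\bQ$, and $\bw\le_*\bz\bv$ gives $\bw\in\bQ$. In both subcases the mechanism is identical --- a blocker outside the downwards-closed $\bB$ combined with downward-closedness of $\bQ$ --- and reconciling the two shapes of $\bA$ is the only real work.
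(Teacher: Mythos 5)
Your proof is correct and follows essentially the same route as the paper's: the atom comparisons and base cases dismissed as immediate, the blocker-plus-factorization argument for the hard direction of case \ref{caca1}, letter-prepending for case \ref{caca2}, and the trivial containment for the easy half of case \ref{caca3}. The only differences are cosmetic: in the $D^*$ subcase of \ref{caca1} you prepend the blocker to the whole word (using closure of $D^*$ under concatenation, avoiding directedness) where the paper uses directedness to pick a dominating word and doubles it, and in \ref{caca3} you argue via the idempotence $\bB\cdot\bB=\bB$ where the paper argues pointwise on factorizations.
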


\begin{proof}
  The first three cases are trivial.  We concentrate on the fourth
  one.
\begin{enumerate}
\item[\ref{caca1}] Since $\bB$ contains $\epsilon$, 
  $\bA \cdot \bP \subseteq \bQ $ implies
  $\bA \cdot \bP \subseteq \bB \cdot \bQ$.  Conversely, let
  $\bu \in \bA$ and $\bv \in \bP$, so that
  $\bu \bv \in \bA \cdot \bP \subseteq \bB \cdot \bQ$.  Assuming
  $\bA \not\subseteq \bB$, there exists $\bw' \in \bA \setminus \bB$
  and by directedness, there exists a word $\bw \in \bA$ such that
  $\bw \ge_* \bw', \bu$.  In particular, $\bw$ is in
  $\bA \setminus \bB$ and $\bw \ge_* \bu$.

  If $\bA = I + \epsilon$ for some $I \in \Idl(X)$, then $\bw$ is of
  length at most one.  Since $\bw' \not\in \bB$, in particular
  $\bw' \neq \epsilon$, so $\bw$ is of length exactly one.  Also,
  since $\bw \not\in \bB$, the word $\bw \bv$, which is in
  $\bA \cdot \bP \subseteq \bB \cdot \bQ$ has to actually be in
  $\bQ$. Since $\bQ$ is downwards-closed, $\bu \bv \in \bQ$.

  Otherwise, $\bA = D^*$ for some $D \in \Down(X)$. In this case,
  $\bw \bw \in \bA$ and thus $\bw \bw \bv \in \bB \cdot \bQ$.  We
  factor $\bw \bw \bv$ as $\bv_1 \bv_2$ with $\bv_1 \in \bB$ and
  $\bv_2 \in \bQ$.  Since $\bw$ is not in $\bB$, no word of which
  $\bw$ is a prefix is in $\bB$ either, and that implies that $\bv_1$
  is a proper prefix of $\bw$, and that $\bv_2$ has $\bw \bv$ as a
  suffix.  In particular, $\bv_2 \ge_* \bw\bv$.  Recalling that
  $\bw \bv \ge_* \bu \bv$ and that $\bQ$ is downwards-closed,
  $\bu \bv$ is in $\bQ$.

\item[\ref{caca2}] Here also, the right-to-left implication is
  trivial.  Conversely, assume $\bA \cdot \bP \subseteq \bB \cdot \bQ$
  and $\bA = (I_1+\epsilon)$ and $\bB = (I_2 + \epsilon)$ for some
  $I_1 \subseteq I_2 \in \Idl(X)$.  Let $\bu \in \bP$. Pick
  $x \in I_1$: $\bx \bu \in \bA \cdot \bP$, thus
  $\bx \bu \in \bB \cdot \bQ$. Therefore, $\bu \in \bQ$ since
  sequences of $\bB$ have length at most one.

\item[\ref{caca3}] The left-to-right implication is trivial, since
  $\epsilon \in \bA$. For the other implication, we consider some
  $\bu \in \bA$ and $\bv\in\bP$, and we have to show that
  $\bu\bv\in\bB\cdot\bQ$.  Since $\bP\subseteq\bB\cdot\bQ$, we can
  factor $\bv$ as $\bv_1\bv_2$ with $\bv_1 \in\bB$ and
  $\bv_2\in\bQ$. We claim that $\bu\bv_1\in\bB$: if $\bB=D^*$ is an
  atom of the second kind, the claim follows from $\bA\subseteq \bB$;
  if $\bB=I+\epsilon$ is an atom of the first kind, then we are in
  case \eqref{emi0-4}, $\bA$ is $\emptyset^*$, and
  $\bu=\epsilon$. With $\bu\bv_1\in\bB$ we have $\bu\bv\in\bB\cdot\bQ$
  as needed.
\qed
\end{enumerate}
\end{proof}

The next lemma deals with the complementation of filters:
\begin{lemma}[\CFsanspar]
  \label{lem-res-w-as-fup}
  Given $\bm{w}\in X^*$, the downwards-closed set
  $X^*\setminus \upc\bm{w}$ can be computed inductively using the
  following equations:
  \begin{align}
    \label{eqn1}  X^* \setminus \upc \epsilon &= \emptyset \text{ (empty union), } \\
    \label{eqn2}  X^* \setminus \upc x\bv &=\begin{cases}
      (X \setminus \upc x)^* &\text{ if $\bv=\epsilon$,}\\
      (X \setminus \upc x)^* \cdot (X + \epsilon) \cdot (X^* \setminus \upc
      \bv)
      &\text{otherwise.}
    \end{cases}
  \end{align}
\end{lemma}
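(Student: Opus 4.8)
The plan is to read $X^*\setminus\upc\bw$ as the set of finite sequences that do \emph{not} embed $\bw$, and to prove the two displayed equations by establishing equality of these sets of non-embedding words; the computability claim then follows, since each ingredient on the right-hand side is already computable. The base case is immediate: every word embeds the empty sequence, so $\upc\epsilon=X^*$ and hence $X^*\setminus\upc\epsilon=\emptyset$ (the empty union). For the inductive step I would isolate, as the single reusable tool, a \emph{leftmost-embedding principle}: for $x\in X$, $\bv\in X^*$ and $\bu=u_1\cdots u_m$, one has $x\bv\le_*\bu$ if and only if the set $\{i\mid u_i\ge x\}$ is non-empty and, writing $j$ for its least element, $\bv\le_* u_{j+1}\cdots u_m$. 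The forward direction here uses that if $x$ is matched at some position $p_1$ then $j\le p_1$, so the suffix after $p_1$ is a subword of the suffix after $j$, and embeddings are preserved upwards under $\le_*$.

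Second, I would dispatch the single-letter subcase $\bv=\epsilon$ directly: $x\le_*\bu$ holds iff $\bu$ contains a letter in $\upc x$, so $x\not\le_*\bu$ iff every letter of $\bu$ lies in $X\setminus\upc x$, which is exactly membership in $(X\setminus\upc x)^*$. For the general subcase $\bv\ne\epsilon$ I would prove the set equality by double inclusion, in both directions decomposing $\bu$ at the leftmost letter that is $\ge x$. If $x\bv\not\le_*\bu$ then, by the leftmost principle, either no letter of $\bu$ is $\ge x$ --- and then $\bu=\bu\cdot\epsilon\cdot\epsilon$ fits the pattern with empty middle and empty suffix, using $\bv\not\le_*\epsilon$ --- or the leftmost such letter sits at position $j$, and writing $\bu=\bm{p}\cdot y\cdot\bm{s}$ with $\bm{p}=u_1\cdots u_{j-1}$, $y=u_j$ and $\bm{s}=u_{j+1}\cdots u_m$ yields $\bm{p}\in(X\setminus\upc x)^*$, $y\in X$, and $\bv\not\le_*\bm{s}$, i.e.\ $\bm{s}\in X^*\setminus\upc\bv$. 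Conversely, given any factorization $\bu=\bm{p}\,\bm{m}\,\bm{s}$ with $\bm{p}\in(X\setminus\upc x)^*$, $\bm{m}\in X+\epsilon$ and $\bv\not\le_*\bm{s}$, a hypothetical embedding $x\bv\le_*\bu$ would match $x$ at the leftmost position $\ge x$, which cannot fall inside $\bm{p}$, hence lands in $\bm{m}$ or $\bm{s}$; in either case the relevant suffix is contained, as a subword, in $\bm{s}$, forcing $\bv\le_*\bm{s}$, a contradiction. This is also where I would explain why the middle factor must be $X+\epsilon$ rather than $X$: the $\epsilon$ alternative is exactly what accommodates $\bu=\epsilon$ and, more generally, words with no letter above $x$.

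The hard part will be the bookkeeping in the leftmost principle and in pinning down precisely where the matched position can lie relative to the factorization --- in particular justifying that a suffix of $\bm{s}$ is a subword of $\bm{s}$ and that strictly increasing witnesses transfer embeddings to larger suffixes. Everything else is routine: the right-hand side is a concatenation of downwards-closed sets, and concatenation preserves downward-closedness, since any $\bu\le_*\bm{w}_1\bm{w}_2$ splits as $\bu_1\bu_2$ with $\bu_1\le_*\bm{w}_1$ and $\bu_2\le_*\bm{w}_2$, so both sides are genuine members of $\Down(X^*)$. For computability I would note that $X\setminus\upc x$ is furnished by \CF for $X$, that $X+\epsilon$ and $(X\setminus\upc x)^*$ are expressed through the ideal decomposition \XI of $X$ together with the atoms of \cref{thm-idl-higman}, and that the recursion on $\bv$ strictly decreases its length, so it terminates after $\size{\bw}$ steps.
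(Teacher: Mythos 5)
Your proof is correct and follows essentially the same route as the paper's: both directions hinge on decomposing a word at the leftmost letter above $x$, with the $\epsilon$ option in the middle factor $(X+\epsilon)$ absorbing the case of words containing no letter above $x$, and the reverse inclusion ruling out any embedding of $x\bv$ because $x$ cannot land in the prefix and $\bv$ would then have to embed into (a suffix of) the last factor. Your packaging of this as an explicit leftmost-embedding principle, and your remarks on downward-closedness and on using \XI when $X$ is not an ideal, are just a more detailed write-up of exactly what the paper does.
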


Note that $X$ might not be an ideal, in which case $X + \epsilon$ is
not an atom in \cref{eqn2}. In this case, one has to first get the
ideal decomposition $X = \bigcup_i I_i$ from a presentation of
$(X,{\le})$ and use distributivity of concatenation over unions.

In the commonly encountered case where $X$ is a
finite alphabet, ordered by equality, there is no need to distribute, and indeed, the
complement of a filter is always an ideal.
More precisely, if $X = \{ a_1, \dots, a_n \}$
is a finite alphabet under equality, then
one checks easily that
$(X \setminus \upc a_i)^* \cdot (X + \epsilon) = \{a_j \mid j \neq
i\}^* \cdot (a_i + \epsilon)$.  It follows that complement of filters
are ideals in this case.

\begin{remark}
  Kabil and Pouzet~\cite{kabil92} use the following (equivalent)
  expression to complement filters:
  \begin{equation}
    \label{eq-kabil92-negfil}
    X^* \setminus \upc x y\bw =
    (X \setminus \upc x)^* \cdot \left[ \dwc (\upc x \cap \upc y) +
      \epsilon\right ] \cdot (X^* \setminus \upc y\bw) \:.
  \end{equation}
  We used a different formula because, in general, our setting does
  not guarantee that the expression $\dwc U$ is computable for
  $U \in \Up(X)$, even in the particular case where
  $U=\upc x \cap \upc y$. It is fair to mention that Kabil and Pouzet
  make no claim on computability.

  Still, \cref{eq-kabil92-negfil} is interesting when $X$ is a finite
  alphabet since then the expression $\upc x \cap \upc y$ either
  denotes the empty set or $(x+\epsilon)$, depending on whether $x$
  and $y$ coincide. Therefore, using \cref{eq-kabil92-negfil}, one
  directly obtains an ideal written in canonical form (a notion
  defined below, in \cref{sec-canonical-products}).
  \qed
\end{remark}

\begin{proof}[of \cref{lem-res-w-as-fup}] \\
We only prove the second case of
  \cref{eqn2} since the other equalities are obvious.
\\
\noindent
$(\supseteq)$: Let $\bm{w}' = \bm{u y w}$ with $\bm{u} \in (X
\setminus \upc x)^*$, $\bm{y} \in X + \epsilon$ and $\bm{w} \in (X^*
\setminus \upc \bm{v})$.  Thus $\bm{v}\not\leq_*\bm{w}$. Since
$\bm{y}$ has length at most 1, we deduce $x \bm{v}\not\leq_* \bm{y
w}$.  Since all elements in $\bm{u}$ are taken from $X\setminus\upc
x$, we further have $x\bm{v}\not\leq_*\bm{u y v}$.  Therefore
$\bm{w'}\in X^*\setminus\upc x\bm{v}$.  \\
\noindent
$(\subseteq)$: Let $\bm{w}' \notin \upc x\bm{v}$. Then either $\bm{w}'
\in (X \setminus \upc x)^*$, or we can write $\bm{w}' = \bm{u}y\bm{w}$
with $\bm{u} \in (X\setminus \upc x)^*$ and $y \ge x$. Moreover,
$\bm{w} \notin \upc \bm{v}$, since otherwise $x\bm{v} \le_* y\bw \le_*
\bu y\bw = \bw'$. Therefore, $\bw' \in (X \setminus \upc x)^* \cdot X
\cdot (X^* \setminus \upc \bv)$. Joining the two cases, and since
$\epsilon\in (X^*\setminus \upc\bv)$, we obtain the required $\bw'\in(X
\setminus \upc x)^* \cdot (X+\epsilon) \cdot (X^* \setminus \upc \bv)$.
\qed
\end{proof}

We now show how to intersect ideals:
\begin{lemma}[\IIsanspar]
  \label{lem-inter-products}
  The intersection of two ideals of $(X^*, {\le_*})$ can be computed
  inductively using the following equations:
  \begin{align}
    \label{int-idl-seq-1}
    \bepsilon \cap \bQ =
    \bP \cap \bepsilon &= \bepsilon \:,
    \\
    \label{int-idl-seq-2}
    D_1^* \cdot \bP \cap D_2^* \cdot \bQ &=
                                           (D_1\cap D_2)^* \cdot \left[
                                           \begin{array}{rl}       & (D_1^* \cdot \bP) \cap \bQ\\
                                             \cup\; & \bP \cap (D_2^* \cdot \bQ)
                                           \end{array}\right],
    \\
    \label{int-idl-seq-3}
    (I_1+\epsilon) \cdot \bP \cap (I_2+\epsilon)\cdot \bQ
                       &=
                         \left[\begin{array}{rl}
                                 &  ((I_1+\epsilon) \cdot \bP) \cap \bQ              \\
                                 \cup\;   &  \bP \cap ((I_2+\epsilon) \cdot \bQ)              \\
                                 \cup\;   &  \bigl( (I_1\cap I_2)+\epsilon \bigr) \cdot (\bP \cap \bQ)
                               \end{array}
                                            \right],
    \\
    \label{int-idl-seq-4}
    D^* \cdot \bP \cap (I+\epsilon)\cdot \bQ &=
                                               \left[\begin{array}{rl}
                                                       &  \bP \cap ((I+\epsilon) \cdot \bQ)              \\
                                                       \cup\; &\bigl( (D\cap I)+\epsilon \bigr) \cdot ((D^* \cdot \bP) \cap \bQ)
                                                     \end{array}
                                                                \right].
  \end{align}
\end{lemma}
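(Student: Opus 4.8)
The plan is to prove the four displayed equalities \eqref{int-idl-seq-1}--\eqref{int-idl-seq-4} as genuine identities of subsets of $X^*$; once they hold, they describe a terminating recursive procedure, since in each of \eqref{int-idl-seq-2}--\eqref{int-idl-seq-4} every recursive call on the right strictly decreases the total number of atoms appearing in the two operands, with \eqref{int-idl-seq-1} furnishing the base case. Throughout I will use three elementary facts about atoms (from \cref{thm-idl-higman}): every atom, hence every ideal $\bP$, contains $\epsilon$; a word lies in $D^*$ iff all its letters lie in $D$; and a word lies in $I+\epsilon$ iff it has length at most one with its single letter, if any, lying in $I$. I will also use freely that atoms and their products are downwards-closed for $\le_*$.

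For the degenerate equation and the easy inclusions this is routine: \eqref{int-idl-seq-1} holds because $\epsilon$ belongs to every ideal, and for each of \eqref{int-idl-seq-2}--\eqref{int-idl-seq-4} the inclusion $\supseteq$ is checked termwise, prepending $\epsilon\in D^*$, $\epsilon\in I+\epsilon$, or a common first letter as appropriate to place each summand of the right-hand side inside the left-hand side. So the substance lies in the inclusion $\subseteq$ for \eqref{int-idl-seq-3} and \eqref{int-idl-seq-4}, which I treat by reasoning about the leading letter of a word $\bw$ in the left-hand side. Given the two factorizations witnessing membership, I split on whether the first letter $a$ of $\bw$ satisfies the relevant ideal constraint: if $a$ fails the constraint of one factor, that factor must contribute the empty prefix $\epsilon$, forcing $\bw$ into the corresponding summand $\bP\cap(\cdots)$ or $(\cdots)\cap\bQ$ by downward-closure; otherwise $a$ lies in the shared component $I_1\cap I_2$ (resp.\ $D\cap I$) and I peel it off, landing in the product summand with remainder handled recursively.

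The main obstacle is the $\subseteq$ direction of \eqref{int-idl-seq-2}, where both leading atoms are of star type and a single letter no longer suffices to decompose the word. Here I plan to peel off $\bw_0$, the \emph{maximal prefix} of $\bw$ lying in $(D_1\cap D_2)^*$, and write $\bw=\bw_0\bw'$. If $\bw'=\epsilon$ the claim is immediate; otherwise the first letter $a$ of $\bw'$ must escape $D_1\cap D_2$ by maximality of $\bw_0$, say $a\notin D_1$. Then in the factorization $\bw=\bu_1\bv_1$ with $\bu_1\in D_1^*$ and $\bv_1\in\bP$, the letter $a$ cannot lie in $\bu_1$, so $\bu_1$ is a prefix of $\bw_0$ and $\bw'$ is a suffix of $\bv_1$, whence $\bw'\le_*\bv_1$ gives $\bw'\in\bP$. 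Comparing $\bw=\bw_0\bw'$ with the factorization $\bw=\bu_2\bv_2$ through $D_2^*$, and using $\bw_0\in(D_1\cap D_2)^*\subseteq D_2^*$ together with downward-closure, then shows $\bw'\in D_2^*\cdot\bQ$, so $\bw'\in\bP\cap(D_2^*\cdot\bQ)$ as required (the case $a\notin D_2$ being symmetric). The delicate point is verifying $\bw'\in D_2^*\cdot\bQ$ in both sub-cases $|\bu_2|\ge|\bw_0|$ and $|\bu_2|<|\bw_0|$: in the former a suffix of $\bu_2$ remains in $D_2^*$, while in the latter $\bw'$ is already a suffix of $\bv_2\in\bQ$.
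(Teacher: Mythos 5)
Your proof is correct and takes essentially the same route as the paper's: the $\supseteq$ inclusions are routine, and the $\subseteq$ directions are handled by a leading-letter case analysis for \eqref{int-idl-seq-3}--\eqref{int-idl-seq-4} and by peeling off a maximal prefix for \eqref{int-idl-seq-2}, where your maximal $(D_1\cap D_2)^*$-prefix $\bw_0$ coincides with the shorter of the two maximal $D_i^*$-prefixes that the paper compares, so the resulting decompositions agree. The only cosmetic differences are that you verify $\supseteq$ directly rather than citing \cref{prop:incl-seq}, and that you make the termination measure (total number of atoms in the two operands) explicit where the paper leaves it implicit.
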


Here also, some shortcuts are used. For instance, the intersection of
two ideals need not be an ideal. Therefore, $(I_1 \cap I_2)+\epsilon$
in \cref{int-idl-seq-3} might not be an ideal.  As before, by
decomposing downwards-closed sets as union of ideals, and distributing
concatenations over unions, one can compute the actual ideal
decomposition of the intersection of two ideals of $(X^*, {\le_*})$.

\begin{proof}[of \cref{lem-inter-products}]
  \Cref{int-idl-seq-1} is obviously correct.  The other right-to-left
  inclusions are easily checked using \cref{prop:incl-seq}.  For the
  left-to-right inclusions:
  \begin{description}
  \item[\cref{int-idl-seq-2}:] Let
    $\bu \in D_1^* \cdot \bP \cap D_2^* \cdot \bQ$.  Let $\bv$ be
    the longest prefix of $\bu$ which is in $D_1^*$.  Without loss of
    generality, we assume that the longest prefix of $\bu$ which is in
    $D_2^*$ is longer than $\size{\bv}$, and thus can be written
    $\bv \bw$ for some $\bw \in D_2^*$. Moreover, there exists
    $\bt \in X^*$ so that $\bu = \bv \bw \bt$. We have
    $\bv \in (D_1 \cap D_2)^*$, $\bw\bt \in \bP$ and $\bt \in
    \bQ$. Therefore, $\bw\bt \in \bP \cap D_2^*\cdot \bQ$.

  \item[\cref{int-idl-seq-3}:] Consider any word in
    $(I_1+\epsilon) \cdot \bP \cap (I_2+\epsilon)\cdot \bQ$.  If it is
    empty, it is also in the right-hand side of \cref{int-idl-seq-3},
    so we assume that it of the form $x \bu$. Depending on whether
    $x \in I_1 \setminus I_2$, $x \in I_2 \setminus I_1$ or
    $x \in I_1 \cap I_2$, $\bu$ is easily proved to be in
    $((I_1+\epsilon) \cdot \bP) \cap \bQ$, in
    $\bP \cap ((I_2+\epsilon) \cdot \bQ)$, or in
    $((I_1\cap I_2)+\epsilon ) \cdot (\bP \cap \bQ)$.  If $x$ is
    neither in $I_1$ nor $I_2$, then $x\bu$ belongs to all three sets.

  \item[\cref{int-idl-seq-4}:] This is similar, combining arguments
    from the previous two cases. \qed
\end{description}
\end{proof}

We now turn to intersecting filters:
\begin{lemma}[\IFsanspar]
  The intersection of two filters can be computed inductively using
  the following equations:
\begin{align}
\label{eq-int-fil-seq-1}
    \upc \bv \cap \upc\epsilon &=
    \upc \epsilon \cap \upc \bv = \upc \bv \:,
\\
\label{eq-int-fil-seq-2}
    \upc  x\bv \cap \upc  y\bw &=
   \left[\begin{array}{l}
      (\upc  \bx) \cdot (\upc \bv \cap \upc  y\bw) \; \cup\;
      (\upc  \by) \cdot (\upc   x\bv \cap \upc \bw)
      \\
      \cup\;  (\upc_X x \cap \upc_X y) \cdot (\upc \bv \cap \upc \bw)
   \end{array}\right],
\end{align}
  where $\bv,\bw \in X^*$ and $x,y \in X$. The actual filter
  decomposition in the last equation is obtained using $(\upc
  \bu)\cdot(\upc \bu')=\upc(\bu\bu')$ and distributivity over unions.
\end{lemma}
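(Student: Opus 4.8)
The plan is to establish \eqref{eq-int-fil-seq-2} by double inclusion; the base case \eqref{eq-int-fil-seq-1} is immediate, since every word has $\epsilon$ as a subword, so $\upc\epsilon=X^*$ and intersecting with it changes nothing. Throughout I read the right-hand side semantically: $\upc\bx$ is the set of words carrying a letter that is $\ge x$, and the factor $(\upc_X x\cap\upc_X y)$ denotes---after unfolding the upward-closed set $\upc_X x\cap\upc_X y\subseteq X$ into a finite basis $\bigcup_k\upc_X z_k$ via \IF for $X$---the set $\bigcup_k\upc\bm{z}_k$ of words carrying a single letter above both $x$ and $y$. Each of the three summands is thus a product of filters, hence upwards-closed, as it must be.

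The inclusion ${\supseteq}$ is routine. If $\bs=\bs_1\bs_2$ lies in the first summand, then $\bs_1$ carries a letter $\ge x$ and $\bs_2\ge_*\bv$, so stitching the two embeddings gives $x\bv\le_*\bs$, while $y\bw\le_*\bs_2\le_*\bs$ because $\bs_2\in\upc y\bw$; the second summand is symmetric, and in the third a single letter $z\ge x,y$ in $\bs_1$ absorbs both heads while $\bs_2\ge_*\bv,\bw$ absorbs both tails. Hence each summand sits inside $\upc x\bv\cap\upc y\bw$.

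The substance is the inclusion ${\subseteq}$, which I would prove with a leftmost-embedding argument, exactly the device already used for \OD in \cref{thm-hig-idl-eff}. Write $\bs=s_1\cdots s_k$ with $x\bv\le_*\bs$ and $y\bw\le_*\bs$, let $i$ be the least index with $s_i\ge x$ and $j$ the least with $s_j\ge y$. The crucial observation is the optimality of the greedy match: matching a head to the first large-enough letter never blocks its tail, so $\bv\le_* s_{i+1}\cdots s_k$ and $\bw\le_* s_{j+1}\cdots s_k$. One then splits on the order of $i$ and $j$. If $i<j$, factor $\bs=(s_1\cdots s_i)(s_{i+1}\cdots s_k)$: the prefix lies in $\upc\bx$, and the suffix contains $\bv$ by the observation and also $y\bw$, since $j\ge i+1$ places $s_j$ and the whole embedding of $\bw$ inside the suffix; thus $\bs$ is in the first summand. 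The case $j<i$ is symmetric and gives the second summand. When $i=j$, the common letter $s_i$ satisfies $s_i\ge x$ and $s_i\ge y$, so the prefix $s_1\cdots s_i$ belongs to $(\upc_X x\cap\upc_X y)$ and the suffix carries both $\bv$ and $\bw$, landing $\bs$ in the third summand.

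The main obstacle is pinning down this greedy-optimality observation together with the positional bookkeeping; once it is in hand the three-way split is forced and leaves no gaps. Finally, to convert the right-hand side into the announced filter decomposition, I would expand $(\upc_X x\cap\upc_X y)$ into $\bigcup_k\upc\bm{z}_k$, replace each product of principal filters using $\upc\bu\cdot\upc\bu'=\upc(\bu\bu')$, and distribute over the unions; the recursion terminates because $\size{\bv}+\size{\bw}$ strictly decreases at every call (by one in the first two summands, by two in the third).
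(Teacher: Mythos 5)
Your proof is correct and takes essentially the same route as the paper's: the $\supseteq$ half is routine, and the $\subseteq$ half is a greedy factorization argument with a three-way case split landing in the three summands, resting on the leftmost-embedding property. The only (inessential) difference is the choice of pivot---the paper factors at the first letter $z$ of the shortest suffix of $\bu$ lying in $\upc x\bv \cap \upc y\bw$ and cases on whether $z$ is above $x$, $y$, or both, whereas you compare the positions of the leftmost letters above $x$ and above $y$; both devices produce the same case analysis and valid factorizations.
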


\begin{proof}
  \Cref{eq-int-fil-seq-1} and the ``${\supseteq}$'' half of
  \cref{eq-int-fil-seq-2} are obvious. For the remaining
  ``${\subseteq}$'' half, we consider
  $\bu \in \upc x\bv \cap \upc y\bw$.  Let us write $\bu$ as
  $\bu=\bu_1 z \bu_2$ where $z\bu_2$ is the shortest suffix of $\bu$
  in $\upc x\bv \cap \upc y\bw$---this suffix cannot be empty since it
  contains $x\bv$ and $y\bw$ as embedded sequences.  Note that $z$
  must be above $x$ or $y$ in $X$, otherwise $\bu_2$ would be a
  shorter suffix of $\bu$ in $\upc x\bv \cap \upc y\bw$.   One now
  considers   whether $z$ is above $x$, $y$, or both, and
  picks the corresponding
  summand in the right of \Cref{eq-int-fil-seq-2}.
  \qed
\end{proof}

Finally, we focus on the complementation of ideals. This operation
requires more work, and is decomposed in several lemmas.  We first
show how to complement atoms, and then how to complement products of
atoms.
\begin{itemize}
\item If $D \subseteq X$ is downwards-closed, then
  $X^* \setminus D^*$, also written $\neg D^*$, consists of all
  sequences having at least one element not in $D$. One first computes
  $X \setminus D = \upc a_1 \cup \cdots \cup \upc a_n$, using \CI for
  $X$. Then
  $\neg D^* = \upc_{X^*} \ba_1 \cup \cdots \cup \upc_{X^*} \ba_n$.
\item If $I \subseteq X$ is an ideal, $\neg (I + \epsilon)$ consists
  of all sequences of length at least 2, as well as all sequences
  having an element not in $I$. The latter is obtained as in the
  previous case, by computing
  $X\setminus I=\upc b_1 \cup \cdots \cup \upc b_m$ in $X$. The former
  is $\upc_{X^*} (X \cdot X)$, easily computed in a similar way using
  \XF for $X$.
\end{itemize}

We now consider products $\bA_1 \cdots \bA_n$ of atoms. We know how to
compute $\bU_i = \neg \bA_i$. One has
$\neg(\bA_1\cdots \bA_n) = \neg(\neg \bU_1 \cdots \neg \bU_n)$, and
this motivates the following definition:
\begin{definition}
  Define the operator
  $\odot \colon \Up(X^*) \times \Up(X^*) \to \Up(X^*)$
  as \\
  $\bU \odot \bV := \neg ( \neg \bU \cdot \neg \bV)$.
\end{definition}

Note that $\bU \odot \bV$ is upwards-closed when $\bU$ and $\bV$ are.
The operation $\odot$ is easily shown associative using the
associativity of the product, thus
$\bU_1 \odot \cdots \odot \bU_n = \neg( \neg \bU_1 \cdot \cdots \cdot
\neg \bU_n )$.  The previous relation becomes
$\neg (\bA_1 \cdots \bA_n) = \bU_1 \odot \cdots \odot \bU_n$, and it
only remains to show that $\odot$ is computable on upwards-closed
sets.  In what follows, we will often use the following obvious
characterization: $\bw \in \bS \odot \bT$ if and only if for all
factorizations $\bw=\bw_1\bw_2$, $\bw_1 \in \bS$ or $\bw_2 \in \bT$.

We first show that $\odot$ is computable on principal filters, then we
show how to complement ideals.
\begin{lemma}
  \label{odot-of-cones}
  On principal filters, $\odot$ can be computed using the following
  equations:
  \begin{align}
    \label{eq-odot-fil-1}
    \upc \bv \odot \upc \epsilon =
    \upc \epsilon \odot \upc \bv &= X^* \:,
    \\
    \label{eq-odot-fil-2}
    \upc \bv a \odot \upc b\bw &= \upc (\bv ab\bw) \cup (\upc \bv) \cdot (\upc_X a
                                 \cap \upc_X b) \cdot (\upc \bw) \:,
  \end{align}
  where $\bv,\bw \in X^*$ and $a,b \in X$.
\end{lemma}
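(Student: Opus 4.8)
The plan is to reason throughout via the factorization characterization stated just before the lemma: $\bu \in \bS \odot \bT$ if and only if for every factorization $\bu = \bu_1 \bu_2$ one has $\bu_1 \in \bS$ or $\bu_2 \in \bT$. Before attacking the two equations I would record one elementary fact about Higman's order, proved directly from a witness: if $\bs \bt \le_* \bu_1 \bu_2$, then $\bs \le_* \bu_1$ or $\bt \le_* \bu_2$. Indeed, fixing a witness of $\bs\bt \le_* \bu_1\bu_2$ and looking at the last letter of $\bs\bt$ whose image falls inside $\bu_1$: if that letter belongs to $\bs$, then all of $\bt$ embeds into $\bu_2$; otherwise all of $\bs$ already embeds into $\bu_1$.

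Equation \eqref{eq-odot-fil-1} is then immediate: since $\upc \epsilon = X^*$, the clause ``$\bu_2 \in \upc \epsilon$'' holds for every factorization, so every $\bu$ lies in $\upc \bv \odot \upc \epsilon$, and symmetrically on the other side. For the inclusion ${\supseteq}$ in \eqref{eq-odot-fil-2} I would treat the two summands separately. If $\bu \in \upc(\bv a b \bw)$, then writing $\bv a b \bw = (\bv a)(b\bw)$ and applying the sublemma to an arbitrary factorization $\bu = \bu_1\bu_2$ yields $\bv a \le_* \bu_1$ or $b\bw \le_* \bu_2$, which is exactly the $\odot$-condition. If instead $\bu = \bu' \bm{c} \bu''$ with $\bv \le_* \bu'$, $c \in \upc_X a \cap \upc_X b$, and $\bw \le_* \bu''$, then for a factorization $\bu = \bu_1\bu_2$: if the distinguished letter $\bm{c}$ lies in $\bu_2$, then $b\bw \le_* \bm{c}\bu'' \le_* \bu_2$, while if $\bm{c}$ lies in $\bu_1$, then $\bv a \le_* \bu'\bm{c} \le_* \bu_1$; either way the $\odot$-condition holds.

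The hard part will be the inclusion ${\subseteq}$ in \eqref{eq-odot-fil-2}, which I would handle by a sliding-cut argument. Assume $\bu \in \upc \bv a \odot \upc b\bw$ and, aiming for the second summand, that $\bu \notin \upc(\bv a b\bw)$. For $0 \le k \le \size{\bu}$ let $\bu_1^{(k)},\bu_2^{(k)}$ be the prefix/suffix split of $\bu$ at position $k$; membership $\bu_1^{(k)} \in \upc \bv a$ is monotone increasing in $k$ and $\bu_2^{(k)} \in \upc b\bw$ is monotone decreasing, and at the two extreme cuts the empty side fails (as $\bv a$ and $b\bw$ are nonempty). Let $k_1$ be the least $k$ with $\bu_1^{(k)} \in \upc \bv a$ and $k_2$ the greatest $k$ with $\bu_2^{(k)} \in \upc b\bw$; the $\odot$-condition then forces $k_1 \le k_2 + 1$. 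Using minimality of $k_1$ (a leftmost-embedding analysis) I would show that position $k_1$ carries a letter in $\upc_X a$ with $\bv \le_* \bu_1^{(k_1-1)}$, and using maximality of $k_2$ that position $k_2+1$ carries a letter in $\upc_X b$ with $\bw \le_* \bu_2^{(k_2+1)}$.

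Finally I would rule out $k_1 \le k_2$: in that case positions $k_1$ and $k_2+1$ are distinct and correctly ordered, so placing $\bv$ before $k_1$, the letter $a$ at $k_1$, the letter $b$ at $k_2+1$, and $\bw$ after $k_2+1$ embeds $\bv a b\bw$ into $\bu$, contradicting $\bu \notin \upc(\bv a b\bw)$. Hence $k_1 = k_2+1$, the two threshold positions coincide, their common letter lies in $\upc_X a \cap \upc_X b$, and $\bu$ factors as $\bu_1^{(k_1-1)}$ (which contains $\bv$), this single letter (which lies in $\upc_X a \cap \upc_X b$), and $\bu_2^{(k_1)}$ (which contains $\bw$), placing $\bu$ in the second summand. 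The one step needing genuine care is the embedding analysis at the threshold positions $k_1$ and $k_2+1$, which is precisely where the minimality of $k_1$ and maximality of $k_2$ are indispensable; everything else reduces to the characterization and the sublemma.
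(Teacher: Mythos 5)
Your proof is correct and follows essentially the same route as the paper's: both rest on the factorization characterization of $\odot$, and your threshold cuts $k_1$ and $k_2+1$ are exactly the paper's shortest prefix above $\upc \bv a$ and shortest suffix above $\upc b\bw$, with your bound $k_1 \le k_2 + 1$ matching the paper's observation that these two factors can overlap in at most one letter, whose threshold letter then dominates both $a$ and $b$. The only differences are organizational: you make the prefix/suffix splitting sublemma explicit and run the hard inclusion by contraposition (assuming $\bu \notin \upc(\bv a b \bw)$), where the paper directly splits into the no-overlap case (first summand) and the one-letter-overlap case (second summand).
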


\begin{proof}
  \Cref{eq-odot-fil-1} is clear. We concentrate on
  \cref{eq-odot-fil-2}:
  \\
  \noindent
  $(\supseteq)$ If $\bu \geq_* \bv ab \bw$, then for every
  factorization of $\bu = \bu_1 \bu_2$, the left factor $\bu_1$ is
  above $\bv a$, or the right factor $\bu_2$ is above $b \bw$, and
  thus $\bu \in \upc \bv a \odot \upc b\bw$.  If
  $\bu \geq_* \bv c \bw$, where $c \in X$ is such that $c \geq a$ and
  $c \geq b$, then in every factorization of $\bu$ as $\bu_1 \bu_2$,
  $c$ appears either in the left factor $\bu_1$ or in the right factor
  $\bu_2$, and this suffices to show that either $\bu \ge_* \bv a$ or
  $\bu \ge_* b\bw$.
  \\
  \noindent
  $(\subseteq)$ Let $\bu \in (\upc \bv a) \odot (\upc b \bw)$. From
  the factorizations $\bu = \bu \cdot \epsilon$ and
  $\bu = \epsilon \cdot \bu$ we get $\bv a \leq_* \bu$ and
  $b\bw \leq_* \bu$. Consider the shortest prefix of $\bu$ above
  $\bv a$ and the shortest suffix above $b\bw$. These factors cannot
  have an overlap of length $\geq 2$, otherwise splitting  $\bu$ in
  the middle of the overlap would provide a shorter factor above $\bv
  a$ or one above $b\bw$, contradicting our assumption.
  If the factors do not
  overlap, we get $\bu \geq_* \bv ab\bw$. If they overlap, necessarily
  over a single letter $c\in X$, we write
   $\bu = \bu_1 c \bu_2$. Then $\bu_1 \geq_* \bv$, $c \geq a$, $c \geq b$, and
  $\bu_2 \geq_* \bw$, which proves the statement.
  \qed
\end{proof}

\begin{lemma}[\CIsanspar]
  Complementing ideals of $(X^*,{\le_*})$ is computable.
\end{lemma}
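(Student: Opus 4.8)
The plan is to assemble the pieces already built up in the paragraphs preceding the statement. By \cref{thm-idl-higman} an ideal of $(X^*,{\le_*})$ is a finite product of atoms $\bP = \bA_1 \cdots \bA_n$, and we have already shown how to compute $\bU_i := \neg \bA_i$ as a finite union of principal filters (hence a representable element of $\Up(X^*)$) for each atom. Using the identity $\neg(\bA_1 \cdots \bA_n) = \bU_1 \odot \cdots \odot \bU_n$ together with the associativity of $\odot$, it suffices to supply an algorithm for the binary operation $\odot$ on arbitrary upwards-closed sets. Since \cref{odot-of-cones} already computes $\odot$ when both arguments are principal filters, the only remaining task is to reduce the general case to the principal one.

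First I would establish that $\odot$ distributes over finite unions in each argument. Writing $\bU = \bU_1 \cup \bU_2$ and applying De Morgan, $\neg \bU = \neg \bU_1 \cap \neg \bU_2$, so
\begin{align*}
(\bU_1 \cup \bU_2) \odot \bV
&= \neg\bigl( (\neg \bU_1 \cap \neg \bU_2) \cdot \neg \bV \bigr) \\
&= \neg\bigl( (\neg \bU_1 \cdot \neg \bV) \cap (\neg \bU_2 \cdot \neg \bV) \bigr) \\
&= (\bU_1 \odot \bV) \cup (\bU_2 \odot \bV),
\end{align*}
where the middle equality is exactly the distributivity of concatenation over intersection of downwards-closed sets from \cref{lem-concat-distrib-inter}, and the last step is De Morgan again. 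The symmetric identity $\bU \odot (\bV_1 \cup \bV_2) = (\bU \odot \bV_1) \cup (\bU \odot \bV_2)$ follows from the mirror-image version of \cref{lem-concat-distrib-inter}, namely $\bD \cdot (\bD_1 \cap \bD_2) = (\bD \cdot \bD_1) \cap (\bD \cdot \bD_2)$, whose proof is identical except that one compares suffixes rather than prefixes.

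With distributivity in hand, given $\bU = \bigcup_i \upc \bu_i$ and $\bV = \bigcup_j \upc \bv_j$ in finite-basis form, I would compute $\bU \odot \bV = \bigcup_{i,j} (\upc \bu_i \odot \upc \bv_j)$, each summand being returned by \cref{odot-of-cones} as a finite union of principal filters; the overall result is therefore again a finite union of principal filters. Folding this binary routine over $\bU_1, \ldots, \bU_n$ then yields $\neg \bP$. The degenerate empty-product case $\bP = \bepsilon$ is handled directly, since $\neg \bepsilon = X^* \setminus \{\epsilon\}$ is the set of nonempty sequences, itself a finite union of principal filters obtained from the filter decomposition \XF of $X$.

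The only genuinely new ingredient is the distributivity of $\odot$ over finite unions, so the crux is the mirror version of \cref{lem-concat-distrib-inter}; everything else is bookkeeping layered on top of \cref{odot-of-cones}. I expect no real obstacle beyond verifying that this suffix-based distributivity holds, which is routine given the prefix-based argument already recorded.
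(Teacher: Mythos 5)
Your proof is correct and follows essentially the same route as the paper's own proof: complement each atom, apply $\neg\bP = \neg\bA_1 \odot \cdots \odot \neg\bA_n$, use distributivity of $\odot$ over unions of upwards-closed sets to reduce to $\odot$-products of filters, and finish with \cref{odot-of-cones}. Your only addition is to spell out what the paper dismisses with ``by duality''---the De Morgan computation and the mirror, right-sided version of \cref{lem-concat-distrib-inter}---which is a welcome bit of extra rigor rather than a divergence.
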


\begin{proof}
  Given an ideal $\bP = \bA_1 \cdots \bA_n$, its complement is
  $\neg \bP = \neg \bA_1 \odot \cdots \odot \neg \bA_n$.  Using the
  procedure to complement downwards-closed sets of $(X,{\le})$, we can
  write each $\neg \bA_i$ as a union of filters.  Since $\odot$
  distributes over unions of upwards-closed sets (from
  \cref{lem-concat-distrib-inter} by duality), we can write $\neg \bP$
  as a finite union of sets of the form
  $F_1 \odot F_2 \odot \cdots \odot F_n$, where the $F_i$'s are
  filters. Finally, \cref{odot-of-cones} allows us to reduce these
  expressions to a finite union of filters.
  \qed
\end{proof}

\subsubsection{A proof of \cref{thm-idl-higman}.}
\label{subsec-proof-of-idl-higman}

One direction of the theorem is easy to check: products of atoms are
indeed ideals (downwards-closed and directed) of $(X^*,{\le_*})$.  For
the other direction, consider an arbitrary ideal $I$ of
$(X^*,{\le_*})$. Its complement is upwards-closed, hence can be
written $\neg I = \bigcup_{i<n} F_i$ for some filters $F_1$,~\ldots,
$F_n$.  Therefore,
\[
I =  \neg \bigcup_{i<n} F_i =
\bigcap_{i<n} \neg F_i.
\]
Now, since any $\neg F_i$ is a finite union of products of atoms (see
\cref{lem-res-w-as-fup}), by distributing the intersection over the
unions, we are left with a finite union of finite intersections of
products of atoms. Since these intersections can be decomposed as
finite unions of products of atoms (see \cref{lem-inter-products}), we
have decomposed the ideal $I$ into a finite union of products of
atoms. Since products of atoms are ideals (cf.\ first direction), and
since ideals are prime subsets (by \cref{prop:prime-idl}), we obtain
that $I$ is actually equal to one of those products of atoms (by
\cref{lem-irreduc}).  \\

This proof highlights a general technique for identifying the ideals
of some \wqo: if we have some subclass $\Jcal$ of the ideals such that
the complement of any filter can be written as a finite union of
ideals of $\Jcal$, and the intersection of any two ideals of
$\Jcal$ can be written as a finite union of ideals of $\Jcal$, then
$\Jcal$ is the class of all ideals.

\subsubsection{Uniqueness of ideal representation.}
\label{sec-canonical-products}

Writing ideals as products of atoms can be done in several ways.  For
example $D^*\cdot D^*$ and $D^*$ coincide. They also coincide with
$D^*\cdot (I+\epsilon)$ and $D^*\cdot D'{}^*$ if $I$, resp.\ $D'$, are
subsets of $D^*$.

More generally, if $\bA$ is an atom and $D\in\Down(X)$ is such that
$\bA \subseteq D^*$, then $\bA D^* = D^* \bA = D^*$. Subsequently, we
show that these are the only causes of non-uniqueness: avoiding such
redundancies, every ideal has a unique representation as a product of
atoms.  (This was already observed for finite alphabets in
\cite{abdulla-forward-lcs}.)  This can be used to define a canonical
representation for ideals of $X^*$ (assuming one has defined a
canonical representation for the ideals of $X$) and then for the
downwards-closed sets. This representation is easy to use (moving from
an arbitrary product of atoms to the canonical representation just
requires testing inclusions between atoms) and can lead to more
efficient algorithms.

Below, we use letters such as $\mathtt{A}, \mathtt{P}$, etc., to
denote sequences of atoms (syntax), and corresponding letters such as
$\bA$, $\bP$, etc to denote the ideals obtained by taking the product
(semantics). For example if
$\mathtt{P} = (\mathtt{A}_1, \mathtt{A}_2, \cdots, \mathtt{A}_n)$,
then $\bP = \bA_1\cdot\bA_2 \cdot \cdots \cdot \bA_n$. Thus it is
possible to have $\mathtt{P} \neq \mathtt{Q}$ and $\bP=\bQ$.

\begin{definition}
  \label{def-atom-reduced}
  A sequence of atoms $\bA_1, \cdots, \bA_n$ is said to be
  \emph{reduced} if for every $i$, the following hold:
  \begin{itemize}
  \item $\bA_i \neq\emptyset^*= \{\epsilon\}$;
  \item if $i < n$ and $\bA_{i+1}$ is some $D^*$, then
    $\bA_i \not \subseteq \bA_{i+1}$;
  \item if $i > 1$ and $\bA_{i-1}$ is some $D^*$, then
    $\bA_i \not \subseteq \bA_{i-1}$.
  \end{itemize}
\end{definition}

Every ideal has a reduced decomposition into atoms, since any
decomposition can be converted to a reduced one by dropping atoms
which are redundant as per \cref{def-atom-reduced}. It remains to show
that reduced representations are unique:
\begin{theorem}
  \label{theo-ideal-uniq-dec}
  If $\mathtt{P}$ and $\mathtt{Q}$ are reduced sequences of atoms such
  that $\bP = \bQ$, then $\mathtt{P} = \mathtt{Q}$.
\end{theorem}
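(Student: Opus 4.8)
The plan is to induct on the total number of atoms $n+m$, peeling the leading atom off both products and recursing on the tails. Write $\mathtt{P}=(\bA_1,\dots,\bA_n)$ and $\mathtt{Q}=(\bB_1,\dots,\bB_m)$, and let $\bP'=\bA_2\cdots\bA_n$, $\bQ'=\bB_2\cdots\bB_m$ be the tails. The base case is $n=0$, so that $\bP=\{\epsilon\}$: every reduced atom contains a one-letter word — indeed $D^*$ contains the letters of $D$ and $I+\epsilon$ the letters of $I$ (nonempty, as $I$ is an ideal), while the sole atom with no letters, $\emptyset^*=\{\epsilon\}$, is forbidden by the first clause of \cref{def-atom-reduced} — so $\bQ=\{\epsilon\}$ forces $m=0$. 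For the inductive step I would first establish $\bA_1=\bB_1$ and then recover $\bP'=\bQ'$, so that the induction hypothesis applies.

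The main tool I would use is the family of \emph{left letter-quotients} $\{\bw\mid \bx\bw\in\bP\}$ for $x\in X$. A short computation, using only that every atom contains $\epsilon$ and that $\bP$ is downwards-closed, yields two clean facts. If $\bA_1=I+\epsilon$ is of the first kind and $x\in I$, then $\{\bw\mid \bx\bw\in\bP\}=\bP'$: any factorization of $\bx\bw$ either consumes $x$ in $\bA_1$ (leaving $\bw\in\bP'$) or leaves $\bA_1=\epsilon$, in which case $\bx\bw\in\bP'$ and hence $\bw\in\bP'$ by downward closure; the reverse inclusion is immediate. If instead $\bA_1=D^*$ is of the second kind and $x\in D$, then the idempotence $D^*D^*=D^*$ gives $\{\bw\mid\bx\bw\in\bP\}=\bP$. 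Thus the \emph{absorption set} $A(\bP):=\{x\mid \{\bw\mid \bx\bw\in\bP\}=\bP\}$ is expected to equal $D$ when $\bA_1=D^*$ and to be empty when $\bA_1$ is of the first kind; together with the one-letter words of $\bP$ and of its quotients, this determines both the type and the content of $\bA_1$, giving $\bA_1=\bB_1$.

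With $\bA_1=\bB_1=\bA$ in hand, I would recover the tails. When $\bA=I+\epsilon$, picking any $x\in I$ and using the first computation gives $\bP'=\{\bw\mid\bx\bw\in\bP\}=\{\bw\mid\bx\bw\in\bQ\}=\bQ'$ directly, and the induction hypothesis finishes this case. When $\bA=D^*$, set equality alone cannot separate $\bP'$ from $\bP=D^*\bP'$, so here I would exploit reducedness: since $\bA_1=D^*$, \cref{def-atom-reduced} forces $\bA_2\not\subseteq D^*$ (and likewise $\bB_2\not\subseteq D^*$), so the content of $\bA_2$ contains a letter $c\notin D$. I would then characterize $\bP'$ as the minimal downwards-closed $S$ with $D^*S=\bP$: the words of $\bP'$ beginning with such an escaping letter $c$ are \emph{anchored}, i.e.\ cannot arise as a nontrivial $D^*$-extension of a shorter word of $\bP'$, and a pumping argument on pure-$D$ prefixes shows these anchored words force any valid $S$ to contain all of $\bP'$. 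This pins down $\bP'$ intrinsically from $\bP$ and $D$, whence $\bP'=\bQ'$ and the induction applies.

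The step I expect to be the main obstacle is precisely the star-atom case of the previous paragraph: proving that $A(\bP)$ equals the content $D$ of a leading $D^*$ without being polluted by letters lurking in later atoms, and cancelling a leading $D^*$ to recover the tail. Both hinge essentially on the reduced condition $\bA_2\not\subseteq\bA_1$, which guarantees that the material immediately after a star block escapes $D$; the first-kind case, by contrast, is handled cleanly and uniformly by the letter-quotient computation.
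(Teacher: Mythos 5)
Your two letter-quotient computations are correct, but the proposal leaves unproved exactly the step that carries the combinatorial weight of the theorem. The absorption-set claim --- $A(\bP)=D$ when $\bA_1=D^*$, and $A(\bP)=\emptyset$ when $\bA_1=I+\epsilon$ --- is only asserted (``is expected to equal''), and it is not a routine check: it is equivalent to the statement that $\bA\cdot\bP'\neq\bP'$ for every reduced sequence $\mathtt{A}\cdot\mathtt{P}'$, which is precisely the ``first claim'' that the paper isolates and proves by its own induction using the inclusion test of \cref{prop:incl-seq}. Indeed, if $x\in A(\bP)$ lies outside the content of $\bA_1$, your quotient computation gives $\{\bw\mid \bx\bw\in\bP\}=\{\bw\mid \bx\bw\in\bP'\}\subseteq\bP'$ (downward closure), and $A$-membership then forces $\bP=\bP'$, i.e.\ $\bA_1\cdot\bP'=\bP'$; similarly $x\in I\cap A(\bP)$ in the first-kind case forces $(I+\epsilon)\cdot\bP'=\bP'$. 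So proving your claim amounts to proving that a reduced product never absorbs its leading atom, and this needs real work: reducedness (\cref{def-atom-reduced}) constrains only \emph{adjacent} atoms, so a letter of $D$ may perfectly well recur in a deeper atom (the sequence $\{a\}^*\{b\}^*\{a\}^*$ is reduced), and one must rule out that such deeper atoms re-absorb the prepended letters. Nothing in your outline supplies that argument, and you have no analogue of the paper's inclusion test to do the case analysis for you.

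The second gap is that the first-kind case is \emph{not} ``handled cleanly and uniformly'': it suffers from the same pollution you flag in the star case. If $\bA_1=I_1+\epsilon$ and $\bA_2=E^*$, then every letter $x\in E$ has quotient $\bP'$ as well (it lies in $A(\bP')$), so the letters whose quotient is maximal form $I_1\cup E$, not $I_1$; equality of this data for $\bP$ and $\bQ$ does not give $I_1=J_1$, and you need $I_1=J_1$ \emph{before} you may ``pick any $x\in I$'' to identify the tails. This is repairable --- the maximal letter-quotient is intrinsic and equals the tail, so $\bP'=\bQ'$; then any $x\in I_1\setminus J_1$ satisfies $\bx\bP'\subseteq\bP'$, i.e.\ $I_1\setminus J_1\subseteq A(\bP')=E$, and directedness of $I_1$ gives $I_1=\dc(I_1\setminus J_1)\subseteq E$, contradicting reducedness --- but this repair again invokes the unproven absorption claim, and none of it appears in your text. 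For what it is worth, your cancellation step for a leading $D^*$ is essentially sound: by reducedness $\bA_2\not\subseteq D^*$ and directedness of $\bA_2$, every word of $\bP'$ is dominated by a word of $\bP'$ whose first letter escapes $D$, and such anchored words must land in any downwards-closed $S$ with $D^*S=\bP$, so $\bP'$ is the least such $S$. In sum, you have a correct outer scaffold and correct peripheral lemmas, but the core inductive fact ($\bA\cdot\bP\neq\bP$ for reduced sequences), around which the paper's whole proof is organized, is missing.
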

\begin{proof}
  Let us first observe that the claim is obvious for atoms: $\bA=\bB$
  entails $\mathtt{A}=\mathtt{B}$.  We now prove the statement in
  several steps: let $\mathtt{A} \cdot \mathtt{P}$ and $\mathtt{B}
  \cdot \mathtt{Q}$ be two reduced sequences of atoms.  \\

\noindent \textit{First claim:} $\bA \cdot \bP \neq \bP$:\\
      By induction on $\mathtt{P}$. If $\mathtt{P}$ is the empty
      sequence, then $\bP = \{\epsilon\}$ and $\bA \cdot \bP = \bA$.
      Now \cref{def-atom-reduced} guarantees $\bA \neq \{\epsilon\}$.
      Otherwise, $\mathtt{P}$ is some $\mathtt{A}' \cdot \mathtt{P}'$.
      If $\bA \cdot \bP \subseteq \bP$, the inclusion test described
      in \cref{prop:incl-seq} implies either $\bA \subseteq \bA'$,
      which contradicts reducedness, or $\bA \cdot \bA' \cdot \bP'
      \subseteq \bP'$, which entails $\bA' \cdot \bP' \subseteq \bP'$
      and contradicts the induction hypothesis. Therefore $\bA \cdot
      \bP \neq \bP$.
\\

\noindent \textit{Second claim:} $\bA \cdot \bP = \bB \cdot \bQ$ implies $\bA = \bB$:\\
      Since $\bA \cdot \bP \subseteq \bB \cdot \bQ$,
      \cref{prop:incl-seq} implies either $\bA
      \subseteq \bB$, or $\bA \cdot \bP \subseteq \bQ$. The second
      option, combined with $\bQ \subseteq \bB \cdot \bQ \subseteq \bA
      \cdot \bP$, leads to $\bQ = \bB \cdot \bQ$, which is impossible
      (first claim). Therefore, $\bA \subseteq \bB$, and the reverse
      inclusion is proved symmetrically.
\\

\noindent \textit{Third claim:} $\bA \cdot \bP = \bB \cdot \bQ$ and $\bA = \bB$ imply
    $\bP = \bQ$:\\
      If $\mathtt{Q}$ is the empty sequence, then
      $\bA\cdot\bP=\bB\cdot\bQ=\bB=\bA$, thus $\bP\subseteq\bA$. But
      if $\mathtt{P}$ is some $\mathtt{A}'\cdot\mathtt{P}'$ then by
      \cref{prop:incl-seq} either $\bA'\subseteq\bA$, which is
      impossible by reducedness of $\mathtt{A}\cdot\mathtt{P}$, or
      $\bA'\cdot\bP' \subseteq \{\epsilon\}$, requiring
      $\bA'\subseteq\{\epsilon\}$ which is also impossible. Thus
      $\mathtt{P}$ too is the empty sequence.

      If $|\mathtt{P}|=0$ the same reasoning applies so we now assume
      that both products are non-trivial, writing $\mathtt{P} =
      \mathtt{A}' \cdot \mathtt{P}'$ and $\mathtt{Q} = \mathtt{B}'
      \cdot \mathtt{Q}'$.  If now $\mathtt{A}$ is $I + \epsilon$ for
      some $I$, then so is $\mathtt{B}$ and \cref{prop:incl-seq}
      implies $\bA'\cdot \bP' \subseteq \bB'\cdot \bQ'$.  Otherwise,
      $\mathtt{A}$ is $D^*$ for some $D$, in which case
      \cref{prop:incl-seq} entails first $\bA'\cdot\bP' \subseteq \bB
      \cdot \bB' \cdot \bQ'$, then $\bA' \cdot \bP' \subseteq \bB'
      \cdot \bQ'$ (since $\bA' \not\subseteq \bA=\bB$ by reducedness
      of $\mathtt{A} \cdot \mathtt{P}$).  In other words, we deduce
      $\bP \subseteq \bQ$ and the reverse inclusion is proved
      symmetrically.  \\

\noindent \textit{Proof of the Theorem:}
      By induction on $\size{\mathtt{P}} + \size{\mathtt{Q}}$.  If
      either $\mathtt{P}$ or $\mathtt{Q}$ is the empty sequence, the
      property is trivially verified, otherwise we can write
      $\mathtt{P} = \mathtt{A} \cdot \mathtt{P}'$ and $\mathtt{Q} =
      \mathtt{B} \cdot \mathtt{Q}'$.  From $\bP = \bQ$ we deduce $\bA
      = \bB$ (second claim), which in turn implies $\bP' = \bQ'$
      (third claim), hence $\mathtt{P}' = \mathtt{Q}'$ by induction
      hypothesis. We already noted that $\bA = \bB$ implies
      $\mathtt{A} = \mathtt{B}$ and combining those gives $\mathtt{P}
      = \mathtt{Q}$.
\qed
\end{proof}

 \subsection{Finitary powersets}
\label{sec-powerset}

Given a \qo $(X,{\leq})$, we write $\Pcal(X)$ to denote its powerset,
with typical elements $S$, $T$,~\@\ldots{} A usual way of extending
the quasi-ordering between elements of $X$ into a quasi-ordering
between sets of such elements is the \emph{Hoare quasi-ordering} (also
called \emph{domination quasi-ordering}), denoted $\hoare$, and
defined by
\[
S\hoare T
\equivdef
\forall x\in S:\exists y\in T: x\leq y.
\]
A convenient characterization of this ordering is the following:
$S\hoare T$ iff $S \subseteq \dwc_X T$.
Note that
$(\Pcal(X),{\hoare})$ is in general not antisymmetric even when
$(X,{\leq})$ is. For example, and
writing
${\equiv_H}$ to denote ${\hoare} \cap {\invhoare}$,
the above characterization implies that
$S \equiv_H \dc_X S$ for any $S\subseteq X$.
In particular, this shows that the quotient $\Pcal(X) / {\equiv_H}$ is
isomorphic to $(\Down(X), {\subseteq})$. While
$(\Down(X), {\subseteq})$ is well-founded if, and only if, $(X,{\le})$
is a \wqo (cf.\ \cref{lem-ACC}), this does not guarantee that
$(\Down(X), {\subseteq})$ is a \wqo, as famously shown by
Rado~\cite{rado54}.\footnote{In fact $(\Pcal(X), {\hoare})$ is a \wqo
  iff $X$ is an $\omega^2$-\wqo~\cite{marcone2001,jancar99c}.}  In
other words, powerset is not a \wqo-preserving construction.

However, the \emph{finitary powerset} construction is \wqo-preserving.
Let $\Pf(X)$,  sometimes also written $[X]^{<\omega}$, denote the set
of all \emph{finite subsets} of $X$.
\begin{theorem}
  $(\Pf(X),{\hoare})$ is \wqo if, and only if, $(X, {\leq})$ is \wqo.
\end{theorem}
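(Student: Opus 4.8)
The plan is to prove the two implications separately, using the sequence characterization of \wqo recalled above: call an infinite sequence \emph{good} if it contains an increasing pair $x_i\leq x_j$ with $i<j$, and recall that a \qo is \wqo exactly when every infinite sequence is good.

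For the ``only if'' direction, I would argue contrapositively. Suppose $(X,{\leq})$ is not \wqo and fix a \emph{bad} sequence $(x_i)_{i\in\nat}$, i.e.\ one with $x_i\not\leq x_j$ whenever $i<j$. The singletons $(\{x_i\})_{i\in\nat}$ then form a bad sequence in $(\Pf(X),{\hoare})$: since $\{x_i\}\hoare\{x_j\}$ holds iff $x_i\leq x_j$, a good pair among the singletons would immediately yield a good pair among the $x_i$. Hence $(\Pf(X),{\hoare})$ is not \wqo.

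For the ``if'' direction---where the real content lies---I would reduce to Higman's Lemma via the ``forget the order'' surjection $\sigma\colon X^*\to\Pf(X)$ sending $\bu=x_1\cdots x_n$ to the set $\{x_1,\ldots,x_n\}$ of its letters (and $\epsilon\mapsto\emptyset$). This $\sigma$ is onto, since $\{x_1,\ldots,x_n\}=\sigma(x_1\cdots x_n)$, and it is monotone from $(X^*,{\leq_*})$ to $(\Pf(X),{\hoare})$: if $\bu\leq_*\bv$ with witness $p$, then each letter $x_i$ of $\bu$ satisfies $x_i\leq y_{p(i)}$ for a letter of $\bv$, which is exactly what $\sigma(\bu)\hoare\sigma(\bv)$ requires. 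The key observation is then that \wqo-ness transfers along monotone surjections: given any infinite sequence $(S_i)_{i\in\nat}$ in $\Pf(X)$, I pick preimages $\bu_i\in X^*$ with $\sigma(\bu_i)=S_i$; assuming $(X,{\leq})$ is \wqo, Higman's Lemma makes $(X^*,{\leq_*})$ a \wqo, so $\bu_i\leq_*\bu_j$ for some $i<j$, whence $S_i\hoare S_j$ by monotonicity of $\sigma$. Thus every infinite sequence in $\Pf(X)$ is good, and $(\Pf(X),{\hoare})$ is \wqo.

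I expect no serious obstacle here: Higman's Lemma does all of the combinatorial work, and the only things left to verify are the (routine) monotonicity of $\sigma$ and the transfer of \wqo-ness along a monotone surjection. The one point that deserves care is that $\hoare$ is merely a \emph{pre}order on $\Pf(X)$---distinct finite sets can be $\equiv_H$-equivalent, as observed just before the theorem---which is why I phrase the whole argument through the ``good sequence'' characterization of \wqo rather than through antichains or well-foundedness of a partial order, both of which are less convenient for preorders.
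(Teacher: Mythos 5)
Your proof is correct and follows essentially the same route as the paper: the ``if'' direction is exactly the paper's argument (the monotone surjection $X^*\to\Pf(X)$ sending a word to its set of letters, combined with Higman's Lemma and transfer of \wqo-ness along monotone surjective maps). You additionally spell out the ``only if'' direction via the singleton embedding, which the paper leaves implicit, and that part is also correct.
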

The if direction is an easy consequence of Higman's Lemma: the
function that maps each word in $X^*$ to its set of letters, in
$\Pf(X)$, is monotonic and surjective, and the image of a \wqo by any
monotonic map is \wqo.  We shall see another proof in
\cref{sec-equiv}.

\begin{proposition}[Ideals of $\pmb{\Pf(X)}$]
  \label{prop:idl-pfX}
  Given a \wqo $(X,{\le})$, the ideals of $(\Pf(X), {\hoare})$ are
  exactly the sets $\Jcal$ of the form $\Jcal=\Pf(D)$ for
  $D\in\Down(X)$.
\end{proposition}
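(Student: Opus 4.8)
The plan is to prove the two inclusions separately, using throughout the characterization $S \hoare T \iff S \subseteq \dwc_X T$ recalled just above.

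For the easy direction I would fix $D \in \Down(X)$ and verify directly that $\Pf(D)$ is non-empty, downwards-closed, and directed. Non-emptiness is immediate since $\emptyset \in \Pf(D)$. For downward-closure, if $S \in \Pf(D)$ and $T \hoare S$, then $T \subseteq \dwc_X S \subseteq D$ because $S \subseteq D$ and $D$ is downwards-closed; as $T$ is finite this gives $T \in \Pf(D)$. For directedness, given $S_1, S_2 \in \Pf(D)$ the finite set $S_1 \cup S_2$ lies in $\Pf(D)$ and dominates both in the Hoare ordering. Hence $\Pf(D)$ is an ideal.

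For the converse, given an arbitrary ideal $\Jcal \in \Idl(\Pf(X))$, I would set $D := \bigcup_{S \in \Jcal} S$ and claim $\Jcal = \Pf(D)$. The crucial point is that $D$ is downwards-closed: if $x \in D$, say $x \in S$ with $S \in \Jcal$, and $y \le x$, then the finite set $(S \setminus \{x\}) \cup \{y\}$ satisfies $(S \setminus \{x\}) \cup \{y\} \hoare S$, so it belongs to $\Jcal$ by downward-closure of $\Jcal$, whence $y \in D$. The inclusion $\Jcal \subseteq \Pf(D)$ is then immediate, since each $S \in \Jcal$ is a finite subset of $\bigcup \Jcal = D$. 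For $\Pf(D) \subseteq \Jcal$, I would take a finite $T = \{x_1, \ldots, x_k\} \subseteq D$; each $x_i$ lies in some $S_i \in \Jcal$, and directedness of $\Jcal$ --- extended from pairs to the finite family $S_1, \ldots, S_k$ by an easy induction --- produces a single $S \in \Jcal$ with $S_i \hoare S$ for all $i$. Then $T \hoare S$, so $T \in \Jcal$ by downward-closure. (The empty set is covered by noting that $\Jcal \neq \emptyset$ forces $\emptyset \in \Jcal$.)

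The argument is essentially routine; the only point demanding care is the use of directedness in the last step. Directedness supplies upper bounds only for \emph{finite} subfamilies of $\Jcal$, and this is exactly why the members of $\Jcal$ are forced to be finite subsets of $D$ rather than arbitrary ones: the finiteness of $T$ is what allows a single $S \in \Jcal$ to dominate all of $x_1, \ldots, x_k$ simultaneously. I expect no genuine obstacle beyond keeping this bookkeeping straight; in particular the \wqo hypothesis is not actually needed for this characterization.
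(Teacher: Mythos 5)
Your proof is correct and follows essentially the same route as the paper's: both directions match, including the key construction $D = \bigcup_{S \in \Jcal} S$ and the use of directedness to absorb finite subsets of $D$ into $\Jcal$ (the paper phrases this as closure under finite unions plus containment of singletons, you as domination of $T$ by an upper bound of the covering sets $S_1,\ldots,S_k$ --- the same argument in different clothing). Your closing observation that the \wqo hypothesis is never used is also accurate.
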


\begin{proof}
  $(\DaG)$ : $\emptyset \in \Pf(D)$, so $\Pf(D)$ is non-empty. It is
  downwards-closed, since if $S \hoare T \in \Pf(D)$, then
  $S \subseteq \dc_X T \subseteq \dc_X D = D$. It is directed, since
  if $S, T \in \Pf(D)$, then $S \cup T \in \Pf(D)$, and
  $S, T \hoare S \cup T$.
\\
\noindent
  $(\GaD)$ : Let $\Jcal$ be an ideal of $\Pf(X)$ and let
  $D = \bigcup_{S \in \Jcal} S$, so that $\Jcal \subseteq
  \Pf(D)$. Since $\Jcal$ is downwards-closed under $\hoare$, $D$ is
  downwards-closed under $\leq$ and $\{x\} \in \Jcal$ for all
  $x \in D$. Since $\Jcal$, being an ideal, is non-empty,
  $\emptyset \in \Jcal$. Finally, if $S, T \in \Jcal$, then there is
  some $U \in \Jcal$ such that $S, T \hoare U$. Thus
  $S \cup T \hoare U$, and therefore $S \cup T \in \Jcal$. Therefore,
  $\Jcal$ contains the empty set, all the singletons included in $D$,
  is closed under finite unions, and so is equal to $\Pf(D)$.
\qed
\end{proof}

When $(X,{\leq})$ is ideally effective,
finite subsets of $X$ can be represented using any of the usual data
structures and
\cref{prop:idl-pfX} directly
leads to a data structure for $\Idl(\Pf(X))$ inherited from the
representation of $X$'s ideals and downwards-closed sets.

\begin{theorem}
With the above representations, the finitary powerset with Hoare's
ordering is an ideally effective constructor.
\end{theorem}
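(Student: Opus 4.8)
The plan is to lean entirely on \cref{prop:idl-pfX}: since the ideals of $(\Pf(X),\hoare)$ are exactly the sets $\Pf(D)$ with $D\in\Down(X)$, an ideal of $\Pf(X)$ is represented by the downwards-closed set $D$, itself stored as a finite union of ideals of $X$. Every one of the seven operations of \cref{def:eff-wqo} will then be reduced to an operation on $X$. One derived tool I will use repeatedly is the computation of $X\setminus D$ for $D\in\Down(X)$ as a finite union of filters $\upc a_1\cup\cdots\cup\upc a_m$; this is available from \CI and \IF for $X$ by complementing each ideal of $D$ and intersecting the resulting upwards-closed sets.

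First I would dispatch the routine operations. For \OD, I decide $S\hoare T$ by checking, for each $x\in S$, whether some $y\in T$ satisfies $x\le y$, which is possible since $S,T$ are finite and $\le$ is decidable. For \ID, I use that $\Pf(D)\subseteq\Pf(D')$ iff $D\subseteq D'$, the latter being decidable by the usual reduction to \ID for $X$ (each ideal summand of $D$ must be included in some ideal summand of $D'$). For \PI, the principal ideal of $S\in\Pf(X)$ is $\dc S=\Pf(\dc_X S)$, where $\dc_X S=\bigcup_{x\in S}\dc_X x$ is computed via \PI for $X$. For \II, intersecting ideals is immediate from $\Pf(D)\cap\Pf(D')=\Pf(D\cap D')$, the intersection $D\cap D'$ being computed via \II for $X$. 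For \IF, intersecting filters is trivial because $\upc S\cap\upc T=\upc(S\cup T)$: a finite set dominates $S\cup T$ exactly when it dominates both $S$ and $T$.

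The two complementations are the crux, and I expect the filter case to need the most care because one must correctly turn ``$T$ does not dominate $S$'' into a subset condition. For \CF, $T\notin\upc S$ means some $x\in S$ is dominated by no element of $T$, i.e.\ $T\subseteq X\setminus\upc_X x$; hence $\neg\upc S=\bigcup_{x\in S}\Pf(X\setminus\upc_X x)$, a finite union of ideals, where each $X\setminus\upc_X x$ is computed by \CF for $X$. For \CI, given $\Pf(D)$ I first write $X\setminus D=\upc a_1\cup\cdots\cup\upc a_m$ with the derived tool above; then $T\notin\Pf(D)$ iff $T$ meets $X\setminus D$, i.e.\ $T$ contains an element above some $a_i$, which is exactly $\{a_i\}\hoare T$. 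Thus $\neg\Pf(D)=\bigcup_{i\le m}\upc\{a_i\}$, a finite union of principal filters.

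Finally I would supply the presentation data. For \XF, the whole set is a single principal filter, $\Pf(X)=\upc\emptyset$, since the empty set dominates every finite subset. For \XI, $\Pf(X)$ is itself an ideal, namely $\Pf(D)$ with $D=X$, so its ideal decomposition is the single downwards-closed set $X$ obtained from the presentation of $X$ via its own \XI. Uniform computability of the whole presentation from a presentation of $(X,\le)$ is then clear, since every algorithm above is a fixed combination of the algorithms provided for $X$.
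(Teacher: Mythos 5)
Your proposal is correct and follows essentially the same route as the paper: represent ideals of $\Pf(X)$ as $\Pf(D)$ via \cref{prop:idl-pfX}, reduce \OD, \ID, \PI, \II, \IF to the corresponding operations on $X$ by the identities $\Pf(D)\subseteq\Pf(D')\iff D\subseteq D'$, $\Pf(D)\cap\Pf(D')=\Pf(D\cap D')$, $\upc S\cap\upc T=\upc(S\cup T)$, and handle the complementations exactly as the paper does, with $\neg\upc_H S=\bigcup_{x\in S}\Pf(X\setminus\upc_X x)$ and $\neg\Pf(D)=\bigcup_i\upc_H\{a_i\}$ for $X\setminus D=\bigcup_i\upc_X a_i$, plus the same \XI/\XF data. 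Your only (harmless) refinement is to spell out that computing $X\setminus D$ for a union of ideals needs \IF in addition to \CI of $X$, a step the paper leaves implicit.
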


\begin{proof}
  Let $(X,{\le})$ be an ideally effective \wqo.  In the following we
  use shorthand notations such as $\dwc_H$ for $\dwc_{\Pf(X)}$, etc.,
  with the obvious meaning.
  \begin{description}
  \item[\OD:] The sets we consider being finite, the definition of
    $\hoare$ leads to an obvious implementation.
  \item[\ID:] Testing inclusion in $\Idl(\Pf(X))$ reduces to testing
    inclusion in $\Down(X)$ by
    $\Jcal_1=\Pf(D_1) \subseteq \Jcal_2=\Pf(D_2) \iff D_1 \subseteq
    D_2$.

  \item[\PI:] Given $S$ a finite subset of $X$, the principal ideal
    $\dwc_H S$ is $\Pf(\dwc_X S)$ so we just need to compute the
    downwards-closed $\dwc_X S= \bigcup_{x \in S} \dwc x$ in $X$'s
    representation.
  \item[\CF:] Given $S \in \Pf(X)$, the complement of $\upc_H S$ can
    be given an ideal decomposition via
    \begin{align*}
      \Pf(X) \setminus \upc_H S
      &= \bigcup_{x \in S} \Pf(X) \setminus \upc_H \{x\}
        = \bigcup_{x \in S} \Pf(X \setminus \upc x)
        \:.
    \end{align*}
    This can now be computed using \CF for $X$.
  \item[\II:] We have $\Pf(D_1) \cap \Pf(D_2) = \Pf(D_1 \cap D_2)$.
  \item[\IF:] Filters may be intersected using
    $\upc S \cap \upc T = \upc (S \cup T)$.
  \item[\CI:] Given an ideal $\Jcal=\Pf(D)$, $\Pf(X) \setminus \Jcal$
    consists of the sets that contain at least one element not in
    $D$. That is:
    \begin{align*}
      \neg\Jcal =  \upc_H \{ x_1 \}\cup \cdots
      \cup\upc_H \{x_n\} \text{ if } X\setminus D=\upc_X x_1\cup\cdots\cup
      \upc_X x_n,
    \end{align*}
    which is computable using \CI for $X$.
    
  \end{description}

  The above proves that the finite powerset constructor is an ideally
  effective constructor. Once again, the computability of the
  presentation described above from a presentation of $(X,{\le})$ is
  clear.  For \XI, observe that $\Pf(X)$ is its own ideal
  decomposition since $X\in\Down(X)$. For \XF, use
  $\Pf(X) = \upc_H\emptyset$.
  \qed
\end{proof}

\section{More constructions on ideally effective \wqos}
\label{sec-more-constr}

In this section we describe more constructions that yield new ideally
effective \wqos from previously defined ones.  By contrast with the
constructors of \cref{sec-idl-eff-constructors}, these constructions
take some extra parameters that are not \wqos---for example, an
equivalence relation in order to build quotient \wqos (see
\cref{sec-quotient}). Showing that the quotient \wqo is ideally
effective will need some effectiveness assumptions on the equivalence
at hand, in the spirit of what we did with the one-sorted
constructors.

\subsection{Order extension}
\label{sec-extension}

Let $(X, {\leq})$ be a \wqo and let $\le'$ be an extension of $\le$
(i.e., $\le \:\subseteq\: \le'$). Then $(X, {\le'})$ is also a
\wqo. In this subsection, we investigate the ideals of $(X,{\le'})$
and present some sufficient condition for $(X,{\le'})$ to be ideally
effective, assuming $(X, {\leq})$ is.  In the next subsections, we
will present natural applications of this framework.

\begin{proposition}
  \label{ideals-under-extension}
  Given a \wqo $(X,{\leq})$ and an extension $\le'$ of $\le$, the
  ideals of $(X, {\leq'})$ are exactly the downward closures under
  $\le'$ of the ideals of $(X, {\leq})$. That is,
  \begin{align*}
    \Idl(X, {\le'}) = \{\dwcp I \mid I \in \Idl(X, {\leq}) \}\:.
  \end{align*}
\end{proposition}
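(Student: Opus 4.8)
The plan is to prove the two inclusions separately: the easy inclusion $\supseteq$ will establish that every $\dwcp I$ is genuinely a $\le'$-ideal, and the harder inclusion $\subseteq$ will reduce an arbitrary $\le'$-ideal to this form using the finite decomposition and primality machinery already developed for \wqos. Throughout I may rely on the fact, noted just above the statement, that $(X,{\le'})$ is again a \wqo, so that \cref{lem-fin-dec-exists}, \cref{prop:prime-idl} and \cref{lem-irreduc} apply to both orders.

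For the inclusion $\supseteq$, I would take an arbitrary $I \in \Idl(X,{\le})$ and check the three defining properties of a $\le'$-ideal for $\dwcp I$. Nonemptiness is inherited from $I$, and $\dwcp I$ is $\le'$-downward-closed because a downward closure is always downward closed. For $\le'$-directedness, given $x_1, x_2 \in \dwcp I$, I pick $y_1, y_2 \in I$ with $x_i \le' y_i$; since $I$ is $\le$-directed there is $z \in I$ with $y_1 \le z$ and $y_2 \le z$, and because ${\le} \subseteq {\le'}$ this $z$ is a common $\le'$-upper bound of $x_1$ and $x_2$ lying in $\dwcp I$. Hence $\dwcp I \in \Idl(X,{\le'})$.

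For the inclusion $\subseteq$, I would start from an arbitrary $J \in \Idl(X,{\le'})$. Since ${\le} \subseteq {\le'}$, every $\le'$-downward-closed set is also $\le$-downward-closed, so $J \in \Down(X,{\le})$. Applying \cref{lem-fin-dec-exists} in the \wqo $(X,{\le})$ yields $\le$-ideals $I_1, \ldots, I_n$ with $J = I_1 \cup \cdots \cup I_n$. Taking $\le'$-downward closures, using that $J$ is $\le'$-downward-closed (so $\dwcp J = J$) and that $\dwcp$ distributes over finite unions, I obtain $J = \dwcp I_1 \cup \cdots \cup \dwcp I_n$. Each $\dwcp I_k$ is a $\le'$-ideal by the first inclusion, while $J$ itself is a $\le'$-ideal and hence $\le'$-down-prime by \cref{prop:prime-idl}. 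The irreducibility characterization (\cref{lem-irreduc}) then forces $J = \dwcp I_k$ for some $k$, exhibiting $J$ in the required form.

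The only delicate point is this last step, and it is also where I expect the main conceptual obstacle to lie: an arbitrary $\le'$-ideal need \emph{not} be $\le$-directed, so one cannot simply take $I = J$. The decomposition-plus-primality detour is precisely what repairs this. To invoke \cref{lem-irreduc} one needs both that $J$ is prime in $(X,{\le'})$ and that each $\dwcp I_k$ is a legitimate element of $\Down(X,{\le'})$; the matching reverse inclusion $\dwcp I_k \subseteq J$, which combines with primality to give equality, follows from $I_k \subseteq J$ together with $\dwcp J = J$.
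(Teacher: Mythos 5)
Your proof is correct and follows essentially the same route as the paper's: the same direct verification that $\dwcp I$ is a $\le'$-ideal, and the same decomposition of a $\le'$-ideal $J$ (viewed as a $\le$-downwards-closed set) into $\le$-ideals, followed by taking $\le'$-downward closures and invoking primality/irreducibility (\cref{prop:prime-idl}, \cref{lem-irreduc}) to conclude $J = \dwcp I_k$ for some $k$. You merely spell out details the paper leaves implicit, such as the directedness check and the reliance on $(X,{\le'})$ being a \wqo.
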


\begin{proof}
  $(\supseteq)$ Let $I\in\Idl(X,{\le})$.  Even though $I$ may not be
  downwards-closed in $(X, {\le'})$, it is still directed.  It is easy
  to see that $\dwcp I$ is directed, non-empty, and downwards-closed
  for $\le'$. Thus it is an ideal of $(X, {\le'})$.
  \\
  \noindent
  $(\subseteq)$ Let $J$ be an ideal of $(X,{\le'})$. $J$ may not be
  directed in $(X, {\leq})$, but it is still downwards-closed under
  $\le$. As a consequence, it can be decomposed as a finite union of
  ideals of $(X, {\leq})$: $J = I_1 \cup \dots \cup I_n$. Then
  $J = \dwcp J = \dwcp I_1 \cup \dots \cup \dwcp I_n$. Now applying
  \cref{lem-irreduc} to $(X,{\leq'})$, we have $J = \dwcp I_i$ for
  some $i$.  \qed
\end{proof}

Assume that $(X,{\leq})$ is an ideally effective \wqo for which we
have a presentation at hand, in particular data structures for $X$ and
$\Idl(X)$.  Let $\le'$ be an extension of $\le$.  To represent
elements of $(X,{\le'})$, it is natural to use the same data structure
for $X$ as the one used for $(X,{\leq})$.  For ideals,
\cref{ideals-under-extension} suggests to also use the same data
structure as the one for ideals of $X$. That is, an ideal
$J \in \Idl(X,{\le'})$ will actually be represented by any
$I \in \Idl(X)$ such that $J = \dwc_{\le'} I$.

Using these representations for $(X,{\le'})$ does not always lend
itself to algorithms that would witness ideal effectiveness, even
under the assumptions that $(X,{\leq})$ is ideally effective and that
$\le'$ is decidable. There is even is a ``natural'' counter example:
the lexicographic ordering over $X\times X$ (see
\cref{sec-alternatives}).  This fact justifies that we make further
assumptions.  More precisely, we show that $(X, {\le'})$ is ideally
effective if we can compute downward closures under $\le'$:
\begin{theorem}
  \label{thm-ext-idl-eff}
  Let $(X,{\leq})$ be an ideally effective \wqo and $\le'$ an extension
  of $\le$. Then, $(X,{\le'})$ is ideally effective for the
  aforementioned data structures of $X$ and $\Idl(X,{\le'})$,
  whenever the following functions are computable:
\begin{xalignat*}{2}
  \ci:&
  \begin{array}{rl}
   \Idl(X, {\leq}) &\to \Down(X, {\leq}) \\
   I &\mapsto \dwcp I
  \end{array}
&
\cf:&  
  \begin{array}{rl}
   \Fil(X, {\leq}) &\to \Up(X, {\leq}) \\
   \upc x &\mapsto \upc_{\le'} (\upc x) = \upc_{\le'} x
  \end{array}
\end{xalignat*}
Moreover, under these assumptions, a presentation of $(X,{\le'})$ can
be computed uniformly from a presentation of $(X,{\leq})$ and
algorithms realizing $\ci$ and $\cf$.
\end{theorem}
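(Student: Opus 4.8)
The plan is to verify, for $(X,{\le'})$ and the fixed representations, each of the seven requirements of \cref{def:eff-wqo} together with the decompositions \XI and \XF of a presentation, reducing every operation to the corresponding operation of the ideally effective $(X,{\le})$ and to the two given functions $\ci$ and $\cf$. The engine of the argument is a single \emph{reinterpretation} observation that I would establish first. Suppose a downwards-closed set $D$ of $(X,{\le})$ happens to be closed under $\le'$ as well, and let $D=\bigcup_j I_j$ be its $\le$-ideal decomposition; since $D=\dwcp D$ and $\dwcp$ commutes with unions, $D=\bigcup_j \dwcp I_j$, so the very same generators $I_j\in\Idl(X,{\le})$ that represent $D$ in $(X,{\le})$ also represent it as a finite union of $\le'$-ideals, under the representation fixed after \cref{ideals-under-extension}. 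Dually, if an upwards-closed set $U$ of $(X,{\le})$ is already $\le'$-upward-closed and $U=\bigcup_k \upc z_k$, then $U=\bigcup_k \upc_{\le'} z_k$, so the same generators represent $U$ in $(X,{\le'})$.

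With this lemma in hand the three ``pointwise'' requirements are quick. For \OD, to decide $x\le' y$ I would compute $\upc_{\le'} x=\cf(\upc x)$ as a finite $\le$-filter union $\bigcup_i \upc x_i$ and test whether $x_i\le y$ for some $i$, which is decidable by \OD for $(X,{\le})$. For \PI, I would use $\dwc_{\le'} x=\dwcp(\dc x)$, so the $\le$-principal ideal $\dc x$, computable by \PI for $(X,{\le})$, is exactly the representative of the $\le'$-principal ideal. For \ID, testing $\dwcp I_1\subseteq\dwcp I_2$ is equivalent to $I_1\subseteq\dwcp I_2$ (because $I_1\subseteq\dwcp I_1$ and $\dwcp I_2$ is $\le'$-downward-closed); I would then compute $\dwcp I_2=\ci(I_2)=\bigcup_k I_{2,k}$ and, using that $I_1$ is down prime in $(X,{\le})$ by \cref{prop:prime-idl}, reduce to checking $I_1\subseteq I_{2,k}$ for some $k$ via \ID for $(X,{\le})$.

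The four ``global'' operations all follow the same template: compute the answer as a subset of $X$ with the $(X,{\le})$-machinery, observe that it is automatically closed under $\le'$, and apply the reinterpretation lemma to read off its $\le'$-representation from its $\le$-representation. For \CF I would compute $\upc_{\le'} x=\cf(\upc x)=\bigcup_i \upc x_i$ and then $X\setminus\upc_{\le'} x=\bigcap_i (X\setminus\upc x_i)$ using \CF and \II of $(X,{\le})$; the result is $\le'$-downward-closed, so its $\le$-ideal generators serve verbatim. For \II I would expand $\dwcp I_1\cap\dwcp I_2=\bigcup_{a,b}(I_{1,a}\cap I_{2,b})$ via $\ci$ and \II of $(X,{\le})$. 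For \IF and \CI I would dually compute $\upc_{\le'} x\cap\upc_{\le'} y$ and $X\setminus\dwcp I=\bigcap_a(X\setminus I_a)$ using \IF, \CI of $(X,{\le})$ and $\cf,\ci$; each result is $\le'$-upward-closed, so its $\le$-filter generators serve verbatim. Finally \XF and \XI come for free, since a $\le$-filter (resp.\ ideal) decomposition of $X$ is already a $\le'$-decomposition of $X$ by the reinterpretation lemma, $X$ being trivially $\le'$-closed in both directions. Uniform computability of the whole presentation from a presentation of $(X,{\le})$ and algorithms for $\ci,\cf$ is then clear, as every step is an explicit composition of those algorithms.

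The part requiring the most care is the bookkeeping between the two representations of closed sets. One must notice that the sets produced by the $(X,{\le})$-algorithms in \CF, \II, \IF and \CI are automatically closed under $\le'$---being complements of $\le'$-closed sets, or intersections of such---since this is precisely what licenses the reinterpretation lemma and lets their $\le$-generators be reused unchanged. The hypotheses that $\ci$ and $\cf$ are computable enter to bridge the two worlds: $\cf$ unfolds a $\le'$-filter $\upc_{\le'} x$ into a $\le$-upward-closed set on which the $(X,{\le})$-algorithms can act, while $\ci$ unfolds the representative $I$ of a $\le'$-ideal into its true content $\dwcp I$, a $\le$-downward-closed set. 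Without these two functions the $\le'$-objects could not be handed to the underlying $(X,{\le})$-algorithms at all, which is exactly why they must be assumed rather than derived.
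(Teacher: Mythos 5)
Your proposal is correct and follows essentially the same route as the paper: each operation of \cref{def:eff-wqo} for $(X,{\le'})$ is reduced to the corresponding operation of $(X,{\le})$ via the same formulas (e.g.\ $x\le' y \iff y\in\cf(\upc x)$, $\dwcp I_1\subseteq\dwcp I_2 \iff I_1\subseteq\ci(I_2)$, $\neg\upc_{\le'}x = X\setminus\cf(\upc x)$, $\dwcp I_1\cap\dwcp I_2=\ci(I_1)\cap\ci(I_2)$), with \XI and \XF inherited unchanged. Your explicit ``reinterpretation lemma'' is just a spelled-out version of what the paper uses tacitly when it reads off $\le$-ideal (resp.\ filter) decompositions as $\le'$-decompositions; the only point the paper adds that you omit is the optional post-processing (via \ID and \OD for $\le'$) to restore canonicity of the resulting decompositions, which is not required by \cref{def:eff-wqo}.
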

Note that if $I \in \Idl(X, {\leq})$, then $\dwcp I$ is also
downwards-closed for $\le$ and thus can be represented as a
downwards-closed set of $(X,{\leq})$. This is precisely this
representation that the function $\ci$ outputs. Same goes for
$\cf$. Note that using functions $\ci$ and $\cf$, it is possible to
compute the downward and upward closure under $\le'$ of arbitrary
downwards- and upwards-closed sets for $\le$ using the canonical
decompositions:
$\dwcp (I_1 \cup \dots \cup I_n) = (\dwcp I_1) \cup
\dots \cup (\dwcp I_n)$ and
$\upc_{\le'} (\upc x_1 \cup \dots \cup
\upc x_n) = \upc_{\le'} x \cup \dots \cup \upc_{\le'} x_n$.
\begin{proof}
We proceed to show that $(X, {\le'})$ is ideally effective.

  \begin{description}
\item[\OD:] One can tests $x \le' y$, since this is equivalent to
$y \in\cf(\upc_{\le} x)$.
\item[\ID:] Ideal inclusion can be decided using $\ci$ and the inclusion test
  for downwards-closed sets of $(X,{\leq})$:
  $\dwcp{I_1} \subseteq \dwcp{I_2} \lequiv I_1 \subseteq \ci{I_2}$.

\item[\PI:] The principal ideal $\dwcp x$ of $(X, {\le'})$ is
  represented by $\dwc_{\le} x$, since $\dwcp (\dwc_{\le} x) = \dwcp x$.

\item[\CF:] For $x \in X$, the filter complement $X \setminus
  \upc_{\le'} x$ is $X \setminus \cf(\upc_{\le} x)$ which can be
  computed, using \CF and \II for $(X, {\leq})$, as a downwards-closed
  set in $(X,{\leq})$.  This is represented by an ideal decomposition
  $D=\bigcup_{i<n}I_i$ which is canonical in $(X,{\le})$ but not
  necessarily in $(X,{\le'})$ since one may have $\dwcp I_i\subseteq
  \dwcp I_j$ for $i\neq j$. However, extracting the canonical ideal
  decomposition wrt.\ $\le'$ can be done using \ID for $(X, {\le'})$.

\item[\II:] Intersection of ideals is computed with
  $\dwcp{I_1} \cap \dwcp{I_2} = \ci(I_1) \cap \ci(I_2)$. Here again,
  this result in an ideal decomposition that is canonical for $\le$
  but not for $\le'$ until we process it
   as done for \CF.
\item[\CI, \IF:] these operations are obtained similarly.
\end{description}
With algorithms for the closure functions $\ci$ and $\cf$, the presentation above is
computable from a presentation of $(X,{\le})$. Regarding \XF and
\XI, we note that filter
and ideal decompositions of $X$ for $\le$ are also valid decompositions
for $\le'$. However, these decompositions might not be canonical for
$\le'$ even if they are for $\le$, in which case the canonical
decompositions can be obtained using $\OD$ and $\ID$, as usual.
\qed
\end{proof}

 \subsubsection{Sequences under stuttering.}
\label{sec-stutter}

In this subsection, we apply \cref{thm-ext-idl-eff} to an extension
of Higman's ordering $\leq_*$ on
finite sequences (from \cref{sec-sequences}).

Given a \qo $(X, {\leq})$, we define the \emph{stuttering ordering}
$\leqst$ over $X^*$ by
$\bm{x}=x_1\cdots x_n\leqst \bm{y}=y_1\cdots y_m$ $\equivdef$ there
are $n$ indices $1\leq p_1 \leq p_2 \leq \cdots \leq p_n\leq m$ such
that $x_i\leq y_{p_i}$ for all $i=1,\ldots,n$. Compared with Higman's
ordering, the sequence of positions $(p_i)_{i=1,\ldots,n}$ in $\bm{y}$
need not be \emph{strictly} increasing: repetitions are allowed.  For
instance, if $X = \{ a,b \}$ is a finite alphabet, then
$aabbaa \leqst aba \leqst aabbaa$ but $aabbaa \not\leqst ab$. Or with
$X = \nat$, $(1, 1, 1) \leqst (2)$. Note that even when $(X, {\leq})$ is
antisymmetric, $(X^*, {\leqst})$ need not be.

\begin{remark}
There is another way to define the stuttering ordering: define the
stuttering equivalence relation $\simst$ on $X^*$ as the smallest
equivalence relation such that for all $\bm{x}, \bm{y} \in X^*$ and $a \in X$,
$\bm{x}a\bm{y} \simst \bm{x}aa\bm{y}$. Informally, this equivalence
does not distinguish between a single and several consecutive
occurrences of a same element.
Then, $\leqst \:=\: \leq_* \circ \simst$, where $\circ$ denotes
the composition of relations.
Observe that $\simst$ is \emph{not} the same as the equivalence
relation $\equivst \:=\: \leqst \cap \gest$ induced by the ordering, even
if $(X, {\leq})$ is a partial order $\leq$. For instance, if $a \leq b$
in $X$, then $ab \equivst b$ in $X^*$, but $ab \simst b$ does not
hold. However the inclusion ${\simst} \subseteq {\equivst}$ is always
valid.
\qed
\end{remark}

Obviously, $\leqst$ is an order extension of $\leq_*$, thus $(X^*, {\leqst})$
is a \wqo when $(X,{\leq})$ is, and we can apply \cref{thm-ext-idl-eff}.

\begin{theorem}
\label{thm-stutter-idl-eff}
  The stuttering extension of a \wqo $(X,{\leq})$ is an ideally
  effective constructor.
\end{theorem}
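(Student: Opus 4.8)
The plan is to invoke the order-extension machinery of \cref{thm-ext-idl-eff}, taking as base the ideally effective \wqo $(X^*,{\le_*})$ furnished by \cref{thm-hig-idl-eff} and as extension the stuttering ordering $\leqst$. Since $\leqst$ extends $\le_*$, \cref{thm-ext-idl-eff} reduces the whole claim to exhibiting algorithms for the two closure operators $\ci\colon I\mapsto\dwc_{\leqst}I$ on $\le_*$-ideals and $\cf\colon\upc_{\le_*}\bw\mapsto\upc_{\leqst}\bw$ on $\le_*$-filters. I would treat these separately, relying throughout on the description of the ideals of $(X^*,{\le_*})$ as products of atoms $\bA_1\cdots\bA_n$ (\cref{thm-idl-higman}).

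For $\ci$ I would first observe that $\leqst$-downward closure commutes with concatenation, $\dwc_{\leqst}(\bA_1\cdots\bA_n)=\dwc_{\leqst}\bA_1\cdots\dwc_{\leqst}\bA_n$: one inclusion concatenates witnesses, and the other splits a weakly increasing witness of $\bw\leqst\bz_1\cdots\bz_n$ at the block boundaries of $\bz$ (the witness being weakly increasing, the letters of $\bw$ landing in a fixed $\bz_j$ form a consecutive block). This reduces the computation to single atoms. One then checks directly that $\dwc_{\leqst}(D^*)=D^*$, since any letter of $\bw\leqst\bz$ with $\bz\in D^*$ lies below a letter of $D$, hence in $D$. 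The crucial identity is $\dwc_{\leqst}(I+\epsilon)=I^*$: the nontrivial inclusion uses that $I$ is \emph{directed}, so any finite set of letters of $I$ admits a common upper bound in $I$, which a single position can dominate. Consequently $\ci$ is the purely syntactic rewriting that replaces every first-kind atom $I+\epsilon$ by the second-kind atom $I^*$ and leaves $D^*$ atoms untouched; this is plainly computable and returns a single $\le_*$-ideal.

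For $\cf$ I would unfold the definition of $\leqst$: a weakly increasing witness of $\bw\leqst\bu$ groups the letters of $\bw=w_1\cdots w_k$ into consecutive nonempty blocks, each dominated by a single letter of $\bu$, these letters occurring in order. Thus $\upc_{\leqst}\bw=\bigcup\{\bu\mid\exists\,q_1<\cdots<q_m,\ u_{q_j}\in B_j\}$, the union ranging over the finitely many factorizations of $\bw$ into consecutive blocks $\beta_1\cdots\beta_m$, where $B_j=\bigcap_{i\in\beta_j}\upc_X w_i\in\Up(X)$. Each $B_j$ is computable as a finite union of $X$-filters by iterating \IF for $X$ (and is empty exactly when its block has no common upper bound, in which case that factorization contributes nothing); distributing over the resulting finite bases (\cref{lem-FBP}) turns each summand into a finite union of $\le_*$-filters $\upc_{\le_*}(c_1\cdots c_m)$. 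Collecting over all factorizations exhibits $\upc_{\leqst}\bw$ as an explicitly computable finite union of filters.

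With computable $\ci$ and $\cf$ in hand, \cref{thm-ext-idl-eff} yields both that $(X^*,{\leqst})$ is ideally effective and that its presentation is uniformly computable from one of $(X,{\le})$, which is exactly the claim. I expect $\cf$ to be the main obstacle: whereas the downward closure $\ci$ collapses to the clean substitution $I+\epsilon\rightsquigarrow I^*$, the upward closure genuinely feels the order on $X$ through the block-domination condition, and making it effective is what forces the appeal to filter intersection (\IF) and the finite basis property of $X$.
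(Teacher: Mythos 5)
Your proposal is correct and follows essentially the same route as the paper's proof: reduce via \cref{thm-ext-idl-eff} (with \cref{thm-hig-idl-eff}) to computing the two closure maps, compute $\ci$ atom-wise through the identities $\ci(D^*)=D^*$, $\ci(I+\epsilon)=I^*$ and multiplicativity over concatenation, and compute $\cf$ by the cut-into-consecutive-blocks formula, which is exactly the paper's \cref{stutter-cf}, realized effectively through \IF for $X$ and finite bases. The only difference is cosmetic: you spell out the witness-splitting arguments that the paper labels ``quite immediate,'' and your union-over-factorizations phrasing of $\cf$ is the same set the paper writes as $\upc_*$ of words built from $\min$ of block-wise filter intersections.
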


\begin{proof}
  In the light of \cref{thm-ext-idl-eff,thm-hig-idl-eff}, it suffices to show that the following
  closure functions are computable:
\begin{xalignat*}{2}
  \ci:&
  \begin{array}{rl}
 \Idl(X^*, {\leq_*}) &\to \Down(X^*, {\leq_*}) \\
  I &\mapsto \dwc_{\mathrm{st}} I
  \end{array}
&
  \cf:&
  \begin{array}{rl}
 \Fil(X^*, {\leq_*}) &\to \Up(X^*, {\leq_*}) \\
  \upc \bu &\mapsto \upc_{\mathrm{st}} (\upc \bu) = \upc_{\mathrm{st}} \bu
  \end{array}
\end{xalignat*}
Recall from \cref{sec-extension} that the ideals of
$(X^*,{\le_*})$ are the (concatenation) products of atoms, where atoms are either of the
form $D^*$ for some $D \in \Down(X)$ or $I + \epsilon$ for some
$I \in \Idl(X)$. It is quite immediate to see that
$\ci(D^*) = D^*$ and $\ci(I+\epsilon) = I^*$, and that given two
products of atoms $\bP_1, \bP_2$,
$\ci(\bP_1 \cdot \bP_2) = \ci(\bP_1) \cdot \ci(\bP_2)$.
From these equations, it is simple to write an inductive algorithm
computing $\ci$.

Function $\cf$ is computable as well, although less
straightforward. We provide an expression for $\cf$ in \cref{stutter-cf}
which is clearly computable.
\qed
\end{proof}

\begin{lemma}
  \label{stutter-cf}
  
  Given $\bu = x_1 \cdots x_n \in X^*$ a non-empty sequence, 
  \begin{multline*}
    \cf(\bu) = \upcst \bu =
    \upc_* \left\{ y_1 \cdots y_k \Biggm\vert
      \begin{array}{l} 0 < k \le n \\
        0 = \ell_0 < \ell_1 < \dots < \ell_k = n \\ \forall j=1,\ldots,k:
        y_j \in \min (\bigcap_{\ell_{j-1} < \ell \le \ell_{j}} \upc_X x_{\ell})
      \end{array} \right\},
  \end{multline*}
  where $\min (A)$ denotes a finite basis of the upwards-closed subset
  $A$. The remaining case is trivial:
  $\cf(\epsilon) = \upc_* \epsilon$.

  (Intuitively, the set ranges over all ways to cut $\bu$ in $k$
  consecutive pieces, and embeds all elements of the $j$-th piece into
  the same element $y_j$. It has long sequences, the longest being
  $\bu$, and shorter ones with potentially larger elements.)
\end{lemma}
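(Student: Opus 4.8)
The plan is to prove the set equality in \cref{stutter-cf} by double inclusion, the crux being a correspondence between non-decreasing witnesses of $\leqst$ and the interval partitions of $[n]$ over which the displayed set ranges. Recall that $\bu \leqst \bw$ holds exactly when there is a \emph{non-decreasing} map $p : [n] \to [|\bw|]$ with $x_i \le w_{p(i)}$ for every $i$. The key observation is that, since $p$ need not be strictly increasing, several consecutive positions of $\bu$ may share one image in $\bw$, and the letter of $\bw$ at that position must then dominate all the corresponding $x_i$ simultaneously---which is exactly what the intersections $\bigcap_{\ell_{j-1} < \ell \le \ell_j} \upc_X x_\ell$ record.

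For the inclusion $\upcst \bu \subseteq \upc_* \{\cdots\}$, I would take an arbitrary $\bw$ with $\bu \leqst \bw$, fix a witnessing map $p$, and group $[n]$ by the fibres of $p$. As $p$ is non-decreasing these fibres are intervals, so the distinct images $q_1 < \cdots < q_k$ induce cut points $0 = \ell_0 < \cdots < \ell_k = n$ with $p(\ell) = q_j$ iff $\ell_{j-1} < \ell \le \ell_j$. For each block the single letter $w_{q_j}$ lies in $\bigcap_{\ell_{j-1} < \ell \le \ell_j} \upc_X x_\ell$; by the finite basis property (\cref{lem-FBP}) this upwards-closed set has a minimal basis, so some generator $y_j \in \min(\cdots)$ satisfies $y_j \le w_{q_j}$. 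Then $y_1 \cdots y_k$ belongs to the displayed set and embeds into $\bw$ along $q_1 < \cdots < q_k$, giving $\bw \in \upc_* \{\cdots\}$.

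For the reverse inclusion I would run this argument backwards: given $\bw \ge_* y_1 \cdots y_k$ for an admissible choice of $k$, cut points $(\ell_j)$, and generators $y_j \in \min(\bigcap_{\ell_{j-1} < \ell \le \ell_j} \upc_X x_\ell)$, fix an embedding $q_1 < \cdots < q_k$ of $y_1 \cdots y_k$ into $\bw$ and send every index $\ell$ of the $j$-th block to the single position $q_j$. This plateau map is non-decreasing, and $x_\ell \le y_j \le w_{q_j}$ for all such $\ell$, so it witnesses $\bu \leqst \bw$.

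The argument is mostly bookkeeping; the one point that warrants care is the passage from a witness map to blocks, i.e.\ checking that the fibres of a non-decreasing map really are intervals so that the $\ell_j$ are well defined, and that allowing plateaus is precisely what lets one letter of $\bw$ absorb an entire block. Finiteness of the displayed set (so that $\upc_* \{\cdots\}$ is a bona fide filter decomposition) is then immediate: there are finitely many $k \le n$, finitely many interval partitions for each, and each $\min(\bigcap_\ell \upc_X x_\ell)$ is finite by \cref{lem-FBP}. This also yields the computability asserted in \cref{thm-stutter-idl-eff}, since each intersection $\bigcap_\ell \upc_X x_\ell$ is obtained by iterating \IF on the presentation of $X$ and its minimal basis is extracted using \OD.
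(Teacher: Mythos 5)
Your proof is correct and follows essentially the same route as the paper's: both directions hinge on the correspondence between non-decreasing witness maps for $\leqst$ and interval partitions of $[n]$ (the paper's $i_j$, defined as the right-most index mapped to the $j$-th image letter, are exactly your fibre cut points $\ell_j$). If anything, you are slightly more careful than the paper at one spot: you explicitly descend from the dominating letter $w_{q_j}$ to a basis element $y_j \in \min(\bigcap_{\ell} \upc_X x_\ell)$, a step the paper's proof glosses over when it asserts that $y_1 \cdots y_k$ "is an element of the set described in the proposition."
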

This is the fully generic formula to describe the function $\cf$ for any
$X$. However, in simple cases, $\cf(\bw)$ takes a much simpler form.
For instance, for $X = \nat$, we have $\cf(x_1 \cdots x_n) = \upc_*
\max(x_1,\cdots,x_n)$, and for $X = \Sigma$ a finite alphabet,
$\cf(\bw) = \upc_* \bv$ where $\bv$ is the shortest
member of  the equivalence class  $[\bw]_{\sim_{st}}$ (that is, $\bv$
is obtained from $\bw$ by fusing consecutive equal letters).

\begin{proof}[Of \cref{stutter-cf}]
  The ``$\supseteq$'' direction is obvious.
  \\
  \noindent
  $(\subseteq)$ Given $\bw \gest x_1 \cdots x_n$, there exists a
  decomposition
  $\bw = \bw_0 y_1 \bw_1 y_2 \cdots y_k \bw_k$ for some $k \le n$,
  $y_1, \dots, y_k \in X$
  and $\bw_0, \dots, \bw_k \in X^*$, and there exists a monotonic mapping
  $p: [n] \rightarrow [k]$ such that $x_i \le y_{p(i)}$.
  For $j \in [k]$, define
  $i_j$ to be the largest $i$ such that $p(i) = j$ (i.e., the index of
  the right-most symbol of $x_1 \cdots x_n$ to be mapped to $y_j$),
  and let $i_0 = 0$. It
  follows that $0 = i_0 < i_1 < \dots < i_k = n$, and for all
  $\ell \in [n]$ and $j \in [k]$,
  $i_{j-1} < \ell \le i_{j} \implies x_{\ell} \le y_j$. Then
  $\bw \ge_* y_1 \cdots y_k$ which is indeed an element of the set
  described in the proposition.
  \qed
\end{proof}

 \subsection{Quotienting under a compatible equivalence}
\label{sec-quotient}
\label{sec-equiv}

In this subsection, we apply the results of \cref{sec-extension} to the most commonly encountered case of order-extension: quotient under an equivalence relation.

Let $(X, {\leq})$ be a \wqo and let $\eqE$ be an equivalence relation on
$X$ which is \emph{compatible} with $\leq$ in the sense that
$\leq \circ \eqE \:=\: \eqE \circ \leq$, where $\circ$ denotes the composition
of relations. Define the relation $\leq_E$ on $X$ to be
$\leq \circ \eqE$. Then $\leq_E$ is clearly reflexive, and is transitive
since
\[\leq_E \circ \leq_E \:=\: (\leq \circ \eqE) \circ (\leq \circ \eqE) \:=\: \leq \circ (\eqE \circ \leq) \circ \eqE \:=\: \leq \circ (\leq \circ \eqE) \circ \eqE \:=\: \leq \circ \eqE \:=\: \leq_E .\]
In this subsection, we give sufficient conditions for $(X,{\le_E})$ to be
ideally effective, provided $(X,{\leq})$ is.

\begin{remark}
Note that stuttering from \cref{sec-stutter} is \emph{not} an example:
Although ${\leqst} = {\leq_*} \circ {\simst}$,
the other condition does not hold: $\leqst \:\neq\: \simst \circ \leq_*$.
For instance, consider $X = \nat^2$ where $\tup{1,2}\tup{2,1} \leqst \tup{2,2}$.
However, if $X$ is a finite alphabet, the equality
$\leqst \:=\: \simst \circ \leq_*$ holds and
$(X^*, {\leqst})$ can be treated as a quotient.
As another example, the finitary powerset $\Pf(X)$ from \cref{sec-powerset} can be obtained as a quotient of
$(X^*, {\le_*})$, and could be shown ideally effective using
\cref{thm-quo-idl-eff} below. However, because operations in $\Pf(X)$ are quite simple,
and because powerset is a fundamental constructor, we decided to provide a
direct, more concrete, construction.
\qed
\end{remark}

Observe that $\leq_E$ is an extension of $\leq$, and thus results on
quotients can be seen as an application of \cref{sec-extension}.
However, since quotients are of such importance in computer science
(and used more often than mere extensions), we reformulate
\cref{thm-ext-idl-eff} in this specific context: functions $\ci$ and
$\cf$ take an interesting form.  As in the case of extensions,
elements and ideals of $(X,{\leq_E})$ will be represented using the
data structures coming from a presentation of $(X,{\leq})$.
\begin{theorem}
  \label{thm-quo-idl-eff}
  Let $(X,{\leq})$ be an ideally effective \wqo and $\eqE$ be an
  equivalence relation on $X$ compatible with $\le$.  Then,
  $(X,{\le_E})$ is ideally effective for the aforementioned data
  structures of $X$ and $\Idl(X,{\le_E})$, whenever the following
  functions are computable:
  \begin{xalignat*}{2}
     \ci:&
     \begin{array}{rl}
      \Idl(X, {\leq}) &\to \Down(X, {\leq}) \\
          I &\mapsto \overline{I}
    \end{array}
    &
   \cf:&
    \begin{array}{rl}
       \Fil(X, {\leq}) &\to \Up(X, {\leq}) \\
          \upc x &\mapsto \upc \overline{x}
    \end{array}
  \end{xalignat*}
  where, given $S \subseteq X$, $\overline S$ denotes the closure
  under $\eqE$ of $S$, i.e.,
  $\overline{S} \egdef \{ y \mid \exists x \in S: x\eqE y \}$, and
  $\overline{x}$ is a shortcut for $\overline{\{ x \}}$ which
  is the equivalence class of $x$.

Moreover, under these assumptions, we can compute a presentation of
$(X,{\le_E})$ from a presentation of $(X,{\leq})$.
\end{theorem}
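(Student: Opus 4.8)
The plan is to obtain this theorem as a direct instance of \cref{thm-ext-idl-eff}. Since $\leq_E \egdef {\leq} \circ {\eqE}$ is an order extension of $\leq$ (as already observed), $(X,{\leq_E})$ is a \wqo and \cref{thm-ext-idl-eff} applies: it suffices to exhibit computable algorithms for the two closure operators $I \mapsto \dwc_{\leq_E} I$ on $\Idl(X,{\leq})$ and $\upc x \mapsto \upc_{\leq_E} x$ on $\Fil(X,{\leq})$. The entire content of the theorem therefore reduces to the two identities
\[
\dwc_{\leq_E} I = \overline{I} \quad\text{and}\quad \upc_{\leq_E} x = \upc\,\overline{x},
\]
which say exactly that these closures coincide with the $\eqE$-closures produced by $\ci$ and $\cf$, assumed computable in the statement.

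I would establish the filter identity first, as it is immediate. By definition $y \in \upc_{\leq_E} x$ iff $(x,y) \in {\leq}\circ{\eqE}$, while $y \in \upc\,\overline{x}$ iff $(x,y) \in {\eqE}\circ{\leq}$. Compatibility, i.e.\ ${\leq}\circ{\eqE} = {\eqE}\circ{\leq}$, makes these two sets equal on the nose, so $\cf$ indeed computes $\upc_{\leq_E} x$ and no further argument is needed.

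The ideal identity is the step that requires a genuine argument, and is where I expect the one subtlety to sit. The inclusion $\overline{I} \subseteq \dwc_{\leq_E} I$ is easy: if $x \in I$ and $x \eqE y$, then $y \leq_E x$ (take the reflexive $\leq$-step at $y$, followed by $y \eqE x$), so $y \in \dwc_{\leq_E} I$. For the converse, take $y \in \dwc_{\leq_E} I$, so $y \leq_E z$ for some $z \in I$, that is, $y \leq w$ and $w \eqE z$ for some $w$. The point is to commute these two steps: since $(y,z) \in {\leq}\circ{\eqE} = {\eqE}\circ{\leq}$, there is some $v$ with $y \eqE v$ and $v \leq z$; as $I$ is downwards-closed for $\leq$ and $z \in I$, we get $v \in I$, whence $y \in \overline{I}$. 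This use of compatibility to push the $\leq$-step \emph{inside} the $\leq$-ideal $I$ is precisely where the hypothesis is essential, and it also shows en passant that $\overline{I}$ is downwards-closed for $\leq$, as required for $\overline{I}$ to be a legitimate output of $\ci$.

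With both identities in hand, the hypotheses of \cref{thm-ext-idl-eff} are met by the assumed algorithms for $\ci$ and $\cf$, so $(X,{\leq_E})$ is ideally effective. Finally, the uniform computability of a presentation of $(X,{\leq_E})$ from a presentation of $(X,{\leq})$ together with the algorithms for $\ci$ and $\cf$ is inherited verbatim from the corresponding clause of \cref{thm-ext-idl-eff}, so nothing remains to be checked there.
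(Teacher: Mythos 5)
Your proof is correct and follows essentially the same route as the paper: both reduce the theorem to \cref{thm-ext-idl-eff} and verify that the $\le_E$-closures of filters and ideals coincide with the $\eqE$-closures computed by $\cf$ and $\ci$, using the compatibility condition ${\le} \circ {\eqE} = {\eqE} \circ {\le}$. The paper states the two identities as immediate consequences of the equalities ${\le_E} = {\le}\circ{\eqE} = {\eqE}\circ{\le}$; your write-up simply fills in the details (notably the use of $\le$-downward-closedness of $I$ in the ideal case) that the paper leaves implicit.
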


\begin{proof}
  In the light of \cref{thm-ext-idl-eff}, it suffices to show
  $\upc_{\le_E} F = \overline{F}$ and $\dwc_{\le_E} I = \overline{I}$
  for any filter $F$ and any ideal $I$ of $(X,{\leq})$.  The first
  equality follows from $\le_E \:=\: \le \circ \eqE$ while the second comes
  from $\le_E \:=\: \eqE \circ \le$. This is why we introduced the
  compatibility condition $\le \circ \eqE \:=\: \eqE \circ \le$.
\qed
\end{proof}
In particular, we see that
the ideals of $(X, {\leq_E})$ are exactly the closures under $\eqE$ of the ideals of $(X, {\leq})$. That is,
$\Idl(X, {\leq_E}) = \bigl\{\overline{I} : I \in \Idl(X, {\leq}) \bigr\}$.

We conclude this section with two results that are specific to \wqos
obtained by quotienting, and which lead to simplifications in several
algorithms.
\begin{proposition}
  Let $J$ be an ideal under $\leq_E$, and let
  $J = I_1 \cup \cdots \cup I_k$ be the canonical ideal decomposition
  of $J$ under $\leq$. Then $J = \overline{I_i}$ for every $i$.
\end{proposition}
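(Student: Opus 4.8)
The plan is to establish the two inclusions $\overline{I_i}\subseteq J$ and $J\subseteq\overline{I_i}$ for each $i$, the first being routine and the second being the heart of the matter. For the easy inclusion, recall from \cref{thm-quo-idl-eff} that $\dwc_{\le_E}I=\overline I$ for every ideal $I$ of $(X,\le)$; since each $I_i\subseteq J$ and $J$ is downward-closed for $\le_E$, we get $\overline{I_i}=\dwc_{\le_E}I_i\subseteq\dwc_{\le_E}J=J$. So everything reduces to showing $J\subseteq\overline{I_i}$, i.e.\ that each piece is $\le_E$-cofinal in $J$.

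First I would record that $J$ is saturated under $\eqE$: if $x\in J$ and $x\eqE y$ then, using $\le_E={\le}\circ{\eqE}$ with the witness $y\le y$ and the symmetry of $\eqE$, we have $y\le_E x$, so $y\in J$ by downward-closure. Consequently $J=\overline J=\overline{I_1}\cup\cdots\cup\overline{I_k}$, a decomposition of $J$ into ideals of $(X,\le_E)$ (each $\overline{I_i}=\dwc_{\le_E}I_i$ is such an ideal). Since $J$ is itself an ideal of $(X,\le_E)$, it is down-prime by \cref{prop:prime-idl} applied to $(X,\le_E)$, so \cref{lem-irreduc} already yields $J=\overline{I_{i_0}}$ for at least one index $i_0$.

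The remaining task is to upgrade this to every $i$, and the key intermediate step I would prove is: \emph{if $\overline{I_b}\ne J$, then $\overline{I_b}\subseteq\overline{I_c}$ for some $c\ne b$}. Indeed, pick $z\in J\setminus\overline{I_b}$; for any $w\in I_b$, use directedness of $J$ to choose $t\in J$ with $z\le_E t$ and $w\le_E t$, and let $t\in I_d$. If $d=b$ then $z\le_E t\in I_b$ would give $z\in\dwc_{\le_E}I_b=\overline{I_b}$, a contradiction, so $d\ne b$; moreover $w\le_E t\in I_d$ yields $w\in\dwc_{\le_E}I_d=\overline{I_d}$. Thus $I_b\subseteq\bigcup_{d\ne b}\overline{I_d}$, a finite union of sets that are downward-closed for $\le$; as $I_b$ is a $\le$-down-prime, \cref{lem-irreduc} gives $I_b\subseteq\overline{I_c}$ for a single $c\ne b$, whence $\overline{I_b}\subseteq\overline{I_c}$.

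To finish I would combine this step with the incomparability of the canonical pieces $I_1,\dots,I_k$, arguing by well-founded induction on $J$ in $(\Idl(X,\le_E),\subsetneq)$ (well-founded by \cref{lem-ACC} for the \wqo $(X,\le_E)$) so as to exclude any bad piece. The clean reformulation to aim for is that \emph{each maximal $\le$-ideal $I_i$ meets every $\eqE$-class occurring in $J$}. The main obstacle is precisely the reconciliation at the end: directedness of $J$ produces a common upper bound $t$ but gives no control over which canonical piece contains it, and the relation $\overline{I_b}\subseteq\overline{I_c}$ it yields lives only at the level of $\eqE$-closures, not at the level of $\le$, so it does not by itself contradict the $\le$-incomparability of the pieces. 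Turning it into such a contradiction is where the compatibility hypothesis must be used in full --- it forces $\eqE$-equivalences to descend along $\le$ (this is exactly what the identity $\dwc_{\le_E}I=\overline I$ encodes), which is what prevents a maximal $\le$-ideal from avoiding a class realized elsewhere in $J$.
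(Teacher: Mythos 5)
Your first three steps are sound, and they match the opening of the paper's own proof: the inclusion $\overline{I_i}\subseteq J$, the $\eqE$-saturation of $J$, and the existence of at least one index $i_0$ with $J=\overline{I_{i_0}}$ (the paper gets this by citing the proof of \cref{ideals-under-extension}, which is the same primality argument you give). Your intermediate step is also correct as stated: if $\overline{I_b}\neq J$ then $I_b\subseteq\bigcup_{d\neq b}\overline{I_d}$, hence $I_b\subseteq\overline{I_c}$ for a single $c\neq b$ by down-primality of $I_b$ in $(X,{\le})$. But, as you yourself observe, this is where the argument stops: an inclusion between $\eqE$-closures cannot contradict the minimality of the canonical decomposition, which is an incomparability statement about the $I_i$ themselves under $\le$-inclusion. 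Your closing paragraph (well-founded induction on $(\Idl(X,{\le_E}),{\subsetneq})$, ``each maximal $\le$-ideal meets every $\eqE$-class occurring in $J$'' --- which is just a restatement of the goal) is a declaration of intent, not an argument. So the proposal has a genuine gap, exactly at the step you flag as ``the heart of the matter.''

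The reason your approach stalls is the use of directedness of $J$ as a $\le_E$-ideal: upper bounds taken for $\le_E$ only locate elements up to closure, so every conclusion lives at the level of the $\overline{I_d}$'s. The paper instead uses $\le$-directedness of the distinguished piece $I_{i_0}$ (say $i_0=1$) and lets compatibility do the descent. Concretely: supposing $J\neq\overline{I_k}$, pick $x\in I_1$ with no $\eqE$-equivalent in $I_k$ (possible since $\overline{I_k}\subsetneq J=\overline{I_1}$). For arbitrary $y\in J$ choose $y'\in I_1$ with $y\eqE y'$, then $z\in I_1$ with $x\le z$ and $y'\le z$ by directedness of $I_1$ under $\le$; compatibility turns $y\eqE y'\le z$ into $y\le z'\eqE z$ for some $z'$. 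This $z'$ is a genuine $\le$-upper bound of $y$; it lies in $J$ by $\eqE$-saturation, and it cannot lie in $I_k$, since otherwise compatibility applied to $x\le z\eqE z'$ together with $\le$-downward closure of $I_k$ would place an $\eqE$-equivalent of $x$ inside $I_k$. Hence $y\le z'\in I_1\cup\cdots\cup I_{k-1}$, so $y\in I_1\cup\cdots\cup I_{k-1}$ by $\le$-downward closure. Thus $J=I_1\cup\cdots\cup I_{k-1}$, making $I_k$ redundant and contradicting canonicity. This is the reconciliation your sketch is missing: the common upper bound must be taken inside $I_1$ for $\le$, not inside $J$ for $\le_E$, and compatibility is precisely what converts the resulting ${\eqE}\circ{\le}$ relation into ${\le}\circ{\eqE}$, i.e., into a $\le$-bound on $y$ that interacts with the $\le$-level decomposition.
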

\begin{proof}
  Recall from the proof of \cref{ideals-under-extension} that $J =
  \overline{I_i}$ for some $i$. Without loss of generality, we can assume
  $i=1$.  For the sake of contradiction, suppose that there exists some $i$
  such that $J \neq \overline{I_i}$. Again without loss of generality, we
  can assume $i = k$. From $\overline{I_1} \neq \overline{I_k}$, we deduce
  that there exists $x \in I_1$ which has no $\eqE$-equivalent in $I_k$.

  We will now show that $J \subseteq I_1 \cup \cdots \cup I_{k-1}$,
  which will be a contradiction since we assumed that we started from
  a canonical ideal decomposition. Let $y \in J$. Then there exists
  a $y' \in I_1$ such that $y \eqE y'$. Since $I_1$ is an ideal under
  $\leq$, there is a $z \in I_1$ such that $x \leq z$ and $y' \leq
  z$. We have $y \eqE y' \leq z$, thus there exists $z'$ such that
  $y \leq z' \eqE z$. Since $J$ is closed under $\eqE$-equivalence,
  $z' \in J$, hence $z' \in I_i$ for some $i$. However, $z'$ cannot
  belong to $I_k$, since $x \leq_E z'$ and the $\eqE$-equivalence class
  of $x$ is disjoint from $I_k$. So
  $z' \in I_1 \cup \cdots \cup I_{k-1}$, and hence
  $y \in I_1 \cup \cdots \cup I_{k-1}$. Thus
  $J = I_1 \cup \cdots \cup I_{k-1}$, and we have a contradiction.
\qed
\end{proof}

\begin{proposition}
\label{prop-intersection-in-modE}
  For any two ideals $I_1,I_2\in\Idl(X,{\le})$,
  $\overline{I_1} \cap \overline{I_2} = \overline{I_1 \cap
    \overline{I_2}} =
  \overline{\overline{I_1} \cap I_2}$.\\
  For any two filters $F_1,F_2\in \Fil(X,{\le})$,
  $\overline{F_1} \cap \overline{F_2} = \overline{F_1 \cap
    \overline{F_2}} = \overline{\overline{F_1} \cap F_2}$.
\end{proposition}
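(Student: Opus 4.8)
The plan is to establish a purely set-theoretic fact that holds for \emph{any} subsets $A,B\subseteq X$ under an arbitrary equivalence relation $\eqE$, namely $\overline{A}\cap\overline{B}=\overline{A\cap\overline{B}}$, and then to read off both claimed identities from it. Indeed, the three displayed expressions are obtained from one another by swapping the roles of $I_1$ and $I_2$ (note that $\overline{I_1}\cap\overline{I_2}$ is symmetric in its arguments), so a single argument, applied once to $(I_1,I_2)$ and once to $(I_2,I_1)$, settles the whole statement; the same remark shows that the ideal case and the filter case need no separate treatment. Strikingly, neither directedness, nor the compatibility condition $\le\circ\eqE=\eqE\circ\le$, nor the \wqo hypothesis will be used: all that matters is that $\overline{\cdot}$ is saturation by an equivalence relation, hence monotone ($S\subseteq T$ implies $\overline{S}\subseteq\overline{T}$) and idempotent ($\overline{\overline{S}}=\overline{S}$).

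For the inclusion $\overline{A\cap\overline{B}}\subseteq\overline{A}\cap\overline{B}$ I would combine $A\cap\overline{B}\subseteq A\subseteq\overline{A}$ with monotonicity and idempotence to obtain $\overline{A\cap\overline{B}}\subseteq\overline{A}$, and symmetrically use $A\cap\overline{B}\subseteq\overline{B}$ to obtain $\overline{A\cap\overline{B}}\subseteq\overline{B}$; intersecting the two yields the claim. For the reverse inclusion I would take $y\in\overline{A}\cap\overline{B}$ and pick witnesses $a\in A$ with $a\eqE y$ and $b\in B$ with $b\eqE y$; then symmetry and transitivity of $\eqE$ give $a\eqE b$, so $a\in\overline{B}$, whence $a\in A\cap\overline{B}$, and since $a\eqE y$ we conclude $y\in\overline{A\cap\overline{B}}$.

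I do not expect a genuine obstacle: the whole content is the realization that the asserted identity is a general property of equivalence-saturation, which collapses what looks like an order-theoretic lemma about ideals and filters into a two-line verification. The one point deserving care is the reverse inclusion, where transitivity of $\eqE$ is invoked to pass from two witnesses above $y$ to the membership $a\in\overline{B}$; everything else is bookkeeping with monotone, idempotent closures.
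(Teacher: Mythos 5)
Your proof is correct and follows essentially the same route as the paper's: the right-to-left inclusion via monotonicity of $\eqE$-saturation, the left-to-right inclusion by picking witnesses $a\eqE y\eqE b$ and using transitivity, with the second identity and the filter case obtained by symmetry. The only (harmless) difference is presentational: you make explicit that the argument is a general fact about saturation of arbitrary subsets, which the paper's proof also implicitly is, since it never uses directedness, compatibility of $\eqE$ with $\le$, or the \wqo hypothesis.
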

\begin{proof}
We show $\overline{I_1} \cap \overline{I_2} = \overline{I_1 \cap
\overline{I_2}}$, the other equality is symmetric. For the right-to-left
inclusion, we have $I_1 \cap \overline{I_2} \subseteq \overline{I_1} \cap
\overline{I_2}$, and closing both sides under $\eqE$ gives the required
result. For the left-to-right inclusion, let $x \in \overline{I_1} \cap
\overline{I_2}$. Then there exist $x_1 \in I_1$ and $x_2 \in I_2$ such that
$x_1 \eqE x \eqE x_2$. Then $x_1 \in I_1 \cap \overline{I_2}$, and thus $x
\in \overline{I_1 \cap \overline{I_2}}$. 

The same proof applies to filters.
\qed
\end{proof}
Thanks to \cref{prop-intersection-in-modE}, we can compute intersections of
filters (resp., ideals) with only one invocation of $\cf$ (resp., $\ci$)
instead of the two invocations required by the algorithm described in the proof
of \cref{thm-ext-idl-eff}.

 \subsubsection{Sequences under conjugacy.}
\label{sec-conjugacy}

Consider a \wqo $(X, {\leq})$, and define an equivalence relation
$\simcj$ on $X^*$ as follows: $\bw \simcj \bv$ iff there exist
$\bm{x},\bm{y}$ such that $\bw=\bm{x}\bm{y}$ and $\bv=\bm{y}\bm{x}$.
One can imagine an equivalence class of $\simcj$ as a sequence written
on an (oriented) circle instead of a line.
We can now define a notion of \emph{subwords under conjugacy} via
$\leqcj \: \egdef \: \simcj \circ \leq_*$, which is exactly the
relation denoted $\preceq_c$ in~\cite[p.~49]{abdulla-forward-lcs}.

Since $\simcj$ is compatible with $\leq_*$, that is
${\leq_*} \circ {\simcj} \: = \: {\simcj} \circ {\leq_*}$, our results
over quotients apply to $(X^*,{\leqcj})$.
\begin{theorem}
  Sequence extension with conjugacy is an ideally effective constructor.
\end{theorem}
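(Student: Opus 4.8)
The plan is to invoke \cref{thm-quo-idl-eff} with base \wqo $(X^*,{\le_*})$ and compatible equivalence ${\eqE}={\simcj}$. The base is ideally effective by \cref{thm-hig-idl-eff}, and $\simcj$ is indeed compatible, i.e. ${\le_*}\circ{\simcj}={\simcj}\circ{\le_*}$: given a witness of $\bu\le_*\bv$ and a conjugate $\bv=\bx\by\mapsto\by\bx$, one splits $\bu$ at the point where the witness crosses the cut between $\bx$ and $\by$, obtaining the matching conjugate of $\bu$ embedded in $\by\bx$. By \cref{thm-quo-idl-eff} it then suffices to exhibit algorithms for the two closure operators $\cf\colon \upc\bu\mapsto\upc_*\overline{\bu}$ and $\ci\colon\bP\mapsto\overline{\bP}$, where $\overline{\,\cdot\,}$ denotes closure under $\simcj$.

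Computing $\cf$ is immediate. The conjugacy class $\overline{\bu}$ of a word $\bu=x_1\cdots x_n$ is just its finite set of cyclic rotations $\{x_{k+1}\cdots x_n x_1\cdots x_k\mid 0\le k<n\}$ (and $\{\epsilon\}$ when $\bu=\epsilon$). Hence $\cf(\upc\bu)=\upc_*\overline{\bu}$ is the finite union of the principal filters generated by these rotations, which is plainly computable.

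The crux is $\ci$: the conjugacy closure of an ideal $\bP=\bA_1\cdots\bA_n$, written as a product of atoms via \cref{thm-idl-higman}. I claim $\overline{\bP}$ is again a finite union of products of atoms, obtained by rotating the \emph{sequence} of atoms, with the twist that an atom $D^*$ sitting at the rotation cut must be split and reappear at both ends. Concretely, writing $\sigma_k=\bA_{k+1}\cdots\bA_n\bA_1\cdots\bA_{k-1}$ for the cyclic product of the remaining atoms, I would set $\bP_k=(I_k+\epsilon)\cdot\sigma_k$ when $\bA_k=I_k+\epsilon$ and $\bP_k=D_k^*\cdot\sigma_k\cdot D_k^*$ when $\bA_k=D_k^*$, and prove $\overline{\bP}=\bigcup_{k=1}^n\bP_k$ (with $\overline{\bepsilon}=\bepsilon$). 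Each $\bP_k$ is a product of atoms, hence an ideal, so the right-hand side is a computable downwards-closed set. Correctness is a factorization argument: any conjugate of a word $\ba_1\cdots\ba_n\in\bP$ arises by cutting within the contribution $\ba_k=\ba_k'\ba_k''$ of some atom $\bA_k$, yielding $\ba_k''\,\sigma_k\,\ba_k'$; since $D^*$ is closed under factorization and concatenation, both pieces of such a cut remain in $D^*$ and land at the two ends, whereas an $I+\epsilon$ atom can only be cut at a boundary; conversely every word of $\bP_k$ is displayed as such a conjugate. This bookkeeping of how a rotation meets the atom boundaries—in particular the self-splitting of $D^*$ atoms—is the main obstacle. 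Once the formula for $\ci$ is in hand, the remaining operations \OD, \ID, \PI, \CF, \II, \CI, \IF and the decompositions \XF, \XI follow uniformly from \cref{thm-quo-idl-eff}, completing the proof.
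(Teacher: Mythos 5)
Your proposal is correct and follows essentially the same route as the paper: invoke \cref{thm-quo-idl-eff} over the base $(X^*,{\le_*})$, compute $\cf$ via the finite set of cyclic rotations of a word, and compute $\ci$ by rotating the sequence of atoms with a $D^*$ atom at the cut duplicated at both ends---your $\bP_k=D_k^*\cdot\sigma_k\cdot D_k^*$ (resp.\ $(I_k+\epsilon)\cdot\sigma_k$) is exactly the paper's $c^{(i)}(\bP)\cdot e(\bA_i)$ with $e(D^*)=D^*$ and $e(I+\epsilon)=\epsilon$. The only difference is that you additionally sketch the compatibility of $\simcj$ with $\le_*$ and the correctness of the closure formula, which the paper states without proof.
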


\begin{proof}
  Note that the data structures used for elements and ideals of
  $(X^*, {\leqcj})$ are obtained from data structures for $(X^*,{\le_*})$
  as done with \cref{thm-quo-idl-eff}.

  In the light of \cref{thm-quo-idl-eff}, it suffices to show that we
  can compute closures under $\simcj$ of elements and ideals of
  $(X^*,{\le_*})$.  Given $\bw \in X^*$, the equivalence class of
  $\bw$ under $\simcj$ is equal to
  $ \overline{\bw} = \{ c^{(i)}(\bw) \mid 0 \le i < \max(1,\size{\bw})
  \}$, where $c^{(i)}$ denotes the $i$-th iterate of the cycle
  operator $c(w_1\cdots w_n) = w_2 \cdots w_n w_1$, which corresponds
  to rotating the sequence $i$ times.  This expression is obviously
  computable.

  Computing the closure under $\simcj$ of ideals is quite similar.
  Remember that ideals of $(X^*, {\le_*})$ are products of atoms, where
  atoms are either of the form $D^*$ for some
  $D \in \Down(X)$, or of the form $I+\epsilon$, for
  some $I \in \Idl(X)$. Then,
  given $\bP = \bA_0 \cdots \bA_{k-1}$ an ideal of $(X^*, {\le_*})$:
\begin{align*}
  \overline{\bP} = \bigcup_{i=0}^{k-1} c^{(i)}(\bP) \cdot e(\bA_i) \:,
\end{align*}
where $e(D^*) = D^*$ and $e(I+\epsilon) = \epsilon$.
The presence of the extra $e(A_i)$ in the above expression might
become clearer when considering a simple example as $\bP = \{a\}^*\{b\}^*$ where
$\overline{\bP}= \{a\}^*\{b\}^*\{a\}^*\cup \{b\}^*\{a\}^*\{b\}^*$.
Indeed, $abba\simcj baab\simcj aabb\in\bP$.
\qed
\end{proof}

 \subsubsection{Multisets under the embedding ordering.}
 \label{sec-multisets}

Given a \wqo $(X, {\leq})$, we consider the set $\multX$ of finite
multisets over $X$. Intuitively, multisets are sets where an element
might occur multiple times.
Formally, a multiset $M \in \multX$ is a function from $X$ to
$\nat$: $M(x)$ denotes the number of occurrences of $x$ in $M$.
The support of a multiset $M$ denoted $Supp(M)$ is the set
$\{x \in X \mid M(x) \neq 0 \}$. A multiset is said to be finite if its
support is.

A natural algorithmic representation for these objects are lists of
elements of $X$, keeping in mind that a permutation of a list
represents the same multiset.
Formally, this means that $\multX$ is the quotient of $X^*$ by the
equivalence relation $\sim$ defined by 
\begin{align*}
  \bu=u_1 \cdots u_n \sim \bv = v_1 \cdots v_m
 \equivdef n = m \land
  \exists \sigma \in S_n : u_i = v_{\sigma(i)}\text{ for all $i=1,\ldots,n$,}
\end{align*}
where $S_n$ denotes the group of permutations over $\{1,\cdots,n\}$.

Once again, the equivalence relation $\sim$ is compatible with
$\le_*$. We denote by $\lemb$ the composition
${\sim} \circ {\le_*} = {\le_*} \circ {\sim}$, often called multiset
embedding.
(There exist other classical quasi-orderings on finite multisets, such
as the domination quasi-ordering, aka the Dershowitz-Manna
quasi-ordering~\cite{dershowitz79}: see~\cite[Theorem 7.2.3]{halfon-thesis} for
 a proof that it is an ideally  effective constructor.)
For this section, we focus on $(\multX, {\lemb})$,
which is an application of our results on quotients.

\begin{theorem}
  \label{thm-mult-idl-eff}
  Finite multisets with multiset embedding is an ideally effective
  constructor.
\end{theorem}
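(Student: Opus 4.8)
The plan is to realize $(\multX,{\lemb})$ as a quotient of the Higman extension $(X^*,{\le_*})$ and then invoke the machinery already assembled for quotients. Indeed, $\multX$ is $X^*$ modulo the permutation equivalence $\sim$, and the identity $\lemb = {\sim}\circ{\le_*} = {\le_*}\circ{\sim}$ (noted just above the statement) says precisely that $\sim$ is compatible with $\le_*$. Since $(X^*,{\le_*})$ is ideally effective by \cref{thm-hig-idl-eff}, \cref{thm-quo-idl-eff} reduces the entire theorem to exhibiting algorithms for the two closure maps $\cf$ and $\ci$ of that theorem, which close filters (resp.\ ideals) of $(X^*,{\le_*})$ under $\sim$. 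All the data-structure bookkeeping and the seven operations of \cref{def:eff-wqo} then follow automatically from \cref{thm-quo-idl-eff}.

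The map $\cf$ is immediate: for $\bu\in X^*$ the class $\overline{\bu}$ is the finite set of reorderings of $\bu$, so $\cf(\upc_*\bu)=\upc_*\overline{\bu}=\bigcup_{\bv\in\overline{\bu}}\upc_*\bv$ is a finite, plainly computable union of principal filters. The real content is the formula for $\ci$. Recall from \cref{thm-idl-higman} that an ideal of $(X^*,{\le_*})$ is a product of atoms $\bP=\bA_1\cdots\bA_n$, each atom being either $D^*$ with $D\in\Down(X)$ or $I+\epsilon$ with $I\in\Idl(X)$. I would split the atoms of $\bP$ into the \emph{star} atoms $D^*_{i_1},\dots,D^*_{i_p}$ and the \emph{singleton} atoms $I_{j_1}+\epsilon,\dots,I_{j_q}+\epsilon$, set $D \egdef \bigcup_{k} D_{i_k}$ (a downwards-closed set, so that $D^*$ is again an atom), and claim
\[
  \overline{\bP} = \bigcup_{\sigma\in S_q} D^*\,(I_{j_{\sigma(1)}}+\epsilon)\,D^*\,(I_{j_{\sigma(2)}}+\epsilon)\,D^* \cdots D^*\,(I_{j_{\sigma(q)}}+\epsilon)\,D^* .
\]
Since unions in $\Down(X)$ and the ideal/downwards-closed data for $X$ are available and $S_q$ is finite, the right-hand side is computable as a finite union of products of atoms; as usual one then distributes over the ideal decomposition of each factor to recover a genuine ideal decomposition.

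The crux, and the step I expect to be the main obstacle, is proving this identity, since it is where the intuition behind multiset embedding must be made precise. For $\subseteq$, a word of $\overline{\bP}$ is a reordering of some $\bu_1\cdots\bu_n\in\bP$; listing the singleton letters that actually occur, in the order they appear in the reordered word, fixes $\sigma$, while every remaining letter comes from some $D_{i_k}\subseteq D$ and is dropped into an adjacent $D^*$ slot. For $\supseteq$ one checks directly that each word on the right has the same multiset of letters as a legitimate element of $\bP$. Two features drive the exact shape of the formula and are the delicate points to get right: one must range over \emph{all} $\sigma\in S_q$, because permutation can reorder the singleton letters (with $D^*(I_1+\epsilon)(I_2+\epsilon)$ a word such as $x_2x_1$ is reachable only after swapping the singletons), and one must insert a copy of $D^*$ in \emph{every} one of the $q+1$ gaps, because a freely repeatable $D$-letter may land between any two singleton letters (for instance $\overline{D^*(I+\epsilon)}=D^*(I+\epsilon)D^*$ contains $aba$, which neither $D^*(I+\epsilon)$ nor $(I+\epsilon)D^*$ does). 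Finally I would verify the degenerate cases separately: $p=0$ gives $D=\emptyset$ and $D^*=\{\epsilon\}$, collapsing the formula to $\bigcup_{\sigma}(I_{j_{\sigma(1)}}+\epsilon)\cdots(I_{j_{\sigma(q)}}+\epsilon)$, and $q=0$ gives $\overline{\bP}=D^*$, both as expected.
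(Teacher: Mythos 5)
Your proposal is correct and follows essentially the same route as the paper: reduce to \cref{thm-quo-idl-eff} via the compatible permutation equivalence $\sim$ on $(X^*,{\le_*})$, compute $\cf$ by enumerating reorderings, and compute $\ci$ by merging the star atoms into $D^*$ and taking the union over all permutations of the singleton atoms interleaved with copies of $D^*$ --- this is exactly the paper's formula, where the paper's $I_{\sigma(j)}$ is shorthand for the atom $I_{\sigma(j)}+\epsilon$ that you write explicitly. Your correctness argument and degenerate-case checks supply details that the paper's proof sketch leaves out.
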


\begin{proof}[Sketch]
    Note that the data structures used for elements and ideals of
  $(\multX, {\lemb})$ are obtained from data structures for $(X^*,{\le_*})$
  as done with \cref{thm-quo-idl-eff}.

  In the light of \cref{thm-quo-idl-eff}, it suffices to show that we
  can compute closures under $\sim$ of elements and ideals of $(X^*,{\le_*})$.
  Given $\bw = x_1 \cdots x_n \in X^*$, the equivalence class of $\bw$ under $\sim$
  simply consists of all the possible permutations of the word $\bw$:
  \[
  \overline{\bw} = \bigcup_{\sigma \in S_n} x_{\sigma(1)} \cdots
  x_{\sigma(n)} \:.
  \]

  Closures of ideals are a little more complex. Let
  $\bP = \bA_1 \cdots \bA_n$ be an ideal of $(X^*, {\le_*})$.
  Define:
  \[
  D \egdef \bigcup \bigl\{ E \in \Down(X) \bigm\vert \exists i \in \{1,\cdots,n\}: E^* = \bA_i \bigr\}.
  \]
  In other words, $D \in \Down(X)$ is obtained from $\bP$ by picking
  the atoms $\bA_i$ that are of the second kind, $\bA_i=E^*$, and
  taking the union of their generators.  Similarly, let
  $I_1, \dots, I_p$ be the ideals of $(X,{\leq})$ that appear as
  $I_i + \epsilon$ in $\bP$, with repetitions, and in order of
  occurrence.  Then:
  \[
  \overline{\bP} = \bigcup_{\sigma \in S_p} D^* I_{\sigma(1)} D^* \cdots
  D^* I_{\sigma(p)} D^*
\:.
  \]  \qed
\end{proof}

 \subsection{Induced \wqos}
\label{sec-induced}

Let $(X,{\leq})$ be a \wqo. A subset $Y$ of $X$ (not necessarily finite) induces a
quasi-ordering $(Y,{\leq})$ which is also \wqo.

Any subset $S\subseteq X$ induces a subset $Y\cap S$ in $Y$.
Obviously, if $S$ is upwards-closed
(or downwards-closed) in $X$, then it induces an upwards-closed
(resp., downwards-closed) subset in $Y$.
However an ideal $I$ or a filter $F$ in $X$ does not always induce an
ideal or a filter in $Y$.
In the other direction though, if $J\in\Idl(Y)$, the downward closure
$\dc_X J$ is an ideal of $X$. Therefore, to describe the ideals of
$Y$, we need to identify those ideals of $X$ that are of the form
$\dwc_X J$ for some ideal $J$ of $Y$. This is captured by the
following notion:
\begin{definition}
  Given a \wqo $(X,{\leq})$ and a subset $Y$ of $X$, we say that an
  ideal $I \in \Idl(X)$ is \emph{in the adherence} of $Y$ if
  $I=\dc_X(I\cap Y)$.
\end{definition}

In particular this implies that $I\subseteq\dc_X Y$ (we say that $I$
is ``below $Y$'') and $I\cap Y\not=\emptyset$ (we say that $I$ is
``crossing $Y$''). The converse implication does not hold, as
witnessed by $X = \nat$, $Y = [1,3] \cup [5,7]$ and $I = \dwc 4$.

We now show that the ideals of $Y$ are exactly the subsets induced by
ideals of $X$ that are in the adherence of $Y$.
\begin{theorem}
  \label{lem-compat-ideals}
  Let $(X,{\leq})$ be a \wqo and $Y$ be a subset of $X$.  A subset $J$
  of $Y$ is an ideal of $Y$ if and only if $J=I\cap Y$ for some
  $I\in\Idl(X)$ in the adherence of $Y$.  In this case,
  $I = \dwc_X J$, and is thus uniquely determined from $J$.
\end{theorem}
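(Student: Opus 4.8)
The plan is to take $I \egdef \dc_X J$ as the witnessing ideal and verify both implications against this single choice; uniqueness is then immediate, since any $I$ in the adherence of $Y$ with $J = I \cap Y$ satisfies $I = \dc_X(I\cap Y) = \dc_X J$ by the very definition of adherence.

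For the easier direction $(\DaG)$, I would suppose $J = I \cap Y$ with $I \in \Idl(X)$ in the adherence of $Y$ and check that $J$ is an ideal of $Y$. Non-emptiness follows because $I = \dc_X(I\cap Y)$ is non-empty, which forces $I \cap Y \neq \emptyset$. Downward-closure of $J$ inside $Y$ is routine: if $y' \le y \in J$ with $y' \in Y$, then $y' \in I$ since $I$ is downwards-closed in $X$, whence $y' \in I \cap Y = J$. The one step that genuinely uses the adherence hypothesis is directedness: given $y_1,y_2 \in J$, directedness of $I$ in $X$ yields some $z \in I$ above both, but $z$ need not lie in $Y$; the condition $I = \dc_X(I \cap Y)$ is exactly what lets me replace $z$ by some $w \in I \cap Y = J$ with $z \le w$, providing a common upper bound inside $J$.

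For the direction $(\GaD)$, I would assume $J$ is an ideal of $Y$ and set $I = \dc_X J$. First I would confirm $I \in \Idl(X)$ (the fact already recorded in the excerpt): it is non-empty and downwards-closed by construction, and directed because any two elements of $\dc_X J$ sit below elements of $J$, which admit a common upper bound in $J$ by directedness of $J$. Next I would establish $J = I \cap Y$: the inclusion $J \subseteq I \cap Y$ is clear from $J \subseteq \dc_X J$ and $J \subseteq Y$, while for the reverse any $x \in I \cap Y$ lies below some $j \in J$, and since $J$ is downwards-closed in $Y$ and $x \in Y$, we get $x \in J$. Finally, adherence holds automatically, since $\dc_X(I \cap Y) = \dc_X J = I$, which also re-derives the uniqueness claim.

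The main obstacle, and the conceptual heart of the statement, is the directedness check in the $(\DaG)$ direction: it is precisely the place where a naive induced-ideal argument breaks down, because a common upper bound in $X$ may escape $Y$, and the adherence condition $I = \dc_X(I \cap Y)$ is tailored to repair it. Everything else is bookkeeping in the definitions of ideal, downward closure, and the order inherited on $Y$.
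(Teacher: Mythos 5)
Your proposal is correct and follows essentially the same route as the paper's proof: the same witness $I = \dc_X J$ in the forward direction, the same use of the adherence condition $I = \dc_X(I\cap Y)$ to repair directedness of $I\cap Y$ (replacing an upper bound $z \in I$ by some $w \in I \cap Y$ above it), and the same one-line uniqueness argument. Your write-up is merely more explicit about the routine verifications (downward closure, the inclusion $I \cap Y \subseteq J$) that the paper dismisses as clear.
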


\begin{proof}
  $(\GaD)$ : If $J\in\Idl(Y)$ then $I\egdef\dc_X J$ is directed hence
  is an ideal of $X$.  Clearly, $J = I \cap Y$, so $I$ is in the
  adherence of $Y$.
  \\
  $(\DaG)$: If $I\in\Idl(X)$ is in the adherence of $Y$ then
  $J\egdef I\cap Y$ is non-empty (since $I$ is crossing $Y$) and it is
  directed since for any $x,y\in J$ there is $z\in I$ above $x$ and
  $y$, and $z\leq z'$ for some $z'\in J$ since $I$ is below $Y$.

Uniqueness is clear since the compatibility assumption ``$I=\dc_X(I\cap Y)$''
completely determines $I$ from the ideal $J=I\cap Y$ it induces.
\qed
\end{proof}

An earlier definition of \emph{adherence} can be found in the
literature:
an ideal $I \in \Idl(X)$ is in the adherence of $Y$ if and only if
there exists a directed subset $\Delta \subseteq Y$ such that
$I = \dwc_X \Delta$~\cite{leroux2015}.  The two definitions are
equivalent~\cite[Lemma~14]{goubault2016}, so that, notably,
\cref{lem-compat-ideals} extends Lemma~4.6
from~\cite{leroux2015}.
\begin{proof}[that the two notions of adherence coincide]~\\
  $(\GaD)$ : Assume $I = \dwc_X (I \cap Y)$. We show that $\Delta = I
\cap Y$ is directed: let $x,y \in \Delta \subseteq I$, since $I$ is
directed, there exists $z \in I$ such that $z \ge x,y$. But since $I =
\dwc_X \Delta$, there exists $z' \in \Delta$ such that $z' \ge z \ge
x,y$, which proves that $\Delta$ is directed.
\\
\noindent
  $(\DaG)$: Assume that there exists a directed subset $\Delta
\subseteq Y$ such that $I = \dwc_X \Delta$. Then $\dwc_X (I \cap Y) =
\dwc_X (\dwc_X \Delta \cap Y) = \dwc_X (\Delta \cap Y) = \dwc_X \Delta
= I$.
\qed
\end{proof}

Similarly, we can define a notion of adherence for filters. However,
in this case, the condition $F =\upc_X(F\cap Y)$ simplifies: writing
$F$ as $\upc_X x$, this means that $x' \equiv_X x$ for some
$x' \in Y$, in which case $F = \upc_X x'$.
This is not surprising: $(Y,{\leq})$ is a \wqo, hence all its filters
are principal.

Assuming that $(X,{\leq})$ is an ideally effective \wqo, and given
$Y \subseteq X$, we can simply represent elements of $Y$ by
restricting the data structure for $X$ to $Y$. This requires that $Y$
be a recursive set. Alternatively,
\cref{lem-compat-ideals} suggests that we represent ideals of $Y$
as ideals of $X$ that are in the adherence of $Y$. This requires that
we can decide membership in the adherence of $Y$.
As in the case of extensions, the ideal effectiveness of $(Y,{\leq})$
does not always follow from the ideal effectiveness of $(X,{\leq})$
(see~\cite[Section 8.4]{halfon-thesis} for an example).
We therefore have to introduce extra assumptions.
\begin{theorem}
  \label{thm-sub-idl-eff}
  Let $(X,{\leq})$ be a \wqo and $Y \subseteq X$.
  Then $(Y,{\leq})$ is ideally effective (for the aforementioned
  representations) provided:
  \begin{itemize}
    \item membership in $Y$  is decidable over (the representation
      for) $X$,
    \item the following functions are computable:
\begin{xalignat*}{2}
  \si:&
  \begin{array}{rl}
 \Idl(X, {\leq}) &\to \Down(X, {\leq}) \\
  I &\mapsto \dwc_X  (I \cap Y)
  \end{array}
&
 \sff:&
 \begin{array}{rl}
 \Fil(X, {\leq}) &\to \Up(X, {\leq}) \\
  F &\mapsto \upc_X (F \cap Y)
  \end{array}
\end{xalignat*}
  \end{itemize}
Moreover, in this case, a presentation of $(Y,{\leq})$ can be computed
from a presentation of $(X,{\leq})$.
  \end{theorem}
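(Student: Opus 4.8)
The plan is to follow the template of \cref{thm-ext-idl-eff}, using the correspondence established in \cref{lem-compat-ideals}: each ideal $J$ of $(Y,{\le})$ is represented by the unique ideal $\dwc_X J$ of $(X,{\le})$ lying in the adherence of $Y$, and one recovers $J=(\dwc_X J)\cap Y$. Elements of $Y$ are represented as elements of $X$, which is legitimate since membership in $Y$ is decidable. The argument rests on two \emph{conversion routines}. The \emph{downward routine}: any $D\in\Down(Y)$ can be written $D=D_X\cap Y$ for some $D_X\in\Down(X)$, and if $D_X=\bigcup_k K_k$ is an ideal decomposition in $X$, then $\dwc_X D=\bigcup_k\dwc_X(K_k\cap Y)=\bigcup_k\si(K_k)$, so $\dwc_X D$ is computable from $D_X$ using $\si$. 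The \emph{upward routine}: any $U\in\Up(Y)$ is of the form $U_X\cap Y$ with $U_X=\bigcup_k\upc_X z_k\in\Up(X)$, and $\upc_X U=\bigcup_k\sff(\upc_X z_k)$ is computable using $\sff$.

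The crux of the argument---and the step I expect to be the main obstacle---is to verify that these routines actually deliver the intended representation, i.e.\ that the ideals they produce lie in the adherence of $Y$ and that the filter generators they produce lie in $Y$. For the downward case I would prove: for $D\in\Down(Y)$, the canonical ideal decomposition of $\dwc_X D$ in $X$ consists of ideals in the adherence of $Y$, and intersecting them with $Y$ recovers the ideal decomposition of $D$ in $Y$. The proof decomposes $D=J_1\cup\cdots\cup J_m$ into ideals of $Y$ (possible since $(Y,{\le})$ is \wqo), so that $\dwc_X D=\bigcup_m\dwc_X J_m$ is a decomposition into primes (\cref{prop:prime-idl}), each $\dwc_X J_m$ lying in the adherence of $Y$; by uniqueness of the canonical decomposition, the canonical components of $\dwc_X D$ are exactly the $\subseteq$-maximal members of $\{\dwc_X J_m\}$, hence all lie in the adherence of $Y$. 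Dually, the minimal basis of $\upc_X U$ (for $U\in\Up(Y)$) consists of minimal elements of $U$, which belong to $U\subseteq Y$. Therefore, after computing $\dwc_X D=\bigcup_k\si(K_k)$ (resp.\ $\upc_X U=\bigcup_k\sff(\upc_X z_k)$), I canonicalise---removing non-maximal ideals using \ID for $X$ (resp.\ extracting the minimal generators)---to land in the intended representation.

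With these two routines in hand, each requirement of \cref{def:eff-wqo} reduces to the corresponding operation in $X$:
\begin{description}
\item[\OD:] $x\le y$ on $Y$ is decided by \OD for $X$.
\item[\ID:] $J_1\subseteq J_2\iff \dwc_X J_1\subseteq \dwc_X J_2$ (the equivalence holds because $J_i=\dwc_X J_i\cap Y$ by adherence), decided by \ID for $X$.
\item[\PI:] $\dwc_Y y=\dwc_X y\cap Y$, so its representative is $\si(\dwc_X y)$, a single ideal in the adherence of $Y$, with $\dwc_X y$ given by \PI for $X$.
\item[\II:] writing $I_1,I_2$ for the representatives of $J_1,J_2$, we have $J_1\cap J_2=(I_1\cap I_2)\cap Y$; compute $I_1\cap I_2$ by \II for $X$ and apply the downward routine.
\item[\CF:] $Y\setminus\upc_Y y=(X\setminus\upc_X y)\cap Y$; compute $X\setminus\upc_X y$ by \CF for $X$ and apply the downward routine.
\item[\CI, \IF:] dually, $Y\setminus J=(X\setminus I)\cap Y$ (with $I$ the adherence ideal representing $J$) and $\upc_Y y_1\cap\upc_Y y_2=(\upc_X y_1\cap\upc_X y_2)\cap Y$ are handled by \CI, \IF for $X$ followed by the upward routine.
\end{description}

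Finally, for the presentation it remains to supply \XI and \XF. Viewing $Y$ as a downwards-closed subset of itself, $\dwc_X Y=\bigcup_i\si(I_i)$ where $X=\bigcup_i I_i$ is the ideal decomposition \XI of $X$; canonicalising gives the ideal decomposition of $Y$. Dually, from the filter decomposition \XF of $X$, $X=\bigcup_i\upc_X x_i$, the upward routine applied to $\bigcup_i\sff(\upc_X x_i)$ yields a filter decomposition of $Y$. All of the above are uniformly computable from a presentation of $(X,{\le})$ together with the decision procedure for membership in $Y$ and algorithms for $\si$ and $\sff$, which establishes that $(Y,{\le})$ is ideally effective and that its presentation is computable. \qed
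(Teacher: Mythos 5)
Your overall architecture is the same as the paper's: represent ideals of $Y$ by the adherent ideals of $X$ given by \cref{lem-compat-ideals}, reduce every operation of \cref{def:eff-wqo} to the corresponding operation in $X$, and interpose two conversion routines, which correspond exactly to the paper's \cref{lem:induced-dwc,lem:induced-upc}. Your proof of the downward routine is a genuinely different and valid argument: where the paper shows directly, by an element-chasing argument using incomparability and directedness, that each canonical component of $\dwc_X(D\cap Y)$ is adherent, you instead decompose $D=J_1\cup\cdots\cup J_m$ into ideals of $Y$, lift this to a prime decomposition $\dwc_X D=\bigcup_m \dwc_X J_m$ by adherent ideals of $X$, and invoke uniqueness of canonical decompositions. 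That is clean and correct.

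However, your upward routine has a genuine gap. You claim that the minimal basis of $\upc_X U$ (for $U\in\Up(Y)$) ``consists of minimal elements of $U$, which belong to $U\subseteq Y$'', so that canonicalising the computed union $\bigcup_k\sff(\upc_X z_k)$ lands in the intended representation. This is only true in a partial order. In a general quasi-order --- the setting of the theorem --- generators of filters are determined only up to the equivalence $\equiv$: one can prove that each canonical generator $x_i$ of $\upc_X U$ satisfies $x_i\equiv u_i$ for some $u_i\in U\subseteq Y$, but nothing forces the representative actually output by the algorithm for $\sff$ (and surviving canonicalisation) to lie in $Y$. Since filters of $Y$ must be represented by generators that are elements of $Y$ (elements of $Y$ being represented by restricting $X$'s data structure to $Y$), a filter written as $\upc_X x_i$ with $x_i\notin Y$ is not a valid output for \IF, \CI or \XF. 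The paper closes this gap in \cref{lem:induced-upc}: it first proves that each $x_i$ is equivalent to some element of $Y$ (using incomparability of the canonical basis), and then \emph{effectively} produces such an element by testing $x_i\in Y$ (decidable by hypothesis) and otherwise enumerating $Y$ until some $y_i\equiv x_i$ is found, an enumeration that terminates precisely because of the existence claim. Your proof needs this replacement step; without it, the procedures you give for \IF, \CI and \XF do not return outputs in the agreed representation.
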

The rest of this subsection  is dedicated to the proof of this
theorem.

First, let us mention that our first assumption implies that we have a
data structure for elements of $Y$ and that thanks to function $\si$,
we can decide whether an ideal $I$ of $X$ is in the adherence of $Y$: it
suffices to check that $\si(I) = I$.

Let us prove that $(Y,{\leq})$ is ideally effective.
  \begin{description}
    \item[\OD:] since $\le$ is decidable on $X$, its restriction to
      $Y$ is still decidable.
    \item[\ID:] Given two ideals $I_1, I_2$ that are in the adherence
      of $Y$,
      $I_1 \cap Y \subseteq I_2 \cap  Y \iff I_1 \subseteq I_2$.
      The left-to-right implication uses that
      $I_i = \dwc_X (I_i \cap Y)$.
      Therefore, inclusion for ideals of $Y$ can be implemented by relying
on \ID for $X$.
    \item[\PI:] if $y \in Y$, then $\dwc_X y$ is adherent to $Y$ and
      one relies on
      $\dwc_Y y = \dwc_X y \cap Y $.
  \end{description}

  For the four remaining operations, we need to be able to compute a
  representation of $D \cap Y$ and $U \cap Y$ for $D \in \Down(X)$ and
  $U \in \Up(X)$.

\begin{lemma}
  \label{lem:induced-dwc}
  Let $D \in\Down(X)$. The canonical representation of $D \cap Y$ (as
  a downwards-closed set of $Y$) is exactly the canonical
  representation of $\dwc_X (D \cap Y)$ (as a downwards-closed set of
  $X$).
\end{lemma}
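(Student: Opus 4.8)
The plan is to deduce the equality of the two canonical representations from the uniqueness of canonical decompositions. Recall that, with the representations fixed for $(Y,{\le})$ just before the lemma, an ideal $J$ of $Y$ is stored as the ideal $\dwc_X J$ of $X$ that lies in the adherence of $Y$ (\cref{lem-compat-ideals}). Hence, writing $D \cap Y = J_1 \cup \cdots \cup J_m$ for the canonical ideal decomposition of $D \cap Y$ computed inside $(Y,{\le})$, the canonical representation of $D \cap Y$ as a downwards-closed set of $Y$ is precisely the list $\dwc_X J_1, \ldots, \dwc_X J_m$ of ideals of $X$. The goal is therefore to show that this very list is also the canonical decomposition of $\dwc_X(D \cap Y)$ into ideals of $X$.

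First I would note that $\dwc_X(D \cap Y) = \dwc_X J_1 \cup \cdots \cup \dwc_X J_m$, since downward closure distributes over finite unions, and that each $\dwc_X J_i$ is an ideal of $X$; this is the forward direction of \cref{lem-compat-ideals}, where the downward closure of the directed set $J_i$ is shown to be non-empty, downwards-closed, and directed. By \cref{prop:prime-idl} these ideals are down primes, so the displayed equality is already a decomposition of $\dwc_X(D \cap Y)$ into primes.

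The key step is to check that this decomposition is \emph{minimal}, i.e.\ that the ideals $\dwc_X J_i$ are pairwise incomparable. Here I would use the adherence identity $\dwc_X J_i \cap Y = J_i$ furnished by \cref{lem-compat-ideals}: if $\dwc_X J_i \subseteq \dwc_X J_j$, then intersecting both sides with $Y$ yields $J_i \subseteq J_j$, contradicting the minimality of the canonical decomposition $D \cap Y = \bigcup_i J_i$ in $Y$. Once minimality is in hand, the uniqueness half of the Canonical Decomposition theorem forces $\dwc_X J_1, \ldots, \dwc_X J_m$ to be exactly the canonical decomposition of $\dwc_X(D \cap Y)$ in $X$, which is the assertion of the lemma. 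The only genuinely delicate point is this incomparability argument, and it is dispatched cleanly by the identity $\dwc_X J_i \cap Y = J_i$; everything else is routine bookkeeping about decompositions into primes.
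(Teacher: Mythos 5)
Your proof is correct, but it runs in the opposite direction from the paper's and is genuinely different in structure. You start from the canonical decomposition $D \cap Y = J_1 \cup \cdots \cup J_m$ computed in $(Y,{\le})$, push it into $X$ by taking downward closures, observe that adherence of each $\dwc_X J_i$ is automatic (the forward direction of \cref{lem-compat-ideals}), get pairwise incomparability for free from the identity $\dwc_X J_i \cap Y = J_i$, and then invoke the \emph{uniqueness} half of the Canonical Decomposition theorem to conclude. The paper instead starts from the canonical decomposition $\bigcup_i I_i$ of $\dwc_X(D \cap Y)$ in $X$ and verifies directly that it is the canonical representation of $D \cap Y$ in $Y$; the crux there is showing that each $I_i$ is in the adherence of $Y$, which requires a delicate pointwise argument (given $x_i \in I_i$, use directedness and the incomparability of the $I_j$'s to climb to some $x_i' \in I_i$ lying outside every other $I_j$, then above it find $x_i'' \in D \cap Y$ that is forced into $I_i \cap Y$). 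Your route buys brevity: by anchoring the argument on the $Y$-side, where adherence comes for free, you replace the paper's hardest step with an appeal to the abstract uniqueness theorem. The paper's route buys a directly verified analysis of the object the algorithm actually computes, and establishes along the way the (a priori nonobvious) fact that every component of the canonical $X$-decomposition of $\dwc_X(D\cap Y)$ is adherent to $Y$ --- though once the equality of the two representations is known, that fact also follows from your argument. Both proofs are complete and rest on the same two pillars, \cref{lem-compat-ideals} and primality of ideals (\cref{prop:prime-idl}).
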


\begin{proof}
  Let $\bigcup_i I_i$ be the \emph{canonical} decomposition of
  $\dwc_X (D \cap Y)$. Remember that an ideal $J$ of $Y$ is
  represented by the unique ideal $I$ of $X$ which is in the adherence
  of $Y$ such that $J = I \cap Y$. Thus, stating that $\bigcup_i I_i$
  is the canonical representation of $D \cap Y$ means that:
  \begin{enumerate}
  \item $D\cap Y = \bigcup_{i} (I_i\cap Y)$;
  \item for every $i$, $I_i \cap Y$ is an ideal of $Y$;
  \item $I_i \cap Y$ and $I_j \cap Y$ are incomparable for inclusion, for $i \neq j$.
  \end{enumerate}

  For the first point, $\bigcup_i (I_i \cap Y) =
  (\bigcup_i I_i) \cap Y = (\dwc_X (D \cap Y)) \cap Y = D \cap  Y$.

  We now argue that each $I_i \cap Y$ is indeed an ideal of $Y$, i.e.,
  all $I_i$'s are in the adherence of $Y$. One inclusion being
  trivial, we need to show that $I_i \subseteq \dwc_X (I_i \cap Y)$,
  for any $i$. Let $x_i \in I_i$.  Since the ideals $I_j$ are
  incomparable for inclusion, there exists $x'_i \in I_i$ such that
  $x_i \le x'_i$ and for any $j \neq i$, $x'_i \notin I_j$ ($I_i$ is
  directed). Besides, $x'_i \in I_i \subseteq \dwc_X (D \cap Y)$ and
  thus there is an element $x''_i$ such that
  $x'_i \le x''_i \in D \cap Y$.  As the sets $I_j$ are
  downwards-closed, $x''_i$ cannot belong to any $I_j$ with
  $j \neq 0$, hence $x''_i$ is in $I_i \cap Y$.  Therefore,
  $x_i \in \dwc_X (I_i \cap Y)$.

  Finally, the ideal decomposition $D \cap Y = \bigcup_j (I_j \cap Y)$
  is canonical since the $I_j$'s are incomparable in $X$ (recall the
  above criterion for inclusion of ideals of $Y$).
  \qed
\end{proof}

Observe that if $D = \bigcup_i I_i$ then
$\dwc_X (D \cap Y) = \bigcup_i \dwc_X (I_i \cap Y) = \bigcup_i
\si(I)$. Thus the canonical representation of $D \cap Y$ is indeed
computable from $D \in \Down(X)$.

We now present the dual of the previous lemma:
\begin{lemma}
  \label{lem:induced-upc}
  Given $U \in \Up(X)$, a canonical representation of $U \cap Y$ (as
  an upwards-closed set of $Y$) can be computed from a canonical
  representation of $\upc_X (U \cap Y)$ (as an upwards-closed set of
  $X$).
\end{lemma}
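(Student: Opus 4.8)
The plan is to exploit the asymmetry already noted in the excerpt: unlike ideals, the filters of $(Y,{\leq})$ are all principal, so a canonical representation of an upwards-closed subset of $Y$ is simply a finite minimal basis of pairwise $\leq$-incomparable elements \emph{of $Y$}. Writing $U' \egdef \upc_X(U \cap Y)$, I assume its canonical representation in $X$ is the minimal basis $\{x_1,\ldots,x_n\}$ of $\leq$-minimal elements of $U'$, and I aim to produce, effectively, a minimal basis $\{y_1,\ldots,y_n\}$ of $U \cap Y$ with each $y_i \in Y$. The first thing I would establish is the identity $U' \cap Y = U \cap Y$: the inclusion $\supseteq$ is immediate, and conversely any $z \in U' \cap Y$ lies above some $y \in U \cap Y$, so $z \in U$ because $U$ is upwards-closed, and $z \in Y$, giving $z \in U \cap Y$.

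Next I would show that each generator $x_i$ is $\equiv_X$-equivalent to some element of $U \cap Y$. Indeed $x_i \in U' = \upc_X(U \cap Y)$, so there is $y \in U \cap Y$ with $y \leq x_i$; since $y$ itself lies in $U'$ and $x_i$ is minimal in $U'$, minimality forces $x_i \leq y$, whence $x_i \equiv_X y$. Because $\leq$ is decidable on $X$ (by \OD) and $Y$ is recursive, I can recover such a witness $y_i \in Y$ effectively: enumerate the elements of $Y$ and test $y \leq x_i \wedge x_i \leq y$, a search that terminates since the witness $y$ above exists. This computability step---locating, for each $x_i$, an actual element of $Y$ in its $\equiv_X$-class---is the only genuinely algorithmic point and the part I expect to be the main obstacle; everything else is order-theoretic bookkeeping dual to \cref{lem:induced-dwc}.

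It then remains to verify that $\{y_1,\ldots,y_n\}$ is the desired canonical representation. From $x_i \equiv_X y_i$ we get $\upc_X y_i = \upc_X x_i$, hence $\bigcup_i \upc_X y_i = U'$; intersecting with $Y$ and using the first identity yields $\bigcup_i \upc_Y y_i = U' \cap Y = U \cap Y$, so the $y_i$ form a basis of $U \cap Y$ in $Y$. For minimality I would argue that, the $x_i$ being pairwise incomparable in $X$, a relation $y_i \leq y_j$ with $i \neq j$ would give $x_i \leq y_i \leq y_j \leq x_j$, i.e.\ $x_i \leq x_j$, a contradiction; thus the $y_i$ are pairwise incomparable and the basis is minimal, that is, canonical. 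One can moreover observe that each $y_i$ is in fact a minimal element of $U \cap Y$, which makes the correspondence between the canonical bases of $U \cap Y$ in $Y$ and of $\upc_X(U \cap Y)$ in $X$ bijective up to $\equiv_X$, mirroring \cref{lem:induced-dwc}.
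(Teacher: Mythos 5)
Your proposal is correct and follows essentially the same route as the paper's proof: take the canonical filter basis $\{x_1,\ldots,x_n\}$ of $\upc_X(U\cap Y)$, use incomparability/minimality of the generators to show each $x_i$ is $\equiv_X$-equivalent to some element of $U\cap Y$, recover such witnesses $y_i\in Y$ by enumeration (terminating because the witnesses exist), and check that the $y_i$ give the canonical basis of $U\cap Y$ in $Y$. The only difference is cosmetic: you spell out the final bookkeeping that the paper dispatches as ``similar to \cref{lem:induced-dwc}''.
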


\begin{proof}
  Let $\bigcup_i \upc x_i$ be a canonical filter decomposition (in
  $X$) of the upwards-closed set $\upc_X (U \cap Y)$. We first prove
  that for every $i$, $x_i$ is equivalent to some element of
  $Y$. Indeed, since $\upc_X x_i \subseteq \upc_X (U \cap Y)$, there
  exists $y \in U \cap Y$ with $y \le x_i$. But then, $y$ must be in
  some $\upc_X x_j$. Since the decomposition is canonical, the $x_j$'s
  are incomparable, hence we cannot have $x_j \le y \le x_i$ for
  $j \neq i$. Thus, $x_i \equiv y \in Y$.

  Moreover, we can compute a canonical filter decomposition of
  $\upc_X (U \cap Y)$ using only elements in $Y$: for each $x_i$, it
  is decidable whether $x_i \in Y$ (our first assumption on $Y$). If
  not, we can enumerate elements of $Y$ until we find some
  $y_i \equiv x_i$. Such an element exists, and thus the enumeration
  terminates.

  We thus obtain a canonical filter decomposition $\bigcup_i \upc y_i$
  of $\upc_X (U \cap Y)$ with $y_i \in Y$. The rest of the proof is
  similar to the proof of \cref{lem:induced-dwc}.
\qed
\end{proof}

Here also, a canonical representation of $\upc_X (U \cap Y)$ is
computable from $U$, using the function $\sff$.

We can now describe procedures for the four remaining operations:
\begin{description}
\item[\CF:] Given $y \in Y$, the complement of $\upc_Y y$ is computed
  by using
  $Y \setminus \upc_Y y = (X \setminus \upc_X y) \cap Y$. Here the
  downwards-closed set $(X \setminus \upc_X y)$ is computable using
  \CF for $X$, and its intersection with $Y$ is computable using
  \cref{lem:induced-dwc}.

\item[\II:] Given two ideals $I$ and $I'$ in the adherence of $Y$, the
  intersection of the ideals they induce is
  $(I\cap Y)\cap (I'\cap Y) = (I \cap I') \cap Y$, which is computed
  using \II for $X$ and \cref{lem:induced-dwc}.

\item[\IF:] Computing the intersection of filters is similar to
  computing the intersection of ideals: given $y_1, y_2 \in Y$,
  $(\upc_Y y_1) \cap (\upc_Y y_2) = (\upc_X y_1 \cap \upc_X y_2) \cap
  Y$, which is computed using \IF for $X$ and
  \cref{lem:induced-upc}.

\item[\CI:] Given an ideal $I$ in the adherence of $Y$,
  $Y \setminus (I \cap Y) = (X \setminus I) \cap Y$, which is
  computed using \CI for $X$ and \cref{lem:induced-upc}.
\end{description}

Finally, and as always, the above presentation can be computed from a
presentation of $(X,{\leq})$, thanks to the functions $\si$ and
$\sff$. Notably, the ideal decomposition of $Y$ can be computed with
\cref{lem:induced-dwc} as the set induced by $X$, seen as a
downwards-closed subset, while the filter decomposition of $Y$ can be
computed using \cref{lem:induced-upc}, again as the set induced by $X$
seen this time as an upwards-closed subset.

\begin{remark}
  If $Y$ is a downwards-closed subset of $X$, then $I$ is adherent to $Y$
  if and only if $I \subseteq Y$, and therefore
  $\Idl(Y) = \Idl(X) \cap \Pcal(Y)$. Moreover, $\si$ is computable
  thanks to \II, and $\sff(\upc x) = \upc x$ if $x \in Y$,
  $\sff(\upc x) = \emptyset$ otherwise. Indeed, if $x \notin Y$, then
  $\upc x \cap Y = \emptyset$.

  Similarly, if $Y$ is upwards-closed, $\sff$ can be computed with \II, and
  $\si(I) = I$ if $Y \cap I \neq \emptyset$, $\si(I) = \emptyset$
  otherwise. Again, $Y \cap I \neq \emptyset$ if and only if
  $\exists x \in \min(Y):\ x \in I$. Given such an $x$, then
  $\forall y \in I: \exists z \in I: z \ge x,y$ by
  directedness. Therefore,
  $I \subseteq \dwc (I \cap \upc x) \subseteq \dwc (I \cap Y)$.
\qed
\end{remark}

\section{Towards a richer theory of ideally effective \wqos}
\label{sec-richer?}
\subsection{A minimal definition}
\label{sec-axioms}

As we mentioned in the remarks following \cref{def:eff-wqo}, our
definition contains redundancies: some of the requirements are implied
by the others. Here is the same definition in which we removed
redundancies:
\begin{definition}[Simply effective \wqos]
A \wqo $(X,{\leq})$ further equipped with data structures for $X$ and
$\Idl(X)$ is \emph{simply effective} if:
\begin{description}
\item[(\textrm{ID})]
  ideal inclusion $\subseteq$ is decidable on $\Idl(X)$;
\item[(\textrm{PI})]
  principal ideals are computable, that is, $x\mapsto\dc x$ is computable;
\item[(\textrm{CF})]
  complementation of filters, denoted $\neg:\Fil(X)\to\Down(X)$, is
computable;
\item[(\textrm{II})]
intersection of ideals, denoted $\cap:\Idl(X)\times\Idl(X)\to\Down(X)$, is
computable.
\end{description}
A \emph{short presentation} of $(X,{\leq})$ is a list of: data structures
for $X$ and $\Idl(X)$, procedures for the above operations, the ideal
decomposition of $X$.
\end{definition}

Note that a short presentation of $(X,{\leq})$ is obtained from a
presentation of $(X,{\leq})$
by dropping procedures for $\OD, \CI, \IF$ and by dropping \XF.
Surprisingly, short presentations carry enough
information:
\begin{theorem}
  \label{thm-vj}
  There exists an algorithm that given a short presentation of
  $(X,{\leq})$ outputs a presentation of $(X,{\leq})$.
\end{theorem}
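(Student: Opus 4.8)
The plan is to show that the three operations \OD, \CI, \IF and the missing filter decomposition \XF can all be reconstructed from the four primitives \ID, \PI, \CF, \II together with the ideal decomposition \XI of $X$ that a short presentation supplies. The unifying idea is a \emph{guess-and-check} search: since all data structures are recursive, we may enumerate the finite subsets $B\subseteq X$, treat each as a candidate finite basis $\upc B=\bigcup_{b\in B}\upc b$ for whatever upwards-closed set $U$ we are trying to produce, and verify whether $\upc B=U$ using only the primitives. By the finite basis property (\cref{lem-FBP}) every upwards-closed set admits such a finite basis, so the search is guaranteed to terminate with a correct answer. Two auxiliary facts make the verification effective. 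First, as already noted after \cref{def:eff-wqo}, inclusion $D\subseteq D'$ between downwards-closed sets is decidable from \ID (each prime ideal of $D$ must sit inside some prime ideal of $D'$), membership $x\in D$ reduces to $\dc x\subseteq D$ (hence to \PI and \ID), and emptiness of a downwards-closed set is immediate because it is represented as a list of (necessarily non-empty) ideals. Second, for any finite $B$ the downwards-closed set $X\setminus\upc B=\bigcap_{b\in B}\neg\upc b$ is computable from \CF and \II.

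I would first dispatch \OD and \XF, which need no search beyond the above. For \OD, decidability of $\le$ follows from the equivalence $x\le y\iff\dc x\subseteq\dc y$: compute the principal ideals $\dc x,\dc y$ with \PI and compare them with \ID. For \XF, I search for a finite $B\subseteq X$ with $X\setminus\upc B=\emptyset$, computing $X\setminus\upc B$ by \CF and \II and testing it for emptiness. Any such $B$ yields the desired filter decomposition $X=\bigcup_{b\in B}\upc b$, and \cref{lem-FBP} applied to $U=X$ guarantees one exists.

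The two remaining operations follow the same search-and-verify pattern, the only subtlety being to re-express the target equality $\upc B=U$ as decidable conditions. For \CI, given an ideal $I$ I search for a finite $B$ such that $\upc B=X\setminus I$; correctness of a candidate $B$ is witnessed by two checks: (i) every $b\in B$ satisfies $b\notin I$, which is decidable since $I$ is downwards-closed (so $\upc b\cap I=\emptyset\iff b\notin I$, and $b\in I\iff\dc b\subseteq I$), giving $\upc B\subseteq X\setminus I$; and (ii) $X\setminus\upc B\subseteq I$, which dualises to $X\setminus I\subseteq\upc B$ and is decidable by computing $X\setminus\upc B$ (via \CF, \II) and testing inclusion in $I$ with \ID. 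For \IF, given filters $\upc x,\upc y$ I search for a finite $B$ with $\upc B=\upc x\cap\upc y$; here (i) becomes ``$x\le b$ and $y\le b$ for every $b\in B$'' (decidable by the \OD just obtained), giving $\upc B\subseteq\upc x\cap\upc y$, and (ii) is the inclusion $X\setminus\upc B\subseteq\neg\upc x\cup\neg\upc y$, decidable since the right-hand side equals $X\setminus(\upc x\cap\upc y)$, is computable by \CF, and the inclusion is between downwards-closed sets. In both cases \cref{lem-FBP} applied to the relevant upwards-closed set ($X\setminus I$, resp.\ $\upc x\cap\upc y$) supplies a basis passing the test, so the search halts.

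The main obstacle is not any single construction but the fact that we must certify a \emph{semantic} identity $\upc B=U$ between infinite upwards-closed sets using a toolbox that can only complement filters, intersect ideals, and decide inclusions among downwards-closed sets. The key manoeuvre throughout is to \emph{dualise}: every inclusion ``$U\subseteq\upc B$'' among upwards-closed sets is turned, by complementation, into an inclusion ``$X\setminus\upc B\subseteq X\setminus U$'' among downwards-closed sets, where \ID applies, while the companion inclusion ``$\upc B\subseteq U$'' is reduced to finitely many membership and ordering tests on the generators in $B$. Checking that these manipulations never appeal to an operation not yet reconstructed---so that \OD is built before \IF, and \CI and \IF rely only on \CF, \II and \ID---is the bookkeeping that must be carried out with care.
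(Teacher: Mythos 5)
Your proposal is correct, but it runs on a genuinely different engine than the paper's proof. You enumerate \emph{all} finite subsets $B\subseteq X$ as candidate bases and verify the two-sided equality $\upc B=U$ by dualization; the paper instead proves a stronger intermediate statement \CD---complementation of an \emph{arbitrary} downwards-closed set $D$, not just of an ideal---by an incremental algorithm: start from $U:=\emptyset$ and, as long as $\neg U\not\subseteq D$ (tested with \ID after computing $\neg U$ via \CF and \II), enumerate points of $X$ until some $x\in\neg U\cap\neg D$ is found, then set $U:=U\cup\upc x$. The verification content is essentially the same in both proofs (your checks (i)/(ii) correspond to the paper's loop invariant $U\subseteq\neg D$ together with its exit condition $\neg U\subseteq D$), but the termination arguments differ: your search halts because \cref{lem-FBP} guarantees that a correct basis occurs somewhere in the enumeration, whereas the paper's loop halts because it builds a strictly increasing chain of upwards-closed sets, which is impossible by \cref{lem-ACC}. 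Structurally, the paper then gets \IF and \XF from \CD in one line each ($\upc x\cap\upc y=\neg(\neg\upc x\cup\neg\upc y)$ and $X=\neg\emptyset$), while you re-instantiate the guess-and-check search separately for \XF, \CI and \IF, never passing through \CD. What the paper's route buys is the reusable primitive \CD---this is the generalized Valk--Jantzen lemma, of independent interest and historical significance---and a search that makes guaranteed progress at every iteration; what your route buys is uniformity across the three reconstructions and a correctness argument that needs no loop invariant. Both are equally impractical, since both ultimately rest on blind enumeration of elements of $X$, which the paper's standing assumption (data structures are recursive sets) licenses.
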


\begin{corollary}
  A \wqo $(X,{\leq})$ (with data structures for $X$ and $\Idl(X)$)
  is ideally effective if and only if it is simply effective.
\end{corollary}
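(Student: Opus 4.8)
The plan is to prove \cref{thm-vj}; the corollary then follows at once, since a presentation trivially contains a short presentation (so ideal effectiveness implies simple effectiveness), while \cref{thm-vj} supplies the converse. Thus from the short-presentation data \ID, \PI, \CF, \II together with the ideal decomposition \XI, I must manufacture algorithms for the four missing items \OD, \IF, \CI, and the filter decomposition \XF.

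First, the easy derivation: \OD reduces to \ID and \PI via $x \le y \iff \dc x \subseteq \dc y$, both sides being principal ideals. As a second preliminary, I would observe that from \CF, \II (and \XI for the degenerate empty basis) one can complement \emph{any} upwards-closed set presented by a finite basis $B=\{x_1,\dots,x_n\}$: writing $\neg\upc B=\bigcap_{i}\neg\upc x_i$, each $\neg\upc x_i$ is a union of ideals by \CF, and these downwards-closed sets are intersected by distributing \II over the unions, yielding $\neg\upc B$ as an explicit union of ideals (for $n=0$ one returns \XI). Consequently, membership $x\in D$ in a downwards-closed $D=\bigcup_j I_j$ is decidable via $\dc x\subseteq I_j$ using \PI and \ID.

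The heart of the argument is a generalized Valk--Jantzen lemma: for an upwards-closed $U$ such that (a) membership in $U$ is decidable and (b) for every ideal $J\in\Idl(X)$ it is decidable whether $J\cap U=\emptyset$, a finite basis of $U$ is computable. The procedure enumerates finite subsets $B$ of $X$ all of whose elements lie in $U$ (by (a), so that $\upc B\subseteq U$), and for each tests completeness $U\subseteq\upc B$: writing the computed $\neg\upc B=\bigcup_j J_j$ as above, completeness holds iff every $J_j\cap U=\emptyset$, which is decidable by (b). The finite basis property (\cref{lem-FBP}) guarantees that such a $B$ exists and is eventually enumerated, so the search halts. The main obstacle is precisely this step: decidable membership alone never suffices to compute a basis, and condition (b) is exactly the extra leverage that the ideal machinery provides.

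It then remains to verify that \CI, \IF and \XF each instantiate this lemma. For \CI, take $U=\neg I$: membership is $\dc x\not\subseteq I$ (\PI, \ID), and $J\cap\neg I=\emptyset\iff J\subseteq I$, decidable by \ID. For \IF, take $U=\upc x\cap\upc y$: membership is decidable by \OD, and since $J$ is downwards-closed and directed, $J$ meets $\upc x\cap\upc y$ iff $x\in J$ and $y\in J$, each tested via $\dc x\subseteq J$ and $\dc y\subseteq J$ (\PI, \ID). For \XF, take $U=X$: membership is trivially true and, ideals being nonempty, $J\cap X=J\neq\emptyset$ always, so (b) is the constant ``no''. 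In each case the lemma outputs a finite basis, yielding \CI, \IF and the filter decomposition of $X$; together with the passed-through \ID, \PI, \CF, \II, \XI this constitutes a full presentation, proving \cref{thm-vj} and hence the corollary.
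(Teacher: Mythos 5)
Your proposal is correct, and it proves \cref{thm-vj} in the same Valk--Jantzen spirit as the paper, but the mechanics are genuinely different. The paper's key step is a single stronger operation \CD, complementation of an \emph{arbitrary} downwards-closed set $D$: a greedy algorithm starts from $U=\emptyset$ and, as long as $\neg U\not\subseteq D$ (with $\neg U$ computed from \CF and \II, and the inclusion tested by \ID), adds $\upc x$ for a brute-force-found witness $x\in\neg U\cap\neg D$; termination comes from the ascending chain condition (\cref{lem-no-chains}). From \CD the paper then gets \CI as a special case, \IF by De Morgan ($\upc x\cap\upc y=\neg(\neg\upc x\cup\neg\upc y)$, using \CF), and \XF as $\neg\emptyset$. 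You instead isolate the abstract lemma---decidable membership in $U$ plus an oracle for ``$J\cap U=\emptyset$'' yields a computable finite basis---prove it by dovetailing over all finite $B\subseteq U$ and testing $U\subseteq\upc B$ against the ideal decomposition of $\neg\upc B$, with termination resting on the finite basis property (\cref{lem-FBP}) rather than on the chain condition, and then instantiate it three times. Your \IF instantiation is a pleasant shortcut: the equivalence $J\cap\upc x\cap\upc y\neq\emptyset\iff x\in J\land y\in J$ (by directedness and downward closure of $J$) answers the oracle with \PI and \ID alone, avoiding the detour through \CD. What the paper's version buys is a more goal-directed search (the candidate basis grows only by counterexample points) and the reusable stronger statement \CD; what yours buys is a cleaner separation between the combinatorial lemma and its instantiations---essentially the ``generalized Valk--Jantzen Lemma'' that the paper itself cites as the conceptual core of its proof. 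Both are equally impractical algorithmically, as the paper concedes.
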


Before we proceed to proving \cref{thm-vj}, why did we bother to
display full presentations of \wqos in previous sections?  Our proofs
of ideal effectiveness would indeed have been shorter.

Our choice is motivated by practical reasons: the algorithms we have
given until now are much more efficient than the ones deduced from
\cref{thm-vj}, which is simply impractical.  (Indeed, most of these
algorithms have been implemented, at the highest level of generality,
by the second author.)  \cref{thm-vj} is more conceptual, and if one
only needs computability results, then \cref{thm-vj} provides a
simpler path to this goal.

As practice goes, we will refine the notion of ideally effective \wqos
to ``efficient'' ideally effective \wqos in \cref{sec-complexity}.
Most of the \wqos we have seen earlier are efficient in that sense.
By contrast, the presentation of $(X,{\leq})$ built from
\cref{sec-complexity} is not \emph{polynomial-time} (see
\cref{sec-complexity} for a definition).

\begin{proof}[of \cref{thm-vj}]
  We explain how to obtain the missing procedures:
  \begin{description}

  \item[\OD:] Given $x,y \in X$,
    $x \le y \iff \dwc x \subseteq \dwc y$. The latter can be tested
    using \PI and \ID.

\item[\CI:] We show a stronger statement, denoted \CD, that complementing an arbitrary downwards-closed set is computable. This strengthening is necessary for \IF.

  Let $D$ be an arbitrary downwards-closed set. We compute $\neg D$ as follows: \\
  \begin{enumerate}
  \item\label{step:init} Initialize $U := \emptyset$;
  \item\label{step:while} While $\neg U \not\subseteq D$ do
    \begin{enumerate}
    \item\label{step1} pick some $x \in \neg U \cap \neg D$;
    \item\label{step2} set $U := U \cup \upc x$.
    \end{enumerate}
  \end{enumerate}
  Every step of this high-level algorithm is effective. The
  complement $\neg U$ is computed using the description above: $\neg
  \bigcup_{i=1}^n \upc x_i = \bigcap_{i=1}^n \neg \upc x_i$ which is
  computed with \CF and \II (or with \XI in case $n=0$, i.e., for $U =
  \emptyset$). Then, inclusion $\neg U \subseteq D$ is tested with
  \ID. If this test fails, then we know $\neg U \cap \neg D$ is not
  empty, and thus we can enumerate elements $x \in X$ by brute force,
  and test membership in $U$ and in $D$. Eventually, we will find some
  $x \in \neg U \cap \neg D$.

  To prove partial correctness we use the following loop invariant:
  $U$ is upwards-closed and $U \subseteq \neg D$. The invariant holds
  at initialization and is preserved by the loop's body since if $\upc
  x$ is upwards-closed and since $x \notin D$ and $D$ downwards-closed
  imply $\upc X\subseteq\neg D$.  Thus when/if the loop terminates,
  one has both $\neg U \subseteq D$ and the invariant $U \subseteq
  \neg D$, i.e., $U = \neg D$.

  Finally, the algorithm terminates since it builds a strictly
  increasing sequence of upwards-closed sets, which must be finite by
  \cref{lem-no-chains}.

\item[\IF:] This follows from \CF and \CD, by expressing intersection in terms of complement and union.
  \end{description}

Lastly, we need to show that we can retrieve the filter decomposition
of $X$. It suffices to use \CD to compute $X = \neg \emptyset$.
  \qed
\end{proof}

The algorithm for \CD computes an upwards-closed set $U$ from an
oracle answering queries of the form ``Is $U \cap I$ empty?'' for
ideals $I$.  It is an instance of the generalized Valk-Jantzen Lemma
\cite{VJGL09}, an important tool for showing that some upwards-closed
sets are computable. This algorithm was originally developed by Valk
and Jantzen~\cite{valk85} in the specific case of
$(\nat^k, {\le_{\times}})$.

As seen in the above proof, the fact that \ID, \CF, \II and \PI entail
\CI is non-trivial. The existence of such a non-trivial redundancy in
our definition raises
the question of whether there are other hidden redundancies. The
following theorem answers the question in the negative.
\begin{theorem}
  \label{thm-min-ax}
  For each operation $A$ among \ID, \CF, \II and \PI, there exists a
  \wqo $(X_A, {\le_A})$ equipped with data structures for $X$ and
  $\Idl(X)$ for which operation $A$ is not computable, while the
  other three are.
\end{theorem}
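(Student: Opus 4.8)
The plan is to prove four separate non-computability results, one per axiom, each obtained by reducing a fixed undecidable problem to the corresponding operation. I would fix once and for all a recursively enumerable but non-recursive set $S\subseteq\nat$ (say the halting set) and exploit the fundamental asymmetry that membership ``$n\in S$'' is semi-decidable while non-membership ``$n\notin S$'' is not even semi-decidable. For each operation $A$ among \ID, \PI, \CF, and \II, I would build an explicit \wqo $(X_A,{\le_A})$ together with recursive data structures for $X_A$ and $\Idl(X_A)$, so that the three operations other than $A$ are realised by total computable functions while any algorithm for $A$ on suitable inputs would decide membership in $S$ (or in its complement), a contradiction. It is worth stressing that \cref{thm-vj} is exactly what gives this statement content: since the four axioms together recover full ideal effectiveness, a single pathological operation automatically poisons several derived operations, so what must be shown is precisely that no three of the four are ever enough to reconstruct the fourth.

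The central difficulty is \emph{decoupling}: because each of the four operations ultimately has to inspect the element or ideal it is handed, planting $S$ naively tends to break several operations at once. The device I would use to resolve this is to arrange, in every construction, that the three surviving operations only ever need to answer semi-decidable questions of the form ``does this certificate eventually appear in the enumeration of $S$?''---whose positive instances are confirmable by brute-force search and whose negative instances never occur for well-formed inputs---whereas $A$ is forced to settle a genuinely co-r.e.\ question such as ``is this intersection empty?'' or ``is this ideal the whole space?''. Concretely, for \II I would engineer two families of ideals whose pairwise intersection is non-empty exactly when $n\in S$, so that deciding emptiness of the intersection decides $S$; for \CF the complement of a chosen principal filter would be forced to expose a generator only when $n\notin S$; for \PI the ideal $\dc x$ would have to reveal a bound accessible only through non-membership information; and for \ID two redundant ideal codes would be comparable exactly when $n\notin S$, with the redundancy deliberately invisible to \PI, \CF and \II.

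For the verification I would, in each case, check the three positive operations directly against the explicit data structures, invoking \cref{lem-no-chains} wherever a termination argument via the descending/ascending chain condition is needed, and establish the negative result as a clean many-one reduction turning an algorithm for $A$ into a decision procedure for $S$. A recurring technical point is \wqo-ness itself: I would keep every $X_A$ free of infinite antichains by assembling it only from linear orders, finite products $\nat^k$, and sequence extensions $Y^*$, and never from an infinite disjoint sum, since that would reintroduce an infinite antichain and destroy the \wqo property that the whole framework presupposes.

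The hard part will not be exhibiting the reductions but certifying the decoupling---namely, proving for each construction that $S$-information is genuinely needed \emph{only} by the single operation $A$. In particular one must verify that the ideal and filter decompositions required by a presentation, together with the derived operations \OD, \CI and \IF, do not inadvertently leak the forbidden bit; it is these non-leakage claims, checked through a careful case analysis of every operation on every shape of input, where essentially all of the work resides.
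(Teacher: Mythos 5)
Your high-level strategy---one halting-problem reduction per operation, on a \wqo whose \emph{representation} is engineered so that the three surviving operations never need the forbidden bit---is indeed the strategy of the actual proof. The paper itself defers the full argument to~\cite[Proposition 8.1.4]{halfon-thesis} and works out only the \CF case (\cref{ex:X-CF}): there $X_{\CFbare}=\nat^2$ is ordered by $\le_{\CFbare}=E\circ{\lex}$, where $E$ identifies $\tup{n,m}$ with $\tup{n,m'}$ whenever the halting time $T_n$ of the $n$-th Turing machine lies on the same side of both $m$ and $m'$. The decoupling device is \emph{redundant coding}: many pairs denote the same element (and the same principal ideal), \OD, \ID, \PI are decidable because comparing two codes only requires simulating $M_n$ for a number of steps bounded by the codes themselves, while any procedure for \CF must output a code $\tup{n,m}$ with $m>T_n$ whenever $M_n$ halts, and hence decides halting.

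The gap is that your proposal stops exactly where the proof has to begin. The theorem is an existence statement, and you never exhibit a single one of the four structures $(X_A,\le_A)$: each case is one sentence of desiderata, with no underlying set, no ordering, no data structure, no verification of the three surviving operations, and no worked reduction; you say yourself that ``essentially all of the work resides'' in the non-leakage claims, and that work is not done. Whether your desiderata are simultaneously realizable \emph{is} the content of the theorem. Moreover, at least the \CF sketch fails as stated: you want the complement of a filter to ``expose a generator only when $n\notin S$''. But in any structure where \ID and \PI stay computable, \OD is computable too (first step of \cref{thm-vj}), so non-emptiness of $\neg\upc x$ is semi-decidable (enumerate $y$ and test $x\not\le y$); it therefore cannot coincide with the non-recursively-enumerable condition $n\notin S$. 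And if only the \emph{value} of the generator is meant to be informative when $n\notin S$, no reduction follows, since nothing constrains the output when $n\in S$ and bounded simulation then proves nothing. The information must be planted on the recursively enumerable side, exactly as in \cref{ex:X-CF}. Two smaller inaccuracies: in the known constructions the surviving operations are genuinely \emph{decidable} by bounded simulation, not merely ``semi-decidable with negative instances that never occur''; and \cref{lem-no-chains} plays no role in verifying them (it is used in the proof of \cref{thm-vj}, not here).
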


This theorem means that short presentations are the shortest possible
to capture the information we want. Technically, we should also argue
that the ideal decomposition of $X$ cannot be retrieved from
procedures for operations \ID, \CF, \II, \PI.

For a full proof of \Cref{thm-min-ax}, we refer the interested reader
to~\cite[Proposition 8.1.4]{halfon-thesis}. Here we only illustrate the techniques at
hand by dealing with one case.

\begin{example}
\label{ex:X-CF}
For $n\in\nat$ we write $T_n$ for the halting time of $M_n$, the $n$-th Turing
machine (in some fixed recursive enumeration), letting $T_n=\infty$ if $T_n$ does not
halt.

Let now $X_{\CFbare}=\nat^2$ and define an equivalence relation $E$ over $X_{\CFbare}$ by
\[
\tup{n,m} E \tup{n',m'}
\equivdef
n=n' \text{ and $\bigl( T_n <\min(m,m')$ or
$T_n\geq\max(m,m') \bigr)$}
\]

One easily checks that $E$ is compatible with the lexicographic ordering on
$\nat^2$ in the sense of \Cref{sec-quotient}, and we consider
the \wqo $(X_{\CFbare},{\leq_{\CFbare}})$ with $\leq_{\CFbare}\egdef E\circ {\lex}$.
Regarding implementation, we use pairs of natural
numbers to represent elements of $X_{\CFbare}$, as well as the
corresponding principal ideals. We also use a special symbol to
represent the only non-principal ideal: $X_{\CFbare}$ itself.

With this representation, $(X_{\CFbare},\leq_{\CFbare})$ is almost ideally
effective:
deciding whether $\tup{n,m}\leq_{\CFbare} \tup{n,m'}$ only requires  simulating $M_n$ for $\max(m,m')$
steps \OD;
ideal inclusion reduces to comparing elements \ID;
creating $\dc x$ from $x$ is trivial \PI; as is
representing $X_{\CFbare}$ itself as a sum of ideals \XI.

However, $X_{\CFbare}$ with the chosen representation does not admit
an effective way of computing the complement of filters \CF: indeed
the complement of some $\upc_X\tup{n+1,0}$ must be some $\dc \tup{n,m}$ with
$m>T_n$ if $M_n$ halts (any $m$ is correct if $M_n$ does not
halt). Thus a procedure for \CF could be used to decide the halting
problem, which is impossible.
\qed
\end{example}

\begin{remark}[On ideally effective extensions]
$(X_{\CFbare},\leq_{\CFbare})$ is obtained as an extension of
$(\nat^2,{\lex})$, an ideally effective \wqo.
This proves that extensions of ideally effective \wqos are not always
ideally effective, even in the special case of a quotient by an
effective compatible equivalence, and justifies the two extra
assumptions we used in \cref{thm-ext-idl-eff}. More precisely, it
justifies that at least one of these assumptions is necessary, and
indeed, one can always compute the closure function $\cf$ from the
closure function $\ci$ (but not the converse!), and this in a uniform
manner. The latter result relies on an algorithm that is very similar
to the generalized Valk and Jantzen Lemma.
\qed
\end{remark}

 \subsection{On alternative effectiveness assumptions}
\label{sec-alternatives}

The set of effectiveness assumptions collected in \cref{def:eff-wqo}
or \cref{def-presentation} is motivated by the need to perform Boolean
operations on (downwards-, upwards-) closed subsets, as illustrated in
our motivating examples from \cref{sec-motivations}.  Other choices
are possible, and we illustrate a possible variant here.

\subsubsection{A natural but not ideally effective constructor.}

Given two \qos $(X,{\le_X})$ and $(Y,{\le_Y})$, we can define the
lexicographic quasi-ordering $\lex$ on $X \times Y$ by:
\[ \tup{x_1, y_1} \lex \tup{x_2, y_2} \equivdef x_1 <_X x_2 \vee (x_1 \equiv_X x_2
\wedge y_1 \le_Y y_2),
\]
where classically, ${\equiv_X}$ denotes the equivalence relation
${\le_X} \cap {\ge_X}$ and ${<_X}$ denotes the strict ordering
associated to $X$, defined as ${\le_X} \setminus {\equiv_X}$.

Since $\lex$ is coarser than the product ordering $\le_{\times}$ from
\cref{sec-products}, $(X\times Y,{\lex})$
is a \wqo as soon as $\le_X$ and $\le_Y$ are. Besides, when $(X,{\le_X})$
and $(Y,{\le_Y})$ are ordinals, the lexicographic product corresponds to
the ordinal multiplication $Y\cdot X$.

This \wqo is simple and natural, but it is not always ideally
effective in the sense of \cref{def:eff-wqo} (at least for the natural
representation of elements of $X \times Y$). The fact that our
definition misses such a simple \wqo is disturbing and will be
discussed in the next subsection.  For now, let us show why
lexicographic product is not an ideally effective constructor.

\begin{proposition}
  \label{prop:lex-not-idl-eff}
Lexicographic product is not an ideally effective constructor. In
particular, there exists an ideally effective \wqo $X_\PP$ such
that $(X_\PP \times A_2,{\lex})$ is not
ideally effective for any useful representation.
\end{proposition}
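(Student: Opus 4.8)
The plan is to produce a single ideally effective \wqo $X_\PP$ in which the \emph{principality} of a recursive family of ideals encodes the halting problem, and then show that the lexicographic probe by $A_2$ makes this hidden information computationally accessible. Reusing the notation of \cref{ex:X-CF} ($T_n$ the halting time of the $n$-th machine $M_n$, with $T_n=\infty$ if $M_n$ loops), I would take $X_\PP$ to be the set of pairs $(n,k)$ with $k\le T_n$, ordered lexicographically: $(n,k)\le(n',k')$ iff $n<n'$, or $n=n'$ and $k\le k'$. This is a well-founded linear order, hence a \wqo, and membership is decidable by simulating $M_n$ for $k$ steps. Its ideals are the cuts: the principal ideals $\dwc(n,k)$, the column-prefixes $L_n\egdef\{(m,k)\mid m\le n\}$, and $X_\PP$ itself. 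The crucial feature is that $L_n$ is \emph{principal} (with top $(n,T_n)$) iff $M_n$ halts, and \emph{non-principal} (a genuine limit) iff $M_n$ loops.

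The first and main task is to verify that $X_\PP$ is ideally effective, the delicate point being the representation of $\Idl(X_\PP)$. The trick is to give each prefix $L_n$ a \emph{uniform} code $\langle n\rangle$ denoting a bona fide ideal \emph{regardless} of the behaviour of $M_n$, so that the recursive code set $\{(n,k)\}\cup\{\langle n\rangle\}\cup\{\top\}$ surjects onto $\Idl(X_\PP)$ without ever deciding halting. One then checks \ID, \PI, \CF, \CI, \II, \IF directly: since the order is linear, intersection of two ideals (resp.\ filters) returns the smaller (resp.\ larger) argument, complements of filters are single ideals, and every inclusion test reduces to a lexicographic comparison plus at most one bounded simulation (for instance $L_n\subseteq\dwc(n,k')$ only needs to decide $T_n\le k'$, by running $M_n$ for $k'$ steps). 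The whole point is that none of these seven operations ever has to decide whether a given ideal is principal.

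Next I would pin down the ideals of $(X_\PP\times A_2,\lex)$, where $A_2=\{1,2\}$ is ordered by equality. A short directedness argument (using that $A_2$ is an antichain and $X_\PP$ is antisymmetric) shows that the projection to $X_\PP$ of any ideal is an ideal $J$ of $X_\PP$, and that the ideals of the product are exactly the principal ones $\dwc\tup{x,a}$ together with the sets $J\times A_2$ for $J$ a \emph{non-principal} ideal of $X_\PP$. Indeed $J\times A_2$ is directed precisely when $J$ has no maximum, for then the incomparable copies $\tup{z,1},\tup{z,2}$ always have a common upper bound $\tup{z',1}$ with $z'\in J$, $z'>z$, whereas if $J$ had a maximum $m$ the pair $\tup{m,1},\tup{m,2}$ would have none. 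In particular $L_n\times A_2$ is an ideal iff $L_n$ is non-principal iff $M_n$ loops.

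The reduction then finishes the proof. Assume for contradiction that $(X_\PP\times A_2,\lex)$ is ideally effective under some representation. For each $n$, using \PI and \II I would compute an ideal decomposition
\[
  \dwc\tup{(n+1,0),1}\cap\dwc\tup{(n+1,0),2}=L_n\times A_2,
\]
say $K_1,\dots,K_r$ with $\bigcup_i K_i=L_n\times A_2$. Because ideals are down-prime (\cref{prop:prime-idl}) and hence irreducible (\cref{lem-irreduc}), the set $L_n\times A_2$ is itself an ideal iff some $K_i$ contains all the others, and this is decidable using \ID. Thus ideal effectiveness would decide, for every $n$, whether $L_n\times A_2$ is directed, i.e.\ whether $M_n$ loops, contradicting undecidability of the halting problem. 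Since the argument invokes only the semantic operations \PI, \II and \ID of the product, it applies to \emph{any} representation, so $X_\PP\times A_2$ is not ideally effective and lexicographic product is not an ideally effective constructor. The step requiring genuine care is the second paragraph: it is there that the halting information must be arranged to be simultaneously \emph{present} (in the principality of the $L_n$) and \emph{invisible} (to the seven operations witnessing ideal effectiveness of $X_\PP$).
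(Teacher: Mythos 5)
Your proposal is correct, and its core reduction is the same as the paper's: exhibit an ideally effective \wqo in which principality of ideals is undecidable, then observe that ideal effectiveness of the lexicographic product with $A_2$ would decide principality, because $I\times A_2$ is an ideal of the lex product exactly when $I$ is non-principal (your directedness argument for this equivalence is essentially the paper's). The genuine differences are in the supporting material, and they are worth noting. First, the paper never constructs $X_\PP$: it cites Section~8.3 of the second author's thesis for an ideally effective \wqo with undecidable principality, merely remarking that it resembles $X_{\CFbare}$ of \cref{ex:X-CF}. You build one explicitly---the linear order $\{(n,k)\mid k\le T_n\}$ under lexicographic comparison---and your verification is sound: linearity makes \II and \IF trivial (any two ideals, resp.\ filters, are comparable under inclusion), \CF and \CI output single ideals or filters computable with bounded simulations of $M_n$, and the non-injective code set $\{(n,k)\}\cup\{\langle n\rangle\}\cup\{\top\}$ is legitimate, since the paper nowhere requires ideal codes to be canonical (its own \cref{ex:X-CF} likewise has several codes for one ideal). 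This makes your proof self-contained where the paper's is not. Second, inside the product you obtain $L_n\times A_2$ as $\dwc\tup{(n+1,0),1}\cap\dwc\tup{(n+1,0),2}$ using only \PI and \II, whereas the paper builds $I\times A_2$ by complementing $I$ in $X_\PP$, pairing the resulting filters, and complementing again in the product; your route is shorter, but it works only because your critical ideals are strict down-sets $L_n=\dwc_<(n+1,0)$ of actual elements, while the paper's double-complementation works for an arbitrary ideal code, which is what one needs when $X_\PP$ is only given abstractly. Your ideal-hood test (some $K_i$ contains all the others, justified by \cref{prop:prime-idl} and \cref{lem-irreduc}) is exactly right. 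One small overstatement: the claim that the argument ``applies to any representation'' should be qualified---feeding $\tup{(n+1,0),1}$ to \PI presupposes that codes of pairs are computable from codes of their components, i.e., effective pairing; this is precisely the caveat behind the word ``useful'' in the statement, which the paper spells out in its closing note, and your proof needs it just as much.
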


\begin{proof}
  Recall from \cref{sec-finite} that $A_2=\{a,b\}$ is the two-letter
  alphabet, where $a$ and $b$ are incomparable. We use the following
  property: Let $(X,\leq)$ be some \wqo and $I\in\Idl(X)$ one of its
  ideals. Then $I$ is principal if, and only if, $I\times A_2$ is not
  an ideal in the lexicographic product $(X \times A_2, {\lex})$.
  Indeed, if $I = \dwc x$ for some $x \in X_\PP$, then $\tup{x,a}$ and
  $\tup{x,b}$ do not have a common upper bound in $I \times A_2$ with
  respect to ${\lex}$, hence $I \times A_2$ is not
  directed. Conversely, if $I$ is not principal, then for any two
  elements $\tup{x,c}, \tup{y,d} \in I \times A_2$, there is some $z
  \in I$ such that $z > x$ and $z > y$. The element $\tup{z,a}$ is a
  suitable common upper bound, showing that $I \times A_2$ is directed.

  Regarding $X_\PP$, we refer to~\cite[Section 8.3]{halfon-thesis} and
  do not describe it here: it is an ideally effective \wqo, similar to
  $X_\CFbare$ from \cref{ex:X-CF}, and for which it is undecidable
  whether an ideal $I$ is principal.
  This is enough to prove that $(X_\PP \times A_2, {\lex})$ is not ideally
  effective. Assume, by way of contradiction, that it is ideally
  effective.  Then for any $I \in \Idl(X)$, one can compute the ideal
  decomposition of $D = I\times A_2$ and then see whether this
  downwards-closed set is an ideal. But deciding whether $D$ is an
  ideal amounts to deciding whether $I$ is not principal, which is
  impossible in $X_\PP$. 

  Note: the only representation assumption that the proof makes on
  $X_\PP\times A_2$ is that the pairing function $x,c\mapsto\tup{x,c}$
  is effective. With this assumption $I\times A_2$ can be built in the
  following manner: (1) compute $X_\PP\setminus I=\upc x_1+\cdots+\upc
  x_n$ in $X_\PP$; (2) derive $(X_\PP\setminus I)\times A_2 = U =
  \upc\tup{x_1,a}+\upc\tup{x_1,b}+\cdots+\upc\tup{x_n,b}$ using
  pairings; (3) obtain $I\times A_2$ by complementing $U$ in
  $X_\PP\times A_2$, assumed to be ideally effective.
\qed
\end{proof}

\subsubsection{Deciding principality.}

In the previous subsection, we have shown that a very natural
constructor, the lexicographic product, is not ideally effective.
However, in practice $(X \times Y, {\lex})$ is usually ideally effective, that
is, the lexicographic product of two ``actually used'' \wqos
$(X,{\le_X})$ and $(Y,{\le_Y})$ is ideally effective.

Thus, the problem seems to come from the fact that our definition
allows too many exotic \wqos. Indeed, we can show 
that the lexicographic product of two ideally
effective \wqos for which we can decide whether an ideal is principal,
is ideally effective~\cite[Theorem 5.4.2]{halfon-thesis}.
All \wqos used in practice trivially meet this
extra condition, to the point that we could argue that we should not
accept as ideally effective any \wqo that would not meet this requirement.

If, in the definition of ideally effective \wqos, one now adds the
condition that principality of ideals be decidable, then lexicographic
product becomes an ideally effective constructor, most of the
constructors described in this chapter remain ideally effective, to
the notable exception of extensions and quotients:
\cref{thm-ext-idl-eff} and \cref{thm-quo-idl-eff} fail with the new
definition (see~\cite[Section 8.3]{halfon-thesis} for details).

\subsubsection{Directions for future work.}

We would like to mention three directions in which our work can be
extended.

The first one was carried out in~\cite{FGL:partI}, relying on the
topological notion of notion called \emph{Noetherian space} to
generalize \wqos, in the following sense.  Given a quasi-ordered set
$(X,{\leq})$, the Alexandroff topology has as open sets exactly the
upwards-closed sets for the quasi-ordering $\le$.  It turns out that
the Alexandroff topology associated to $\le$ is Noetherian if and only
if $\le$ is a \wqo on $X$. There are also Noetherian topologies that
do not arise as Alexandroff topologies, for example the cofinite
topology on an infinite set, or the Zariski topology on a Noetherian
ring. One advantage of Noetherian spaces is that they are preserved
under more constructors than \wqos, e.g., the full powerset of a
Noetherian space (with the so-called lower Vietoris topology) is again
Noetherian. In~\cite{FGL:partI}, the authors define a notion of
effectiveness very similar to ours for Noetherian spaces, which
however excludes complements and filters, which do not make sense
there.  Similarly, this notion of effectiveness is preserved under
many constructors.

A second extension of this work was carried out
in~\cite[Chapter 9]{halfon-thesis}. The motivation is close to the one above:
handling more constructors.  As mentioned in \cref{sec-powerset}, the
infinite powerset $\Pcal(X)$ of a \wqo, ordered with the Hoare
ordering is not a \wqo in general. However, the class of \wqos for
which $(\Pcal(X), {\hoare})$ is a \wqo is well-known: these \wqos are
called $\omega^2$-\wqo (e.g.,
see~\cite{marcone2001,jancar99c}). The second
author~\cite{halfon-thesis} proposes a generalization of our notion of
ideal effective \wqos which he calls ideal effective $\omega^2$-\wqos
(also $\Idl^2$-effective \wqos). He then shows that the constructors
presented in this chapter also preserve this stronger notion of
$\Idl^2$-effectiveness, and also prove that, e.g., the powerset
of an $\Idl^2$-effective \wqo, ordered with the Hoare quasi-ordering, is
an ideally effective \wqo.  The notion of $\omega^2$-\wqo can be
generalized to the notion of $\alpha$-\wqo for any indecomposable
ordinal $\alpha$, eventually leading to the notion of \emph{better
  quasi-ordering} ($\alpha$-\wqo for every countable
$\alpha$). In~\cite{halfon-thesis}, the author raises the question on
how to generalize ideal effectiveness to these classes of
quasi-orderings.

Finally, one might challenge our own decision of representing upwards-
and downwards-closed sets as their filter/ideal decompositions. Its
main advantage is genericity: as proved in \cref{sec-basics}, this
decomposition is possible in any \wqo.  It is also very convenient.
In the simple cases of $(\nat^k,{\leq_{\times}})$ and
$(A^*,{\leq_*})$, the representations and algorithms we illustrated in
\cref{sec-motivations} have been used for years by researchers who
were not aware that they were manipulating ideals.  This suggests that
the idea is somehow natural.

This does not rule out the existence of better ad-hoc solutions when
considering a specific \wqo, notably in terms of efficiency.  As will
be seen in \cref{sec-complexity}, the procedures we have presented in
\cref{sec-sequences} have an exponential-time worst-case
complexity. This exponential blow-up essentially occurs when one has
to distribute the unions over the products in order to retrieve an
actual filter/ideal decomposition. We are not sure this can be
averted, but when one only needs to represent certain particular closed
subsets of $(X^*,{\le_*})$, better representations do exist: see for
instance~\cite{geffroy2017}.

 \subsection{On computational complexity}
\label{sec-complexity}

In~\cite{halfon-thesis}, the second author provides a complexity
analysis of the algorithms we have described in this chapter. Let us
briefly summarize the complexity of the \wqo constructors we have
considered.

Formally, let us define a \emph{polynomial-time} ideally effective
\wqo to be an ideally effective \wqo for which there exist
\emph{polynomial-time} procedures for \OD, \ID, \CF, \IF, \CI, \II,
\PI. A presentation of an ideally effective \wqo is said to be
\emph{polynomial-time} if all the procedures it is composed of run in
polynomial time. For instance, $\nat$ is a polynomial-time ideally
effective \wqo, and the presentation we gave for it is
polynomial-time. However, a \wqo as simple as $(A^*, {\le_*})$, where
$A = \{ a,b \}$, is not polynomial-time, at least for our choice of
data structure for $A^*$ and $\Idl(A^*)$. Indeed, observe that the
upwards-closed set $U_n = \upc a^n \cap \upc b^n$ has at least
exponentially many (in $n$) minimal elements: any word with $n$ $a$'s
and $n$ $b$'s is a minimal element of $U_n$. Therefore, the filter
decomposition of $U_n$ is of exponential size in $n$, and thus
requires exponential-time to compute.

However, for instance, the Cartesian product
$(X \times Y, {\le_{\times}})$ of polynomial-time ideally effective
\wqos is polynomial-time.  (That would fail if $X$ or $Y$ were not
polynomial-time: for instance, if $(X,{\le_X}) = (A^*, {\le_*})$, then
the upwards-closed set $\upc (a^n, y_1) \cap \upc (b^n, y_2)$ has at
least exponentially many minimal elements, independently of the filter
decomposition of $\upc y_1 \cap \upc y_2$.)  Furthermore, from
polynomial-time presentations of $(X, {\le_X})$ and $(Y, {\le_Y})$,
the presentation of $(X \times Y, {\le_{\times}})$ we compute in
\cref{sec-products} is polynomial-time as well. This motivates the
following definition: an ideally effective constructor $C$ is
\emph{polynomial-time} if it is possible to compute a polynomial-time
presentation for $C[(X_1, {\le_1}), \dots, (X_n,{\le_n})]$ given
polynomial-time presentations of
$(X_1, {\le_1}), \dots, (X_n,{\le_n})$.  Note that we require that the
procedures of the presentation for
$C[(X_1, {\le_1}), \dots, (X_n,{\le_n})]$ are polynomial-time, but we
do not make any assumption on the complexity of the procedure that
builds the new presentation from presentations for each
$(X_i, {\le_i})$.

With this definition in mind, here is a summary of the complexity results
from~\cite{halfon-thesis}:
\begin{itemize}
\item Both disjoint sum and lexicographic sum are polynomial-time
  ideally effective constructors---this is a trivial analysis of the
  presentation of \cref{sec-sums}.
\item Cartesian product is a polynomial-time ideally effective
  constructor; that again follows easily from an analysis of \cref{sec-products}.
\item Higman's sequence extension \qo is \emph{not} a polynomial-time
  ideally effective constructor. As we have seen above, already in the
  simple case of finite sequences over a finite alphabet, some
  operations require exponential time. It is not difficult to
  see that the presentation we gave in \cref{sec-sequences} consists
  of exponential time procedures.
\item The finite powerset constructor (under the Hoare quasi-ordering)
  is a polynomial-time ideally effective constructor.  This again
  follows from an easy analysis of \cref{sec-powerset}. This justifies
  implementing $\Pf(X)$ directly, and not as a quotient of $X^*$.
\item The finite multiset constructor, under multiset embedding, is an
  exponential-time ideally effective constructor, and already
  $(\mult{{\nat^{2}}}, {\lemb})$ is not a polynomial-time ideally
  effective \wqo. However, $(\mult{A}, {\lemb})$ and
  $(\mult{\nat}, {\lemb})$ are polynomial-time effective \wqos when
  $A$ is a finite alphabet under equality.
\end{itemize}

\section{Concluding remarks}
\label{sec-concl}

We have proposed a set of effectiveness assumptions that allow one to
compute with upwards-closed and downwards-closed subsets of \wqos,
represented as their canonical filter and ideal decompositions
respectively.  These effectiveness assumptions are fulfilled in the
main \wqos that appear in practical computer applications, which are
built using constructors that we have shown to be ideally
effective. Our algorithms unify and generalize some algorithms that
have been used for many years in simple settings, such as $\nat^k$ or
the set of finite words ordered by embedding.

We have not considered any \wqo constructor more complex than sequence
extension, and this is an obvious direction for extending this
work. How does one compute with closed subsets of finite labeled trees
ordered by Kruskal's homeomorphic embedding? Or of some class of
finite graphs well-quasi-ordered by some notion of embedding?  The
case of finite trees has already been partially tackled by the first
author, see~\cite{FGL:partI,goubault2016}.  The technicalities are
daunting, well beyond the ambitions of this chapter,
however.

\section*{Acknowledgments}
\label{sec-ack}

We thank Sylvain Schmitz for his many
suggestions and his constant support during the several years that it
took us to prepare this chapter. The final text also owes to
suggestions made at various times by D.\ Lugiez, A.\ Finkel,
D.\ Kuske and, finally, an anonymous reviewer.

\bibliographystyle{abbrv}
\bibliography{wqo}

\end{document}